\newcommand{\cO}{\mathcal{O}}
\newcommand{\ctO}{\mathcal{\tilde{O}}}
\newcommand{\oneAP}{1\text{-}AP}
\newcommand{\defproblem}[3]{
\vspace{2mm}
\noindent\fbox{
   \begin{minipage}{0.96\textwidth}
   \textsc{#1}\\
   {\bf{Input:}} #2  \\
   {\bf{Output:}} #3
   \end{minipage}
   }
   \vspace{2mm}
}
\def\dd{\mathinner{.\,.}}
  \theoremstyle{plain}
  \newtheorem{theorem}{Theorem}
  \newtheorem{lemma}{Lemma}  
  \newtheorem{corollary}[theorem]{Corollary} 
  \newtheorem{proposition}[theorem]{Proposition}
  \theoremstyle{definition}
  \newtheorem{definition}{Definition}
  \newtheorem{remark}[definition]{Remark}
  {\bfseries}{\itshape}
\newif\ifcomment\commentfalse
\def\commentON{\commenttrue}
\long\outer\def\bcl#1\ecl{{\ifcomment \sloppy  \textcolor{red}{{{#1}}}\fi }}
\long\outer\def\bca#1\eca{{\ifcomment \sloppy  \textcolor{purple}{{{#1}}}\fi }}
\long\outer\def\BCL#1\ECL{{\ifcomment \sloppy \par \textcolor{blue}{\#  \dotfill
{\textsc{#1}} \dotfill \#} \par \fi }}
\long\outer\def\BCA#1\ECA{{\ifcomment \sloppy \par \textcolor{green}{\#  \dotfill
{\textsc{#1}} \dotfill \#} \par \fi }}
\title{Elastic-Degenerate String Matching with 1 Error\thanks{The work in this paper is supported in part by: the Netherlands Organisation for Scientific Research (NWO) through project OCENW.GROOT.2019.015 ``Optimization for and with Machine Learning (OPTIMAL)'' and Gravitation-grant NETWORKS-024.002.003; the PANGAIA and ALPACA projects that have received funding from the European Union’s Horizon 2020 research and innovation programme under the Marie Skłodowska-Curie grant agreements No 872539 and 956229, respectively; and the MUR - FSE REACT EU - PON R\&I 2014-2020.}}
\author[1]{Giulia Bernardini}
\author[2]{Esteban Gabory}
\author[2,3,4]{Solon P.\ Pissis}
\author[2,3,4]{Leen Stougie}
\author[2]{Michelle Sweering}
\author[2]{Wiktor Zuba}
\affil[1]{University of Trieste, Trieste, Italy}
\affil[2]{CWI, Amsterdam, The Netherlands}
\affil[3]{Vrije Universiteit, Amsterdam, The Netherlands}
\affil[4]{INRIA-Erable, France}
\date{\today}
\begin{document}

\maketitle

\begin{abstract}
An elastic-degenerate (ED) string is a sequence of $n$ finite sets of strings of total length $N$, introduced to represent a set of related DNA sequences, also known as a \emph{pangenome}. The ED string matching (EDSM) problem consists in reporting all occurrences of a pattern of length $m$ in an ED text. The EDSM problem has recently received some attention by the combinatorial pattern matching community, culminating in an $\ctO(nm^{\omega-1})+\cO(N)$-time algorithm [Bernardini et al., SIAM J. Comput. 2022], where $\omega$ denotes the matrix multiplication exponent and the $\ctO(\cdot)$ notation suppresses polylog factors.
In the $k$-EDSM problem, the approximate version of EDSM, we are asked to report all pattern occurrences with at most $k$ errors. $k$-EDSM can be solved in $\cO(k^2mG+kN)$ time, under edit distance, or $\cO(kmG+kN)$ time, under Hamming distance, where $G$ denotes the total number of strings in the ED text [Bernardini et al., Theor. Comput. Sci. 2020]. Unfortunately, $G$ is only bounded by $N$, and so even for $k=1$, the existing algorithms run in $\Omega(mN)$ time in the worst case. In this paper we make progress in this direction. We show that $1$-EDSM can be solved in $\cO((nm^2 + N)\log m)$ or $\cO(nm^3 + N)$ time under edit distance. For the decision version of the problem, we present a faster $\cO(nm^2\sqrt{\log m} + N\log\log m)$-time algorithm. We also show that $1$-EDSM can be solved in $\cO(nm^2 + N\log m)$ time under Hamming distance. Our algorithms for edit distance rely on non-trivial reductions from $1$-EDSM to special instances of classic computational geometry problems (2d rectangle stabbing or 2d range emptiness), which we show how to solve efficiently. In order to obtain an even faster algorithm for Hamming distance, we rely on employing and adapting the $k$-errata trees for indexing with errors [Cole et al., STOC 2004].
\end{abstract}
\section{Introduction}
String matching (or pattern matching) is a fundamental task in computer science, for which several linear-time algorithms are known~\cite{DBLP:books/daglib/0020103}. 
It consists in finding all occurrences of a short string, known as the \emph{pattern},
in a longer string, known as the \emph{text}. Many representations have been introduced over the years to account for unknown or uncertain letters in the pattern or in the text, a phenomenon that often occurs in real data. In the context of computational biology, for example, the IUPAC notation~\cite{IUPAC} is used to represent locations of a DNA sequence for which several alternative nucleotides are possible. Such a notation can encode the consensus of a population of DNA sequences~\cite{PanGenomeConsortium18,wabi18,DBLP:journals/fuin/AlzamelABGIPPR20,IndetCPM20} in a gapless multiple sequence alignment (MSA).

Iliopoulos et al.~generalized these representations in~\cite{DBLP:journals/iandc/IliopoulosKP21} to also encode insertions and deletions (gaps) occurring in MSAs by introducing the notion of elastic-degenerate strings. An \emph{elastic-degenerate} (ED) string $\tilde{T}$ over an alphabet $\Sigma$ is a sequence of finite subsets of $\Sigma^*$ (which includes the empty string $\varepsilon$), called \emph{segments}. The total number of segments is the \emph{length} of the ED string, denoted by $n=\lvert\tilde{T}\rvert$; and the total number of letters (including symbol $\varepsilon$) in all segments is the \emph{size} of the ED string, denoted by $N=\lVert\tilde{T}\rVert$. Inspect Figure~\ref{fig:EDtext} for an example.

\begin{figure}[t]
    \centering
    \includegraphics[width=1.0\textwidth]{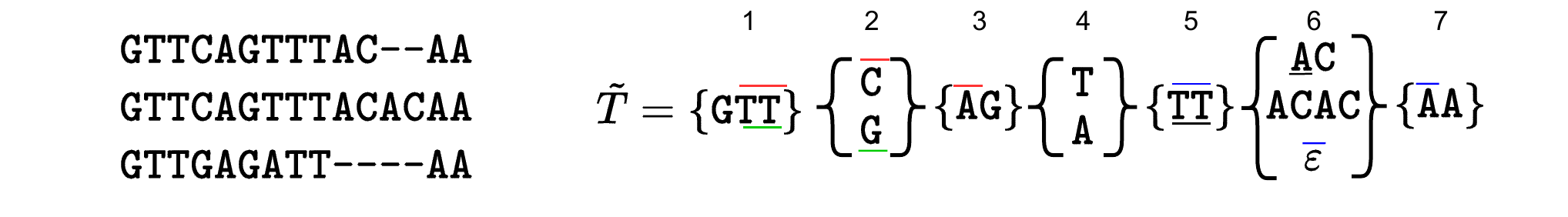}
    \caption{An MSA of three sequences and its (non-unique) representation $\tilde{T}$ as an ED string of length $n=7$ and size $N=20$. The only two \emph{exact} occurrences of $P=\texttt{TTA}$ in $\tilde{T}$ end at positions 6 (black underline) and 7 (blue overline); a \emph{1-mismatch} occurrence of $P$ in $\tilde{T}$ ends at position 2 (green underline); and a \emph{1-error} occurrence of $P$ in $\tilde{T}$ ends at position 3 (red overline). Note that other $1$-error and $1$-mismatch occurrences of $P$ in $\tilde{T}$ exist (e.g., ending at positions 1 and 5).} 
    \label{fig:EDtext}
\end{figure}

A natural problem is to find all occurrences of a standard (non-degenerate) pattern $P$ in an ED text $\tilde{T}$, called the ED string matching (EDSM) problem in the literature. After the simple polynomial-time algorithm proposed by Iliopoulos et al.~\cite{DBLP:journals/iandc/IliopoulosKP21}, a series of results have been published for EDSM. The results for EDSM summarized in Table~\ref{tab:EDSM} have a \emph{linear dependency} on the size $N$ of the ED text, a highly desirable property. (A different line of research exists, which waives the linear-dependency restriction, and employs bit-vector techniques to speed up the computation specifically for short patterns~\cite{DBLP:conf/cpm/GrossiILPPRRVV17,DBLP:conf/wea/PissisR18,sopang}.) 
In Table~\ref{tab:EDSM}, $m$ is the length of the pattern, $n$ is the length of the ED text, $N$ is its size, and $\omega$ is the matrix multiplication exponent. These algorithms are also \emph{on-line}: the ED text is read segment-by-segment and occurrences are reported as soon as the last segment they overlap is processed. Grossi et al.~\cite{DBLP:conf/cpm/GrossiILPPRRVV17} presented an $\cO(nm^2+N)$-time algorithm for EDSM. This was later improved by Aoyama et al.~\cite{DBLP:conf/cpm/AoyamaNIIBT18}, who employed fast Fourier transform to improve the time complexity of EDSM to $\cO(nm^{1.5}\sqrt{\log m}+N)$. Bernardini et al.~\cite{bernardini_et_al:LIPIcs:2019:10597} then presented a lower bound conditioned on Boolean Matrix Multiplication suggesting that it is unlikely to solve EDSM by a combinatorial algorithm in $\cO(nm^{1.5-\epsilon}+N)$ time, for any $\epsilon>0$. This was an indication that fast matrix multiplication may improve the time complexity of EDSM. Indeed, Bernardini et al.~\cite{bernardini_et_al:LIPIcs:2019:10597}
presented an $\cO(nm^{1.381} + N)$-time algorithm, which they subsequently improved to an $\tilde{\cO}(nm^{\omega-1})+ \cO(N)$-time algorithm~\cite{elasticSICOMP}, both using fast matrix multiplication, thus breaking through the conditional lower bound for EDSM.

\begin{table}[t]
\[
\small
\begin{array}{lll}
\toprule
\textbf{EDSM}&  \textbf{Features} & \textbf{Running time}\\
\midrule
 \textnormal{Grossi et al.~\cite{DBLP:conf/cpm/GrossiILPPRRVV17}} & \textnormal{Combinatorial} & \cO(nm^2+N)\\
 \textnormal{Aoyama et al.~\cite{DBLP:conf/cpm/AoyamaNIIBT18}} & \textnormal{Fast Fourier transform} & \cO(nm^{1.5}\sqrt{\log m}+N)\\
 \textnormal{Bernardini et al.~\cite{bernardini_et_al:LIPIcs:2019:10597}} & \textnormal{Fast matrix multiplication} & \cO(nm^{1.381} + N)\\ 
  \textnormal{Bernardini et al.~\cite{elasticSICOMP}} & \textnormal{Fast matrix multiplication} & \tilde{\cO}(nm^{\omega-1})+\cO(N)\\ 
\bottomrule
\end{array}
\]
\caption{The upper-bound landscape of the EDSM problem. The term ``combinatorial'' is arguably not well-defined; lower bounds conditioned on Boolean Matrix Multiplication often indicate that other techniques, including fast matrix multiplication, may be employed to obtain improved bounds for a specific problem. This is the case for EDSM.}\label{tab:EDSM}
\end{table}

\paragraph{Our Results and Techniques}

\begin{table}[t]
\[
\small
\begin{array}{lll}
\toprule
\textbf{Approximate EDSM}& \textbf{Features} & \textbf{Running time} \\
\midrule
 \textnormal{Bernardini et al.~\cite{tcs-ed2020}} & \textnormal{$k$ errors} & \cO(k^2mG+kN)\\
     \textnormal{\bf This work} & \textnormal{1 error} & \cO(nm^3 + N)\\ 
  \textnormal{\bf This work} & \textnormal{1 error} & \cO((nm^2 + N)\log m)\\ 
    \textnormal{\bf This work} & \textnormal{1 error (decision)} & \cO(nm^2\sqrt{\log m} + N\log\log m)\\ 
    \midrule
     \textnormal{Bernardini et al.~\cite{tcs-ed2020}} & \textnormal{$k$ mismatches} & \cO(kmG+kN)\\
     \textnormal{\bf This work} & \textnormal{1 mismatch} & \cO(nm^3 + N)\\
         \textnormal{\bf This work} & \textnormal{1 mismatch} & \cO(nm^2 + N\log m)\\
\bottomrule
\end{array}
\]
\caption{The state of the art results for approximate EDSM and our new results for $k=1$. Note that $n\leq G \leq N$. All algorithms underlying these results are combinatorial and all the reporting algorithms  are on-line.}\label{tab:results}
\end{table}

In string matching, a single extra or missing letter in the pattern or in a potential occurrence results in missing (many or all) occurrences. Hence, many works are focused on approximate string matching for standard strings~\cite{DBLP:journals/tcs/LandauV86,DBLP:journals/jcss/LandauV88,DBLP:journals/siamcomp/ColeH02,DBLP:journals/jal/AmirLP04,DBLP:conf/icalp/GawrychowskiU18,DBLP:conf/focs/Charalampopoulos20}. For approximate $k$-EDSM, Bernardini et al.~\cite{tcs-ed2020} presented an on-line $\cO(k^2mG+kN)$-time algorithm under edit distance and an on-line $\cO(kmG+kN)$-time algorithm under Hamming distance, where $k$ is the maximum allowed number of errors (edits) or mismatches, respectively, and $G$ is the total number of strings in all segments. Unfortunately, $G$ is only bounded by $N$, and so even for $k=1$, the existing algorithms run in $\Omega(mN)$ time in the worst case. 

Let us remark that the special case of $k=1$ is not interesting for approximate string matching on standard strings: the existing algorithms have a polynomial dependency on $k$ and a linear dependency on the length $n$ of the text, and thus for $k=1$ we trivially obtain $\cO(n)$-time algorithms under edit or Hamming distance. However, this is not the case for other string problems,
such as text indexing with errors, where the first step was to design a data structure for $1$ error~\cite{DBLP:journals/jal/AmirKLLLR00}. The next step, extending it to $k$ errors, required the development of new highly non-trivial techniques and incurred some exponential factor with respect to $k$~\cite{cole_dictionary_2004}.
Interestingly, $k$-EDSM seems to be the same case, which highlights the main theoretical motivation of this paper. In Table~\ref{tab:results}, we summarize the state of the art for approximate EDSM and our new results for $k=1$. Note that the reporting algorithms underlying our results are also \emph{on-line}.

Indeed, to arrive at our main results, we design a rich non-trivial combination of algorithmic techniques. Our algorithms for edit distance rely on non-trivial reductions from $1$-EDSM to special instances of classic computational geometry problems (2d rectangle stabbing or 2d range emptiness), which we show how to solve efficiently. In order to obtain an even faster algorithm for Hamming distance, we also rely on employing and adapting the $k$-errata trees of Cole et al.~for text indexing with $k$ errors~\cite{cole_dictionary_2004}.

The combinatorial algorithms we develop here for approximate EDSM are good in the following sense. First, the running times of our algorithms do not depend on $G$, a highly desirable property. 
Specifically, all of our results replace $m\cdot G$ by an $n\cdot \text{poly}(m)$ factor.
Second, our $\ctO(nm^2 + N)$-time algorithms are at most one $\log m$ factor slower than $\cO(nm^2 + N)$, the best-known bound obtained by a combinatorial algorithm (not employing fast Fourier transforms) for \emph{exact} EDSM~\cite{DBLP:conf/cpm/GrossiILPPRRVV17}. Notably, for Hamming distance, we show an $\cO(nm^2 + N\log m)$-time algorithm. Last, our $\cO(nm^3 + N)$-time algorithms have a linear dependency on $N$, another highly desirable property (at the expense of an extra $m$-factor).

\paragraph{Other Related Work}

The main motivation to consider ED strings is that they can be used to represent a \emph{pangenome}: a collection of closely-related genomic sequences that are meant to be analyzed together~\cite{PanGenomeConsortium18}. Several other pangenome representations have been proposed in the literature, mostly graph-based ones; see~\cite{DBLP:conf/gbrpr/CarlettiFG0RV19} for a comprehensive overview by Carletti et al. Compared to these graph-based representations, ED strings have at least two main advantages in the context of string matching, as they support: (i) simple on-line string matching; and (ii) (deterministic) subquadratic in $m$ string matching~\cite{DBLP:conf/cpm/AoyamaNIIBT18,bernardini_et_al:LIPIcs:2019:10597,elasticSICOMP}.

Similar in spirit to ED strings, and to the restricted notion of \emph{generalized degenerate} strings, in which strings of different lengths cannot be in the same segment~\cite{wabi18,DBLP:journals/fuin/AlzamelABGIPPR20}, is the representation of pangenomes via \emph{founder graphs}. The idea behind founder graphs is that a multiple alignment of few \emph{founder sequences} can be used to approximate the input MSA, with the feature that each row of the MSA is a recombination of the founders. Unlike ED strings, that are believed not to be efficiently indexable~\cite{DBLP:conf/spire/Gibney20} (and indeed their value is to enable fast on-line string matching algorithms), some subclasses of founder graphs are, and a recent line of research is devoted to constructing and indexing such structures~\cite{makinen2020linear,DBLP:conf/isaac/EquiNACTM21}. Like founder graphs, ED strings support the recombination of different rows of the MSA between consecutive columns.

\paragraph{Paper Organization}

In Section~\ref{sec:prel}, we provide the necessary definitions and notation, we describe the basic layout of the developed algorithms, and we formally state our main results. In Section~\ref{sec:edit}, we present our solutions under edit distance; and in Section~\ref{sec:ham}, we present our improvement for the special case of Hamming distance. In Section~\ref{sec:con}, we conclude this work with some basic open questions for future work.

\section{Preliminaries}\label{sec:prel}

We start with some basic definitions and notation following~\cite{DBLP:books/daglib/0020103}. Let $X=X[1]\ldots X[n]$ be a \emph{string} of length $|X|=n$ over an ordered alphabet $\Sigma$ whose elements are called \textit{letters}. 
The \emph{empty string} is the string of length $0$; we denote it by $\varepsilon$.
For any two positions $i$ and $j\geq i$ of $X$, $X[i\dd j]$ is the \emph{fragment} of $X$ starting at position $i$ and ending at position $j$. The fragment $X[i\dd j]$ is an \emph{occurrence} of the underlying \emph{substring} $P=X[i]\ldots X[j]$; we say that $P$ occurs at \emph{position} $i$ in $X$. A \emph{prefix} of $X$ is a fragment of the form $X[1\dd j]$ and a \emph{suffix} of $X$ is a fragment  of the form $X[i\dd n]$. By $XY$ or $X\cdot Y$ we denote the \emph{concatenation} of two strings $X$ and $Y$, i.e., $XY=X[1]\ldots X[|X|]Y[1]\ldots Y[|Y|]$. Given a string $X$ we write $X^R=X[|X|]\ldots X[1]$ for the \emph{reverse} of $X$. Given two strings $X$ and $Y$ we write $\textsf{LCP}(X,Y)$ for the length of their \emph{longest common prefix}, namely for the integer $\max(\{i,\ X[1\dd i]=Y[1 \dd i]\})$, or $0$ if $X[1]\neq Y[1]$.

An \emph{elastic-degenerate string} (ED string) $\tilde{T}=\tilde{T}[1]\ldots \tilde{T}[n]$ over an alphabet $\Sigma$ is a sequence of $n=|\tilde{T}|$ finite sets, called \emph{segments}, such that for every position $i$ of $\tilde{T}$  we have that $\tilde{T}[i]\subset\Sigma^*$. 
By $N=||\tilde{T}||$ we denote the total length of all strings in all segments of $\tilde{T}$, which we call the \emph{size} of $\tilde{T}$; more formally, $N=\sum^{n}_{i=1}\sum_{j=1}^{|\tilde{T}[i]|}|\tilde{T}[i][j]|$, where by $\tilde{T}[i][j]$ we denote the $j$th string of $\tilde{T}[i]$. (As an exception, we also add $1$ to account for empty strings: if $\tilde{T}[i][j]=\varepsilon$, then we have that $|\tilde{T}[i][j]|=1$.)
Given two sets of strings $S_1$ and $S_2$, their \emph{concatenation} is $S_1\cdot S_2=\{XY\ |\ X\in S_1, Y\in S_2\}$. For an ED string $\tilde{T}=\tilde{T}[1]\ldots \tilde{T}[n]$, we define the \emph{language} of $\tilde{T}$ as $\mathcal{L}(\tilde{T})=\tilde{T}[1]\cdot\ldots\cdot \tilde{T}[n]$. Given a set $S$ of strings we write $S^R$ for the set $\{X^R\mid X\in S\}$. For an ED string $\tilde{T}=\tilde{T}[1]\ldots \tilde{T}[n]$ we write $\tilde{T}^R$ for the ED string $\tilde{T}[n]^R\ldots \tilde{T}[1]^R$.

Given a string $P$ and an ED string $\tilde{T}$, we say that $P$ \emph{matches} the fragment $\tilde{T}[j\dd j']=\tilde{T}[j]\ldots \tilde{T}[j']$ of $\tilde{T}$, or that an \emph{occurrence} of $P$ \emph{starts} at position $j$ and \emph{ends} at position $j'$ in $\tilde{T}$ if there exist two strings $U,V$, each of them possibly empty, 
such that $P=P_j \cdot \ldots \cdot P_{j'}$, where $P_{i}\in \tilde{T}[i]$, for every $j< i < j'$, $U\cdot P_j\in \tilde{T}[j]$, and $P_{j'}\cdot V\in \tilde{T}[j']$ (or $U\cdot P_j\cdot V\in \tilde{T}[j]$ when $j=j'$).
Strings $U,V$ and $P_{i}$, for every $j\le i \le j'$, specify an \emph{alignment} of $P$ with $\tilde{T}[j\dd j']$. For each occurrence of $P$ in $\tilde{T}$, the alignment is, in general, not unique. 
In Figure~\ref{fig:EDtext}, $P=\texttt{TTA}$ matches $\tilde{T}[5\dd 6]$ with two alignments: both have $U=\varepsilon$, $P_5=\texttt{TT}$, $P_6=\texttt{A}$, and $V$ is either \texttt{C} or \texttt{CAC}.


We want to accept matches with edit distance at most $1$ according to the following standard definition:

\begin{definition}\label{def:ed}
Given two strings $P$ and $Q$ over an alphabet $\Sigma$, we define the \emph{edit distance} $d_E(P,Q)$ between $P$ and $Q$ as the length $\ell$ of a shortest sequence of string operations $\pi_1,\ldots,\pi_{\ell}$ such that $Q=(\Pi_{i=1}^{\ell}\pi_i)(P)$, where each $\pi_i$ (for $1\le i \le \ell$) is one of the following type: 
\begin{itemize}
    \item \emph{Replacement:} There is $j\in[1,|P|]$ and $\sigma\neq P[j]\in\Sigma$ s.\@t.\@ $\pi_i(P)[j]=\sigma$ and $\pi_i(P)[j']=P[j']$ for $j'\neq j$.  
    \item \emph{Deletion:} One has $|\pi_i(P)|=|P|-1$ and there is $j\in[1,|P|]$ s.\@t.\@ $\pi_i(P)[j']=P[j']$ for $1\le j' \le j -1$ and $\pi_i(P)[j']=P[j'+1]$ for $j\le j'\le |P|-1$.
    \item \emph{Insertion:} One has $|\pi_i(P)|=|P|+1$ and there is $j\in[1,|P|+1]$ s.\@t.\@ $\pi_i(P)[j']=P[j']$ for $1\le j' \le j-1$ and $\pi_i(P)[j']=P[j'-1]$ for $j+1 \le  j'\le |P|+1$.
\end{itemize}
\end{definition}

\begin{lemma}[\cite{DBLP:books/daglib/0020103}]
The function $d_E$ is a distance on $\Sigma^*$.
\end{lemma}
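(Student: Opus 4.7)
The plan is to verify the four axioms of a metric on $\Sigma^*$, namely non-negativity, identity of indiscernibles, symmetry, and the triangle inequality. Non-negativity and the ``$P=Q \Rightarrow d_E(P,Q)=0$'' direction of the identity are immediate: the length $\ell$ of an operation sequence is a non-negative integer, and if $P=Q$ the empty sequence ($\ell=0$) realises the transformation. The reverse direction of the identity is just as easy: if $\ell=0$, then no operation is applied and $Q=(\prod_{i=1}^{0}\pi_i)(P)=P$.

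For symmetry, I would show that each of the three operation types admits an inverse operation of the same type (or the obviously paired type) acting on a string of the appropriate length. Concretely, a replacement at position $j$ changing $P[j]=a$ to $\sigma$ is inverted by the replacement at position $j$ changing $\sigma$ back to $a$; a deletion at position $j$ is inverted by the insertion at position $j$ that reinserts the deleted letter; and an insertion is inverted analogously by a deletion. Given a shortest sequence $\pi_1,\ldots,\pi_\ell$ realising $d_E(P,Q)$, the sequence $\pi_\ell^{-1},\ldots,\pi_1^{-1}$ (where each $\pi_i^{-1}$ is the inverse operation applied to the intermediate string $(\prod_{j\le i}\pi_j)(P)$) transforms $Q$ into $P$ in exactly $\ell$ steps, so $d_E(Q,P)\le d_E(P,Q)$; the symmetric argument gives equality.

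For the triangle inequality, I would use concatenation of operation sequences. If $\pi_1,\ldots,\pi_k$ is a shortest sequence transforming $P$ into $Q$, and $\rho_1,\ldots,\rho_m$ is a shortest sequence transforming $Q$ into $R$, then the concatenated sequence $\pi_1,\ldots,\pi_k,\rho_1,\ldots,\rho_m$ transforms $P$ into $R$ in $k+m=d_E(P,Q)+d_E(Q,R)$ steps, and $d_E(P,R)$ is defined as the length of a \emph{shortest} such sequence, yielding $d_E(P,R)\le d_E(P,Q)+d_E(Q,R)$.

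No step here poses a genuine obstacle; the only mild care needed is bookkeeping in the symmetry argument, where one must make sure that the inverse operations are defined with respect to the correct intermediate strings (in particular, for a deletion one must remember which letter was deleted in order to reinsert it). Because Definition~\ref{def:ed} specifies each operation by its position and, for replacement and insertion, also by the letter involved, this bookkeeping is routine and the proof can be kept short.
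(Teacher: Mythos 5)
Your proof is correct and is the standard verification of the metric axioms; the paper itself states this lemma with a citation to a textbook and gives no proof, so your argument is exactly the expected one (inverse operations for symmetry, concatenation of operation sequences for the triangle inequality). The only point worth adding explicitly is well-definedness: a transforming sequence always exists (delete all letters of $P$, then insert all letters of $Q$), so $d_E(P,Q)\le |P|+|Q|$ is finite and the minimum in Definition~\ref{def:ed} is taken over a nonempty set.
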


The following lemma follows immediately from Definition~\ref{def:ed}.

\begin{lemma}\label{lem:distance 1}
If $P$, $Q$ are two strings with $d_E(P,Q)=1$, then $P=\pi (Q)$ where $\pi$ is a replacement, a deletion or an insertion.
\end{lemma}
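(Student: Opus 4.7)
The plan is to reduce the statement immediately to the $\ell=1$ case of Definition~\ref{def:ed}, exploiting the symmetry of $d_E$ guaranteed by the preceding lemma. Since $d_E$ is a distance on $\Sigma^*$, we have $d_E(Q,P)=d_E(P,Q)=1$. By Definition~\ref{def:ed} applied to the pair $(Q,P)$ with $\ell=1$, there must exist a single operation $\pi_1$ such that $P=\pi_1(Q)$, and by the exhaustive case distinction in the definition $\pi_1$ is either a replacement, a deletion, or an insertion. Setting $\pi:=\pi_1$ gives exactly the statement of the lemma.

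Alternatively, for readers who prefer a constructive argument that does not lean on the previous lemma, I would spell out the inverse of each of the three allowed operations. Starting from $Q=\pi_1(P)$ with $\pi_1$ of one of the three forms, I would define $\pi$ as follows: if $\pi_1$ replaces the letter at position $j$ of $P$ by $\sigma$, let $\pi$ be the replacement at position $j$ that rewrites $\sigma=Q[j]$ back to $P[j]$; if $\pi_1$ deletes position $j$ of $P$, let $\pi$ be the insertion of $P[j]$ at position $j$ of $Q$; and if $\pi_1$ inserts a letter at position $j$ of $P$, let $\pi$ be the deletion at position $j$ of $Q$. In each case a direct index check against the formulas in Definition~\ref{def:ed} shows $P=\pi(Q)$.

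There is no real obstacle here: the lemma is essentially a restatement of the $\ell=1$ clause of Definition~\ref{def:ed} in the reversed direction, so the only thing to be careful about is invoking (or re-proving) the symmetry of $d_E$, and, if one chooses the constructive route, tracking indices correctly in the insertion/deletion cases.
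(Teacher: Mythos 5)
Your proposal is correct and takes essentially the same route as the paper, which gives no explicit proof at all but simply remarks that the lemma ``follows immediately from Definition~\ref{def:ed}''; your first argument---using the symmetry of $d_E$ guaranteed by the preceding lemma to apply the $\ell=1$ clause of the definition to the pair $(Q,P)$---is precisely the immediate step the authors leave implicit. Your constructive alternative (inverting each of the three operations with the index checks) is also valid, just more detail than the paper deemed necessary.
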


We define the main problem considered in this paper as follows:

\defproblem{$1$-Error EDSM}{A string $P$ of length $m$ and an ED string $\tilde{T}$ of length $n$ and size $N$.}{All positions $j'$ in $\tilde{T}$ such that there is at least one string $P'$ with an occurrence ending at position $j'$ in $\tilde{T}$, and with $d_E(P,P')\le 1$ (reporting version); or YES if and only if there is at least one string $P'$ with an occurrence in $\tilde{T}$, and with $d_E(P,P')\le 1$ (decision version).}

Let $P'$ be a string starting at position $j$ and ending at position $j'$ in $\tilde{T}$ with $d_E(P,P')=1$. We call this \emph{an occurrence of $P$ with $1$ error} (or a \emph{1-error occurrence}); or equivalently, we say that $P$ \emph{matches $\tilde{T}[j\dd j']$ with $1$  error}. Let $UP'_j,\ldots,P'_{j'}V$ be an alignment of $P'$ with $\tilde{T}[j\dd j']$ and $i\in[j,j']$ be an integer such that the single replacement, insertion, or deletion required to obtain $P$ from $P'=P'_j\cdot \ldots \cdot P'_{j'}$ occurs on $P'_{i}$. We then say that the alignment (and the occurrence) \emph{has the $1$ error in $\tilde{T}[i]$}. (It should be clear that for one alignment we may have multiple different $i$.)
We show the following theorem.


\begin{theorem}\label{the:main}
Given a pattern $P$ of length $m$ and an ED text $\tilde{T}$ of length $n$ and size $N$, the reporting version of
\textsc{$1$-Error EDSM} can be solved on-line in $\cO(nm^2\log m + N\log m)$ or $\cO(nm^3 + N)$ time. 
The decision version of \textsc{$1$-Error EDSM} can be solved off-line in $\cO(nm^2\sqrt{\log m} + N\log\log m)$ time.
\end{theorem}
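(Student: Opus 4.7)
Following the on-line scheme for EDSM, I would process $\tilde{T}$ segment by segment and maintain, after $\tilde{T}[i]$, two subsets of $\{0,\ldots,m-1\}$: the exact \emph{active prefixes} $AP_i$ (lengths $\ell$ such that $P[1\dd\ell]$ is a suffix of some string in $\mathcal{L}(\tilde{T}[1])\cdots\mathcal{L}(\tilde{T}[i])$) and the \emph{$\oneAP$ active prefixes} $\oneAP_i$ (same, but the unique edit has already been spent). Position $i$ is reported iff $m$ enters either set. For each string $S\in\tilde{T}[i]$ we then need to: (i) for every $j\in AP_{i-1}\cup\{0\}$, decide whether $S$ extends the alignment exactly or with a single edit located inside $S$; (ii) for every $j\in \oneAP_{i-1}$, decide only exact extension; (iii) contribute the new entries of $AP_i,\oneAP_i$ arising from prefixes of $S$ that match suffixes of $P[1\dd m-1]$. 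A symmetric pass on $P^R,\tilde{T}^R$ handles the corresponding \emph{active suffixes}.

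The exact subtasks collapse to $\cO(1)$ per query via a generalized suffix tree of $P\#P^R$ together with $\cO(|S|)$-time matching statistics of each $S$ against $P$ and $P^R$. For the one-error subtasks I would apply the kangaroo principle: a single edit at position $r$ of $S$ splits the alignment into an exact left-match of length $p$ (an LCP value) and an exact right-match of length $q$ (an LCS value), with $p+q = m-j-1,\ m-j,\ m-j-2$ for substitution, insertion, deletion, respectively. Using the stored LCP/LCS values, each position $r$ of each $S$ contributes $\cO(1)$ axis-aligned rectangles in an $\cO(m)\times\cO(m)$ integer grid such that a point $(p,q)$ lies inside a rectangle iff the underlying $(j,r)$-alignment is valid. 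Aggregated over a segment this produces $\cO(L_i)$ rectangles (with $L_i=\sum_{S\in\tilde{T}[i]}|S|$) and at most $\cO(m)$ query points, one per active prefix.

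The three stated bounds now follow from three choices of 2D geometric subroutine. For $\cO(nm^3+N)$ I would fill an $\cO(m)\times\cO(m)$ lookup grid in $\cO(m^3)$ per segment by iterating over the rectangles relevant to the $\cO(m)$ queries, while matching-statistics preprocessing costs $\cO(L_i)$, summing to $\cO(N)$ across all segments. For $\cO((nm^2+N)\log m)$ I would plug in a 2D rectangle-stabbing structure with $\cO(\log m)$ insertions and queries. For the off-line decision version we only need 2D range-emptiness on an $\cO(m)\times\cO(m)$ grid, which in the word RAM admits $\cO(\sqrt{\log m})$ queries after near-linear-time preprocessing; combining this with $\cO(\log\log m)$-per-letter predecessor operations in the matching-statistics stage yields $\cO(nm^2\sqrt{\log m} + N\log\log m)$.

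The main obstacle is the reduction itself. One must encode, uniformly across substitutions, insertions and deletions, and across alignments that may begin or end strictly inside $S$ (including the boundary cases of empty strings in the segment and of a full one-error occurrence contained in a single $S$), all one-error extensions of all active prefixes by the concatenation of every string of the segment as a collection of $\cO(L_i)$ rectangles in a grid of side $\cO(m)$, \emph{without} ever enumerating $(j,S)$-pairs explicitly; such enumeration would reintroduce the $\Omega(mN)$ term that we are trying to eliminate. Once the reduction is in place, the three time bounds are immediate from the corresponding geometric data structures.
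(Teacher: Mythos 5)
Your plan reproduces the paper's outer layout (maintain $AP_i$ and $\oneAP_i$ on-line, split edits by the segment containing the error, reduce the error-in-$\tilde{T}[i]$ checks to 2d stabbing/emptiness on a small grid), but the step you explicitly defer --- ``the main obstacle is the reduction itself'' --- is precisely the paper's central contribution, and the LCP/LCS-coordinate grid you sketch cannot deliver it. You want one query point $(p,q)$ per active prefix $j$, yet the coordinates you define (exact left-match length $p$, exact right-match length $q$) are functions of the \emph{pair} $(j,S)$: $p=\textsf{LCP}(P[j+1\dd m],S)$ encodes a content comparison between a pattern fragment and each individual string $S$, so no single point per $j$ can decide, by membership in a rectangle built from $(S,r)$, whether that comparison succeeds --- and computing these values per pair is exactly the $\Omega(mN)$ kangaroo-style enumeration you rule out (it is the $\cO(kmG+kN)$ algorithm the paper is trying to beat). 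The paper's solution is to encode content combinatorially: group by the residual length $\mu$, build per group the compacted tries $T_\mu$ and $T^R_\mu$ of the pattern fragments $P[\lambda+1\dd m-\rho]$ and their reverses, take as points the pairs of lexicographic leaf ranks of the two leaves representing \emph{the same fragment}, and, for each $S\in G_\mu$ and each split $(h,k)$ with $h+k=\mu-1$, spell $S[1\dd h]$ in $T_\mu$ and $S^R[1\dd k]$ in $T^R_\mu$ to obtain a rectangle of leaf intervals; a stabbed point then certifies the shared prefix and suffix simultaneously (Section~\ref{sec: anchor case}, Figure~\ref{fig:rectangles}). Without this trie-rank encoding (or an equivalent), none of your three bounds follows; this is the heart of the proof, not a finishing detail.

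There are also accounting errors that would persist even granting the reduction. In the decision bound, the paper's $\sqrt{\log m}$ is the \emph{per-point preprocessing} cost of the range-emptiness structure of Lemma~\ref{lem:2demptiness} ($\cO(m^2\sqrt{\log m})$ per segment over all $\mu$), and $\log\log m$ is its \emph{per-query} cost, paid once per letter of the segment; your allocation --- $\cO(\sqrt{\log m})$ queries after near-linear preprocessing, with the $\log\log m$ coming from predecessor searches in a matching-statistics stage --- matches no structure used or cited in the paper. For reporting, stabbing queries cost $\cO(\log N_i)$, not $\cO(\log m)$; the paper converts this to $\log m$ by switching to the $\cO(m^3+N_i)$ prefix-sum-grid algorithm of Lemma~\ref{lem:m^3} whenever $N_i\geq m^3$, a case split your sketch omits. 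Your reporting rule ``report $i$ iff $m$ enters $AP_i$ or $\oneAP_i$'' misses occurrences ending strictly inside a string of $\tilde{T}[i]$ (nonempty trailing part $V$), which the paper catches with the exact and $1$-error Prefix Case completion checks (Lemma~\ref{lem:prefix} and~\cite{DBLP:conf/cpm/GrossiILPPRRVV17}); and a symmetric pass over $P^R,\tilde{T}^R$ to obtain active suffixes would destroy the on-line property claimed in Theorem~\ref{the:main} --- the paper's reporting algorithm deliberately avoids precomputing $AS_{i+1}$ by iterating $\rho$ over all of $[0,m]$, reserving the $AS$-based argument for the off-line decision version only.
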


Hamming distance, denoted by $d_H$, is a special case of edit distance in which only replacement operations are allowed (it is therefore defined for two strings of equal length). We define the following problem:

\defproblem{1-Mismatch EDSM}{A string  $P$ of length $m$ and an ED string $\tilde{T}$ of length $n$ and size $N$.}{All positions $j'$ in $\tilde{T}$ such that there is at least one string $P'$ with an occurrence ending at position $j'$ in $\tilde{T}$,  and with $d_H(P,P')\le 1$.}

An occurrence of a string $P'$ as in the problem definition is called an \emph{occurrence of $P$ with $1$ mismatch}. We call \emph{mismatch} the single position in the support of the replacement $\pi$ such that $\pi(P)=P'$. We show the following theorem.

\begin{theorem}\label{the:main2}
Given a pattern $P$ of length $m$ and an ED text $\tilde{T}$ of length $n$ and size $N$, \textsc{$1$-Mismatch EDSM} can be solved on-line in $\cO(nm^2 + N\log m)$ or $\cO(nm^3+N)$ time.
\end{theorem}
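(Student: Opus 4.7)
The approach mirrors that of our edit-distance algorithm (Section~\ref{sec:edit}), but exploits the crucial simplification that Hamming distance preserves length. Following the standard combinatorial EDSM framework of Grossi et al.~\cite{DBLP:conf/cpm/GrossiILPPRRVV17}, I would process $\tilde{T}$ on-line, segment by segment, maintaining two sets of \emph{active prefixes} $A_i^0, A_i^1 \subseteq \{0, 1, \ldots, m\}$: we put $\ell \in A_i^b$ iff some string in $\mathcal{L}(\tilde{T}[1\dd i])$ has $P[1\dd \ell]$ as a suffix with Hamming distance exactly $b$. Because the length of a $1$-mismatch occurrence equals $m$, we do not need the extra offset dimension required under edit distance; this is what saves the extra $\log m$ factor relative to Theorem~\ref{the:main}.

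At segment $\tilde{T}[i]$, the update requires, for each $S \in \tilde{T}[i]$, the following comparisons between fragments of $P$ and fragments of $S$ up to one mismatch: (i) extending an active prefix $\ell$ by aligning $S$ against $P[\ell{+}1 \dd \ell{+}|S|]$, possibly incurring one additional mismatch; (ii) starting a new active prefix inside the current segment, by matching a suffix of $S$ against a prefix of $P$; (iii) completing a $1$-mismatch occurrence, by matching a prefix of $S$ against the remaining suffix $P[\ell{+}1 \dd m]$; and (iv) locating internal $1$-mismatch occurrences of $P$ fully inside a single string of the segment. Each primitive reduces to a fragment-vs.-fragment comparison on $P$ and $S$ allowing up to one mismatch.

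For the $\cO(nm^3+N)$ bound, the plan is to preprocess $P$ in $\cO(m)$ time to support constant-time LCE queries (via a suffix tree with LCA structure), and then evaluate each primitive for each active prefix $\ell \in \{0, \ldots, m\}$ and each string $S$ in the current segment using a constant number of LCE queries per (prefix, string) pair. With a standard charging scheme that treats strings longer than $m$ by a direct linear scan (they contain only a bounded number of alignments relevant to $P$), this yields $\cO(m^3)$ transition work per segment plus $\cO(N)$ for character scanning.

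For the sharper $\cO(nm^2+N\log m)$ bound, the $N\log m$ summand suggests indexing $P$ by a structure that supports approximate queries in time essentially linear in the query length, with an additional $\log m$ factor. I would build the \emph{$1$-errata tree} of Cole et al.~\cite{cole_dictionary_2004} on $P$ and on $P^R$, which enables $1$-mismatch substring queries as well as $1$-mismatch prefix/suffix queries (after augmentation with the suffix links and centroid decomposition standard in that construction). Each character of $\tilde{T}$ is fed into the tree during a scan of its segment, contributing $\cO(N\log m)$ in total; once the set of $(\ell, |S|)$ pairs producing $\leq 1$-mismatch alignments is known, the bookkeeping that propagates $A_{i-1}^b$ to $A_i^b$ and produces reports fits in $\cO(m^2)$ per segment via a simple table update. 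The hardest part will be to cast primitives~(ii) and~(iii) — which ask for suffix/prefix relations, not arbitrary substring matches — into queries the $1$-errata tree supports natively, and to argue that the number of $1$-mismatch alignments reported per segment is $\cO(m^2)$ so that the reporting cost does not dominate the transition step. This last point is where the rigidity of Hamming distance (fixed length, no shifts) is decisive in contrast to the edit-distance case, and is what ultimately eliminates the extra logarithmic factor.
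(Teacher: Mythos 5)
Your high-level frame (on-line segment processing with exact and $1$-mismatch active-prefix sets, four alignment primitives) matches the paper, and for the sharper bound you correctly converge on Cole et al.'s errata trees. However, both of your complexity claims have genuine gaps. For the $\cO(nm^3+N)$ bound, your plan to spend ``a constant number of LCE queries per (prefix, string) pair'' costs $\Theta(|AP_{i-1}|\cdot|\tilde{T}[i]|)$ per segment: a segment consisting of $N_i$ single-letter strings already forces $\Theta(m N_i)$ work, summing to $\Theta(mG)=\Omega(mN)$ over the text --- precisely the Bernardini et al.\ bound this theorem is meant to beat. Your asserted ``$\cO(m^3)$ transition work per segment'' has no justification, since the number of strings in a segment is bounded only by $N_i$, not by any function of $m$. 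The paper avoids per-string-per-prefix work by deduplicating through compacted tries: segment strings are grouped by length and spelled through tries of pattern fragments, yielding $\cO(N_i)$ rectangles on a fixed $[1,m]^2$ grid, and the resulting 2d rectangle stabbing instances are answered in $\cO(1)$ each after $\cO(m^2)$-time prefix-sum preprocessing per grid, giving $\cO(m^3+N_i)$ per segment (Lemma~\ref{lem:m^3}).

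For the $\cO(nm^2+N\log m)$ bound, the step you flag as ``the hardest part'' is indeed where your argument fails: the number of $1$-mismatch (active prefix, string) alignments per segment is \emph{not} $\cO(m^2)$ --- each of up to $\Theta(N_i)$ strings can extend up to $\Theta(m)$ active prefixes, so the pair count is $\Theta(mN_i)$ in the worst case; only the set of distinct outcomes $(\ell,\ell+|S|)$ has size $\cO(m^2)$, and to avoid paying per pair one needs an aggregation mechanism your sketch does not supply. The paper's solution is structurally different from ``query the errata tree of $P$ with text characters'': it builds, \emph{per segment}, a single $1$-errata-type tree $\mathcal{T}_1(P,\tilde{T}[i])$ over both the segment strings and the suffixes $P[j+1\dd m]$ for $j\in AP_{i-1}$, attaches mismatch-position labels $(\iota(X),p)$, characterizes $1$-mismatch prefix relations via ancestor--descendant label pairs with equal (or $\#$) mismatch positions (Lemma~\ref{Lem:string nodes condition}), and aggregates everything in one DFS using word-packed bit vectors $V_{\#},V_1,\ldots,V_m,V_{ANY}$ indexed by matched length: each of the $\cO(m\log(m+N_i))$ pattern labels triggers one shifted \texttt{OR} costing $\cO(m/\log(m+N_i))$, for $\cO(m^2)$ aggregation plus $\cO((m+N_i)\log(m+N_i))$ tree work (Proposition~\ref{prop:k-errata}). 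Note that recording \emph{where} the single substitution occurred on each side is essential to this criterion, and an errata tree on $P$ alone fed with text characters does not provide it. Finally, assembling the claimed bound also needs two ingredients you omit: switching to the $\cO(m^3+N_i)=\cO(N_i)$ grid algorithm whenever $N_i\ge m^3$ so that $\log(m+N_i)=\cO(\log m)$, and handling the Prefix/Suffix cases log-free in $\cO(m^2+N_i)$ via the nested-rectangle stabbing of Lemmas~\ref{lem:emptiness-offline} and~\ref{lem:prefix}, rather than through the errata tree.
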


\begin{definition}
For a string $P=P[1\dd m]$, an ED string $\tilde{T}=\tilde{T}[1]\ldots \tilde{T}[n]$, a position $1\le i\le n$, and a distance on $\Sigma^*$, we define three sets:
\begin{itemize}
    \item $AP_i\subseteq [1,m]$, such that $j\in AP_i$ if and only if $P[1\dd j]$ is an \emph{active prefix} of $P$ in $\tilde{T}$ ending in the segment $\tilde{T}[i]$, that is, a prefix of $P$ which is also a suffix of a string in $\mathcal{L}(\tilde{T}[1]\ldots \tilde{T}[i])$.
    \item $AS_i\subseteq [1,m]$, such that $j\in AS_i$ if and only if $P[j\dd m]$ is an \emph{active suffix} of $P$ in $\tilde{T}$ starting in the segment $\tilde{T}[i]$, that is, a suffix of $P$ which is also a prefix of a string in $\mathcal{L}(\tilde{T}[i]\ldots \tilde{T}[n])$.
    \item $\oneAP_i\subseteq [1,m]$, such that $j\in \oneAP_i$ if and only if $P[1\dd j]$ is an \emph{active prefix with $1$ error} of $P$ in $\tilde{T}$ ending in the segment $\tilde{T}[i]$, that is, a prefix of $P$ which is also at distance at most $1$ from a suffix of a string in $\mathcal{L}(\tilde{T}[1]\ldots \tilde{T}[i])$.
\end{itemize}
For convenience we also define $AP_0=AS_{n+1}=\oneAP_0=\emptyset$.
\end{definition}

The following lemma shows that the computation of active suffixes can be easily reduced to computing the active prefixes for the reversed strings.
\begin{lemma}\label{lem:reverse for suffix}
Given a pattern $P=P[1\dd m]$ and an ED text $\tilde{T}=\tilde{T}[1\dd n]$, a suffix $P[j\dd m]$ of $P$ is an active suffix in $\tilde{T}$ starting in the segment $\tilde{T}[i]$ if and only if the prefix $P^R[1\dd m-j+1]=(P[j\dd m])^R$ of $P^R$ is an active prefix in $\tilde{T}^R$, ending in the segment $\tilde{T}^R[n-i+1]=(\tilde{T}[i])^R$. 
\end{lemma}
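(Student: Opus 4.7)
The plan is to unfold both definitions and transport the conditions through the reversal map $X \mapsto X^R$, using the elementary identity $(XY)^R = Y^R X^R$ applied both to strings and to set concatenations.

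First, I would establish the set-theoretic identity
\[
\mathcal{L}(\tilde{T}[i]\ldots\tilde{T}[n])^R \;=\; \mathcal{L}(\tilde{T}^R[1]\ldots\tilde{T}^R[n-i+1]),
\]
where for a set $S$ of strings, $S^R=\{X^R \mid X\in S\}$. This follows by a short induction on $n-i$: any $X\in\mathcal{L}(\tilde{T}[i]\ldots\tilde{T}[n])$ decomposes as $X=X_i\cdots X_n$ with $X_k\in\tilde{T}[k]$, whence $X^R=X_n^R\cdots X_i^R$, which lies in $\tilde{T}[n]^R\cdot\ldots\cdot\tilde{T}[i]^R=\mathcal{L}(\tilde{T}^R[1]\ldots\tilde{T}^R[n-i+1])$ by definition of $\tilde{T}^R$ and the fact that $\tilde{T}^R[k]=(\tilde{T}[n-k+1])^R$. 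The reverse inclusion is symmetric since reversal is an involution.

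Second, I would observe the trivial string-level fact that for any strings $Q$ and $X$, $Q$ is a prefix of $X$ if and only if $Q^R$ is a suffix of $X^R$; indeed, $X=QY$ is equivalent to $X^R=Y^R Q^R$. Specializing to $Q=P[j\dd m]$ and using $(P[j\dd m])^R = P^R[1\dd m-j+1]$, I get: $P[j\dd m]$ is a prefix of some $X\in \mathcal{L}(\tilde{T}[i]\ldots\tilde{T}[n])$ iff $P^R[1\dd m-j+1]$ is a suffix of some $X^R \in \mathcal{L}(\tilde{T}[i]\ldots\tilde{T}[n])^R$, which by the first step is iff $P^R[1\dd m-j+1]$ is a suffix of some string in $\mathcal{L}(\tilde{T}^R[1]\ldots\tilde{T}^R[n-i+1])$.

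The left-hand side is exactly the definition of $P[j\dd m]$ being an active suffix of $P$ in $\tilde{T}$ starting in $\tilde{T}[i]$; the right-hand side is exactly the definition of $P^R[1\dd m-j+1]$ being an active prefix of $P^R$ in $\tilde{T}^R$ ending in $\tilde{T}^R[n-i+1]$. Chaining the equivalences yields the lemma. There is no real obstacle here: the statement is essentially definitional, and the only mild care needed is to keep the index correspondence $\tilde{T}^R[k]=(\tilde{T}[n-k+1])^R$ consistent throughout, together with the observation that the concatenation in $\mathcal{L}$ interacts with reversal by flipping the order of segments.
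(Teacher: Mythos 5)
Your proposal is correct and follows essentially the same route as the paper's proof: transporting the prefix/suffix condition through reversal via $(XY)^R=Y^RX^R$, together with the identity $\mathcal{L}(\tilde{T}[i\dd n])^R=\mathcal{L}(\tilde{T}^R[1\dd n-i+1])$, with the converse obtained by symmetry since reversal is an involution. You merely spell out (via induction) the language-reversal identity that the paper states directly from the definition of $\tilde{T}^R$.
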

\begin{proof}
If $P[j\dd m]$ is a prefix of $S\in \mathcal{L}(\tilde{T}[i\dd n])$, then $P^R[1\dd m-j+1]$ is a suffix of $S^R\in\mathcal{L}(\tilde{T}[1\ldots n]^R)$.
From the definition of $\tilde{T}^R$ we have $\tilde{T}[i\dd n]^R = (\tilde{T[n]})^R\ldots (\tilde{T[i]})^R = \tilde{T}^R[1 \dd n-i+1]$, hence $S^R\in~\mathcal{L}(\tilde{T}^R[1\dd n-i+1])$.

This proves the forward direction of the lemma; the converse follows from symmetry.  
\end{proof}

The efficient computation of active prefixes was shown in~\cite{DBLP:conf/cpm/GrossiILPPRRVV17}, and constitutes the main part of the combinatorial algorithm for exact EDSM. Similarly, computing the sets $\oneAP$ plays the key role in the reporting version of our algorithm for \textsc{1-Error EDSM} (see Figure~\ref{fig:layout}). Finding active prefixes (and, by Lemma~\ref{lem:reverse for suffix}, suffixes) reduces to the following problem, formalized in~\cite{bernardini_et_al:LIPIcs:2019:10597}.

\defproblem{Active Prefixes Extension (APE)}{A string $P$ of length $m$, a bit vector $U$ of size $m$, and a set $\mathcal{S}$ of strings of total length $N$.}{A bit vector $V$ of size $m$ with $V[j]=1$ if and only if there exists $S\in\mathcal{S}$ and $i\in[1, m]$, such that $P[1\dd i]\cdot S = P[1 \dd j]$ and $U[i]=1$.}

\begin{lemma}[\cite{DBLP:conf/cpm/GrossiILPPRRVV17}]\label{lem:APE}
The APE problem for a string $P$ of length $m$ and a set $\mathcal{S}$ of strings of total length $N$ can be solved 
in $\cO(m^2+N)$ time.
\end{lemma}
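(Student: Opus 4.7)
The plan is to build an Aho--Corasick (AC) automaton for $\mathcal{S}$, scan $P$ through it once, and exploit a simple combinatorial bound on the total number of matches of strings of $\mathcal{S}$ inside $P$.

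First I would construct the AC automaton of $\mathcal{S}$ in $\cO(N)$ time, equipping it with failure links and with the standard \emph{dictionary} (output) links: each state $v$ stores a pointer to its deepest proper ancestor via failure links whose spelled string is a complete member of $\mathcal{S}$. Then, feed $P$ through the automaton in $\cO(m)$ time to obtain the sequence of states $s_0,s_1,\ldots,s_m$ reached after reading each prefix of $P$. The key property of AC is that the strings of $\mathcal{S}$ occurring in $P$ and ending exactly at position $j$ are precisely those lying on the dictionary-link chain starting at $s_j$.

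Next, for every $j\in[1,m]$, traverse the dictionary-link chain from $s_j$; for each enumerated $S \in \mathcal{S}$ of length $\ell$ with $i:=j-\ell \in [1,m]$, set $V[j]:=1$ whenever $U[i]=1$. Correctness is then immediate from the definition of APE: a pair $(S,i)$ witnesses $V[j]=1$ exactly when $S$ is a suffix of $P[1\dd j]$ of length $j-i$ with $U[i]=1$, and every such suffix in $\mathcal{S}$ is produced by exactly one dictionary-link traversal at position $j$.

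The delicate point (and the one place where a naive analysis would give only $\cO(mN)$) is bounding the total cost of these traversals. The observation is that the patterns enumerated at position $j$ are all \emph{distinct} suffixes of $P[1\dd j]$ lying in $\mathcal{S}$, so at most $j$ of them are produced and the traversal at $j$ costs $\cO(j)$. Summing $\sum_{j=0}^{m} j=\cO(m^2)$ gives a total running time of $\cO(N)+\cO(m)+\cO(m^2)=\cO(m^2+N)$, as required. I expect this counting argument to be the only subtle step; the rest is a routine application of AC matching, and an alternative implementation based on the suffix tree of $P$ (descending each $S\in\mathcal{S}$ in it and harvesting leaves with $U$-marks) would admit an essentially identical analysis.
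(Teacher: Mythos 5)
Your proof is correct, and the counting argument you flag as the crux is exactly the right one: the matches of $\mathcal{S}$ ending at position $j$ of $P$ are pairwise distinct suffixes of $P[1\dd j]$, hence at most $j$ of them, giving $\cO(m^2)$ total enumeration cost; your guard $i=j-\ell\in[1,m]$ also correctly matches the APE definition. Note, however, that the paper does not reprove this lemma -- it imports it from Grossi et al.~\cite{DBLP:conf/cpm/GrossiILPPRRVV17} -- and the solution there is the dual of yours: it indexes the \emph{pattern} (suffix tree of $P$, with perfect-hashed child access over the renamed alphabet $[1,m+1]$ that the paper's preprocessing already provides), descends each $S\in\mathcal{S}$ to its locus in $\cO(|S|)$ time, and then harvests, per distinct string, the leaves below the locus; the $\cO(m^2)$ bound there comes from the transposed observation that for each fixed length $\ell$ the occurrences of distinct length-$\ell$ strings start at pairwise distinct positions of $P$, so at most $m$ per length. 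Your Aho--Corasick route buys two small things: duplicates in the (multi)set $\mathcal{S}$ collapse into the same automaton state automatically, whereas the suffix-tree route must explicitly process each distinct locus only once, and your scan of $P$ is naturally one-pass. What it costs you is the automaton construction: to get $\cO(N)$ you need constant-time goto lookups, which over a general integer alphabet means hashing (randomization) or an appeal to the paper's alphabet-renaming step, whereas the suffix tree of $P$ is a single $\cO(m)$-time build that is reused across all segments of the ED text and silently discards any $S$ not occurring in $P$ during the descent. Two cosmetic caveats you should make explicit: strings with $|S|>m$ can be filtered out in $\cO(N)$ up front (your $i\geq 1$ check already makes them harmless), and if $\varepsilon\in\mathcal{S}$ it should be handled separately by OR-ing $U$ into $V$, since the automaton root is then an output state of length $0$.
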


Given an algorithm for the APE problem working in $f(m)+N$ time, we can find \emph{all} active prefixes for a pattern $P$ of length $m$ in an ED text $\tilde{T}=\tilde{T}[1]\ldots \tilde{T}[n]$ of size $N$ in $\cO(nf(m)+N)$ total time:

\begin{corollary}[\cite{DBLP:conf/cpm/GrossiILPPRRVV17}]\label{coro:AP}
For a pattern $P$ of length $m$ and an ED text $\tilde{T}=\tilde{T}[1]\ldots \tilde{T}[n]$ of total size $N$, computing the sets $AP_i$ for all $i\in[1,n]$ takes $\cO(nm^2+N)$ time.
\end{corollary}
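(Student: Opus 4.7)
The plan is to compute the sets $AP_1, AP_2, \ldots, AP_n$ iteratively, performing one invocation of the APE algorithm of Lemma~\ref{lem:APE} per segment. Inductively, assume that $AP_{i-1}$ has been computed and is represented as a bit vector $U$ of size $m$ (with $U[i']=1$ iff $i'\in AP_{i-1}$). To obtain $AP_i$, we feed $P$, the vector $U$, and the set $\mathcal{S}=\tilde{T}[i]$ into the APE algorithm; the resulting vector $V$ encodes exactly those prefixes $P[1\dd j]$ obtainable by extending some active prefix $P[1\dd i']\in AP_{i-1}$ by a \emph{full} string $S\in\tilde{T}[i]$, i.e., $P[1\dd i']\cdot S=P[1\dd j]$. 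This captures every active prefix in $\tilde{T}$ ending in $\tilde{T}[i]$ whose underlying alignment genuinely crosses the boundary between $\tilde{T}[i-1]$ and $\tilde{T}[i]$.

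The second contribution to $AP_i$ comes from occurrences whose alignment \emph{starts} inside $\tilde{T}[i]$: here we need to mark $j$ whenever $P[1\dd j]$ is a suffix of some $S\in\tilde{T}[i]$ (a prefix $U$ of $S$ is cut off). Equivalently, this is APE applied to the auxiliary vector that is $1$ only at the ``empty'' index $0$, with the convention $P[1\dd 0]=\varepsilon$. Such suffix-matching can be implemented within the same $\cO(m^2+\lVert\tilde{T}[i]\rVert)$ budget, e.g., by computing the Z-function of $P\$S$ for each $S\in\tilde{T}[i]$, or directly inside the existing APE routine. Taking the bitwise OR of the two contributions yields $AP_i$ in $\cO(m^2+\lVert\tilde{T}[i]\rVert)$ time.

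Summing this per-segment bound over $i=1,\ldots,n$ gives total time
\[
\sum_{i=1}^{n}\cO\bigl(m^2+\lVert\tilde{T}[i]\rVert\bigr) \;=\; \cO\bigl(nm^2 + N\bigr),
\]
as claimed. Correctness of the iteration follows from the decomposition of an active prefix $P[1\dd j]\in AP_i$ according to whether its matching alignment touches $\tilde{T}[i-1]$ (the APE-extension case) or starts fresh inside $\tilde{T}[i]$ (the suffix-matching case); the former is precisely the APE instance, while the latter is handled by the auxiliary Z-function step.

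There is no real obstacle, since Lemma~\ref{lem:APE} already does the heavy lifting per segment. The only minor technicality is the bookkeeping for occurrences starting inside $\tilde{T}[i]$ (including segments containing $\varepsilon$, which simply propagate $U$ unchanged); absorbing this into the APE call without spoiling the $\cO(m^2+\lVert\tilde{T}[i]\rVert)$ per-segment bound is the only point that requires care.
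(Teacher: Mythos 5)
Your decomposition and iteration are exactly the intended argument (the paper gives no explicit proof of this corollary: it cites Grossi et al.\ and notes just above the statement that any $f(m)+N$-time APE routine yields $\cO(nf(m)+N)$ overall). Indeed, for $j\in AP_i$ witnessed by $S_1\cdots S_i$ with $S_k\in\tilde{T}[k]$: if $j>|S_i|$ then $P[1\dd j]=P[1\dd j-|S_i|]\cdot S_i$ with $j-|S_i|\in AP_{i-1}$, which is precisely the APE instance of Lemma~\ref{lem:APE}; if $j\le |S_i|$ then $P[1\dd j]$ is a suffix of $S_i$ itself, your fresh-start case; and $\varepsilon\in\tilde{T}[i]$ correctly propagates $U$ through the APE call. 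Summing $\cO(m^2+N_i)$ over the segments gives the bound.

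The one step that does not survive scrutiny as written is your concrete fresh-start implementation, ``computing the Z-function of $P\$S$ for each $S\in\tilde{T}[i]$.'' That costs $\Theta(m+|S|)$ per string, hence $\Theta(k_im+N_i)$ for a segment containing $k_i$ strings and $\Theta(Gm+N)$ overall, where $G$ is the total number of strings in the ED text. Since $G$ is bounded only by $N$, this is $\Theta(mN)$ in the worst case (e.g., segments consisting of $N_i\gg m$ single-letter strings), which is precisely the dependence on $G$ that this corollary is meant to eliminate, and it already violates your claimed per-segment budget $\cO(m^2+N_i)$ whenever $k_i\gg m$. The repair is standard and cheap: preprocess $P$ \emph{once} (KMP failure function, $\cO(m)$ time for the entire computation, not per string), then scan each $S$ through the automaton in $\cO(|S|)$ amortized time; the state reached after reading $S$ is the longest prefix of $P$ that is a suffix of $S$, and iterating the failure function from that state enumerates all prefixes of $P$ that are suffixes of $S$ --- there are at most $|S|$ of them, since each is a suffix of $S$ of a distinct length --- so marking them costs $\cO(|S|)$ as well. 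This makes the fresh-start contribution $\cO(N_i)$ per segment and $\cO(m+N)$ overall, restoring the claimed $\cO(nm^2+N)$ total.
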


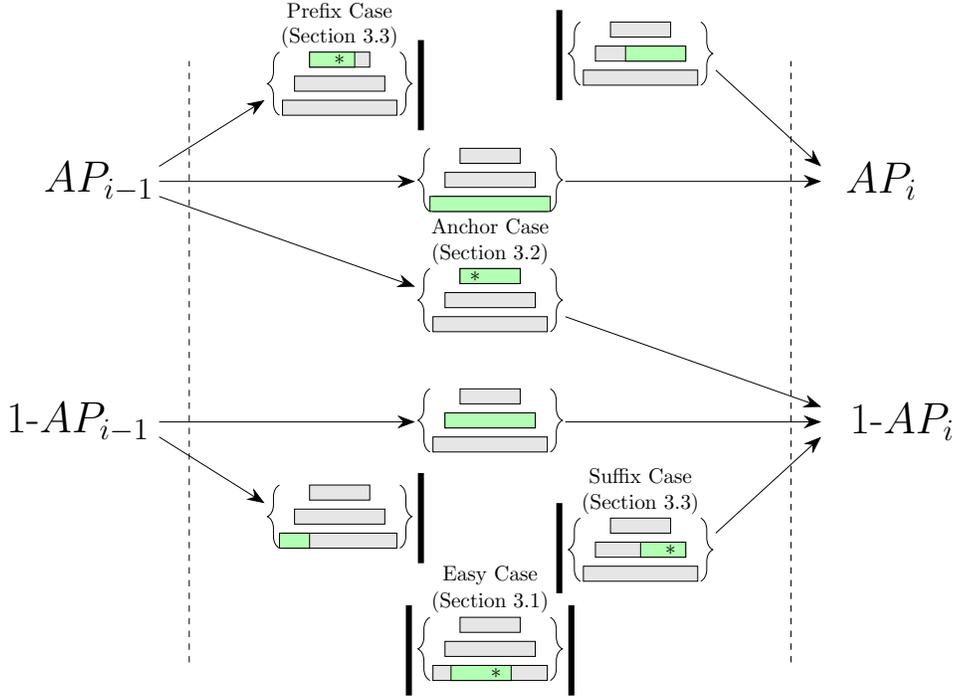
\begin{figure}
    \centering
    \scalebox{.8}{\begin{tikzpicture}[yscale=0.5,xscale=0.5,auto,node distance=0.1cm]

    \foreach \dx/\dy/\l/\lr in {5/12.2//,-5/11.2/Prefix Case/(Section \ref{sec:pref case}),0/8//,0/4/Anchor Case/(Section \ref{sec: anchor case}),0/0//,
    -5/-3.2//,5/-4.3/Suffix Case/(Section \ref{sec:pref case}),0/-7.6/Easy Case/(Section \ref{sec:easy case})}{
    \draw[snake=brace, segment amplitude=0.2cm] (\dx+2,\dy+1.1)--(\dx+2,\dy-1);
    \draw[snake=brace, segment amplitude=0.2cm] (\dx-2,\dy-1)--(\dx-2,\dy+1.1);
    \draw [fill=white!90!black] (\dx-1.5,\dy-0.2) rectangle (\dx+1.5,\dy+0.3);
    \draw [fill=white!90!black] (\dx-1,\dy+0.6) rectangle (\dx+1,\dy+1.1);
    \draw [fill=white!90!black] (\dx-1.9,\dy-1) rectangle (\dx+1.9,\dy-0.5);
    \node[] at (\dx,\dy+2.5) {\l};
    \node[] at (\dx,\dy+1.6) {\lr};
    }
    
    \node[] at (-13,8) {\huge $AP_{i-1}$};
    \node[] at (13,8) {\huge $AP_{i}$};
    
    \node[] at (-13.7,0) {\huge $\oneAP_{i-1}$};
    \node[] at (13.7,0) {\huge $\oneAP_{i}$};
    \foreach \dx in{-10,10}
    \draw[dashed] (\dx, -8)--+(0, 20);
    
    \foreach \dxa\dxb\dya\dyb in {
    -11/-7.5/8.5/10.7,
    7.5/11/11.7/8.5,
    -11/-2.5/8/8, 2.5/11/8/8,
    -11/-2.5/7.5/4.5, 2.5/11/3.5/0.5,
    -11/-2.5/0/0, 2.5/11/0/0,
    -11/-7.5/-0.5/-2.7,
    7.5/11/-3.7/-0.5
    }{
        \draw[-{Stealth[scale=1.7]}] (\dxa,\dya) -- (\dxb,\dyb);
    }
    
    \foreach \dx\dy in{
    2.3/12.2,
    -2.3/11.2,
    -2.3/-3.2,
    2.3/-4.2,
    -2.7/-7.6,2.7/-7.6
    }{
    \draw[black,line width=1mm] (\dx,\dy-1.5)--(\dx,\dy+1.5);
    }
    
    \foreach \dx/\dy/\len in {
    4.5/12/2,
    -2/7/4,
    -1.5/-0.2/3,
    -7/-4.2/1
    }{
    \draw [fill=white!70!green] (\dx,\dy) rectangle (\dx+\len,\dy+0.5);
    }
    
    \foreach \dx/\dy/\len/\starx in {
    -6/11.8/1.5/1,
    -1/4.6/2/0.5,
    5/-4.5/1.5/1,
    -1.3/-8.6/2/1.5
    }{
    \draw [fill=white!70!green] (\dx,\dy) rectangle (\dx+\len,\dy+0.5);
    \node[] at (\dx+\starx,\dy+0.25){$*$};
    }

\end{tikzpicture}}
\caption{The layout of the algorithms for computing $AP_i$, $\oneAP_i$, and reporting occurrences. The green areas correspond to the (partial) matches in $\tilde{T}[i]$, and the symbol $*$ indicates the position of an error. The vertical bold lines indicate the beginning/the end of an occurrence or a $1$-error occurrence. The cases without a label allow only exact matches and were already solved by Grossi et al.~in~\cite{DBLP:conf/cpm/GrossiILPPRRVV17}.}\label{fig:layout}
\end{figure}

As depicted in Figure~\ref{fig:layout}, the computation of active prefixes with $1$  error ($\oneAP_i$) and the reporting of occurrences with $1$  error reduce to a problem where the error can only occur in a single, fixed $\tilde{T}[i]$. In particular, this problem decomposes into 4 cases, which we formalize in the following proposition.

\begin{proposition}\label{prop:occs}
Let $\tilde{T}=\tilde{T}[1]\ldots \tilde{T}[n]$ be an ED text and $P$ be a pattern that has an occurrence with $1$ error (resp. $1$ mismatch) in $\tilde{T}$. For each alignment corresponding to such occurrence, at least one of the following is true:
\begin{description}
     \item[Easy Case:] $P$ matches $\tilde{T}[i]$ with $1$ error (resp. $1$ mismatch) for some $1\le i \le n$.
    \item [Anchor Case:] $P$ matches $\tilde{T}[j\dd j']$ with $1$ error (resp. $1$ mismatch) in $\tilde{T}[i]$ for some $1\le j < i < j'\le n$. $\tilde{T}[i]$ is called the \emph{anchor} of the alignment.
    \item[Prefix Case:] $P$ matches $\tilde{T}[j\dd i]$ with $1$ error (resp. $1$ mismatch) in $\tilde{T}[i]$ for some $1\le j < i \le n$, implying an active prefix of $P$ which is a suffix of a string in $\mathcal{L}(\tilde{T}[j\dd i-1])$.
    \item[Suffix Case:] $P$ matches $\tilde{T}[i\dd j']$ with $1$ error (resp. $1$ mismatch) in $\tilde{T}[i]$ for some $1\le i < j' \le n$, implying an active suffix of $P$ which is a prefix of a string in $\mathcal{L}(\tilde{T}[i+1\dd j'])$.
\end{description}
\end{proposition}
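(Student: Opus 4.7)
The plan is to fix an arbitrary alignment realizing a 1-error (resp. 1-mismatch) occurrence of $P$ in $\tilde{T}[j\dd j']$, let $i\in[j,j']$ denote a segment index on which the single replacement/insertion/deletion acts (so that $P'_{i'}=P_{i'}$ for every $i'\neq i$), and perform a case analysis on the relative order of $j$, $i$, and $j'$. The four subcases $j=j'=i$, $j<i<j'$, $j=i<j'$, and $j<i=j'$ are exhaustive and will be shown to correspond, respectively, to the Easy, Anchor, Suffix, and Prefix Case.

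The Easy and Anchor cases reduce to unpacking the definitions: in the former, the entire occurrence lies in a single segment $\tilde{T}[i]$; in the latter, the inequality $j<i<j'$ is exactly the Anchor Case condition, and $\tilde{T}[i]$ serves as the anchor.

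For the Prefix Case ($j<i=j'$), I would argue as follows. Since the error lies on $P'_i$ only, $P_{i'}=P'_{i'}$ for all $j\le i'<i$; combined with $U\cdot P'_j\in\tilde{T}[j]$ and $P'_{i'}\in\tilde{T}[i']$ for every $j<i'<i$, this yields $U\cdot P_j\cdot P_{j+1}\cdots P_{i-1}\in \mathcal{L}(\tilde{T}[j\dd i-1])$. Hence the prefix $P_j\cdot P_{j+1}\cdots P_{i-1}$ of $P$ is a suffix of a string in $\mathcal{L}(\tilde{T}[j\dd i-1])$, which is precisely the active prefix condition stated in the Prefix Case. The Suffix Case ($j=i<j'$) is entirely symmetric: arguing on $P'_{i+1},\ldots,P'_{j'}$ and invoking $P'_{j'}\cdot V\in\tilde{T}[j']$ shows that the corresponding suffix of $P$ is a prefix of a string in $\mathcal{L}(\tilde{T}[i+1\dd j'])$.

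The only minor obstacle is careful handling of the boundary pieces $U$ and $V$, in particular when $P_j$ or $P_{j'}$ is empty or when $i$ coincides with $j$ or $j'$; these are resolved by recalling that the alignment only requires $U\cdot P'_j\in\tilde{T}[j]$ and $P'_{j'}\cdot V\in\tilde{T}[j']$, rather than $P'_j\in\tilde{T}[j]$ or $P'_{j'}\in\tilde{T}[j']$. The 1-mismatch version of the statement follows by restricting $\pi$ to a replacement throughout the argument, since a mismatch is the special case of an error in which only replacements are allowed and the indices $i$, $j$, $j'$ play identical roles. This completes the proof of the proposition.
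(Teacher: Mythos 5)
Your proposal is correct and follows essentially the same route as the paper: fix an alignment, let $i$ be a segment carrying the single edit operation, and split exhaustively on the relative order of $j$, $i$, $j'$ into the four cases $j=i=j'$, $j<i<j'$, $j=i<j'$, $j<i=j'$. The paper's proof stops at this case split, whereas you additionally unpack the alignment definition to verify the ``implying an active prefix/suffix'' clauses — a harmless (indeed welcome) elaboration of details the paper leaves implicit.
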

\begin{proof}
Suppose $P$ has a $1$-error (resp. $1$ mismatch) occurrence matching $\tilde{T}[j\dd j']$ with $1 \le j \le j' \le n$. If $j=j'$ we are in the Easy Case. Otherwise, each alignment has an error in some $\tilde{T}[i]$ for $j\le i \le j'$. If $j<i<j'$, we are in the Anchor Case; if $j<i=j'$, we are in the Prefix Case; and if $j=i<j'$, we are in the Suffix Case.\qed
\end{proof}

\section{1-Error EDSM}\label{sec:edit}

In this section, we present algorithms for finding all $1$-occurrences of $P$ given by each type of possible alignment
described by Proposition~\ref{prop:occs} (inspect Figure~\ref{fig:occs}). The Prefix and Suffix cases are analogous by Lemma~\ref{lem:reverse for suffix}; the only difference is in that, while the Suffix Case computes new $\oneAP$, the Prefix Case is used to actually report occurrences. They are jointly considered in Section~\ref{sec:pref case}.  

We follow two different procedures for the decision and reporting versions.
For the decision version, we precompute sets $AP_i$ and $AS_i$, for all $i\in[1,n]$, using Corollary~\ref{coro:AP}, and we simultaneously compute possible exact occurrences of $P$. Then we compute $1$-error occurrences of $P$ by grouping the alignments depending on the segment $i$ in which the error occurs, and using $AP_i$ and $AS_i$. For the reporting version, we consider one segment $\tilde{T}[i]$ at a time (on-line) and extend partial exact or $1$-error occurrences of $P$ to compute sets $AP_{i}$ and $\oneAP_{i}$ using just sets $AP_{i-1}$ and $\oneAP_{i-1}$ computed at the previous step. We design different procedures for the 4 cases of Proposition~\ref{prop:occs}. We can sort all letters of $P$, assign them rank values from $[1,m]$, and construct a perfect hash table over these letters supporting $\cO(1)$-time look-up queries in $\cO(m\log m)$ time~\cite{DBLP:conf/icalp/Ruzic08}. Any letter of $\tilde{T}$ not occurring in $P$ can be replaced by the same special letter in $\cO(1)$ time. In the rest we thus assume that the input strings are over $[1,m+1]$.

Two problems from computational geometry have a key role in our solutions. We assume the word RAM model with coordinates on the integer grid $[1,n]^d=\{1,2,\ldots,n\}^d$. In the \emph{2d rectangle emptiness} problem, we are given a set $\mathcal{P}$ of $n$ points to be preprocessed, so that when one gives an axis-aligned rectangle as a query, we report YES if and only if the rectangle contains a point from $\mathcal{P}$. In the ``dual'' \emph{2d rectangle stabbing} problem, we are given a set $\mathcal{R}$ of $n$ axis-aligned rectangles to be preprocessed, so that when one gives a point as a query, we report YES if and only if there exists a rectangle from $\mathcal{R}$ containing the point.

\begin{lemma}[\cite{DBLP:conf/compgeom/ChanLP11,DBLP:conf/esa/Gao0N20}]\label{lem:2demptiness}
After $\cO(n \sqrt{\log n})$-time preprocessing, we can answer 2d rectangle emptiness queries in $\cO(\log \log n)$ time.
\end{lemma}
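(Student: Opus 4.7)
The statement is attributed to Chan--Larsen--Pătraşcu and Gao--Nekrich, so strictly speaking my plan is to cite those works; below I sketch how I would rederive the bound from the word-RAM toolbox in case I did not have them at hand.

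The first step is to put the problem in a form amenable to persistence. After rank-space reduction (which is free here since the $n$ points already live on $[1,n]^2$), I would sort the points by $x$-coordinate and process them left to right. This turns a rectangle emptiness query $[x_1,x_2]\times[y_1,y_2]$ into the question: \emph{does the set of $y$-coordinates of points with $x$-coordinate in $[x_1,x_2]$ meet the interval $[y_1,y_2]$?} Representing each prefix of the sweep by a persistent version of a set-over-$[1,n]$ data structure, the query becomes a predecessor/successor query: find the successor of $y_1-1$ in the difference between the versions corresponding to the $x$-prefixes ending at $x_2$ and at $x_1-1$, and test whether it is at most $y_2$.

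The second step is to implement that persistent successor structure so that queries cost $\cO(\log\log n)$. The natural building block is a van Emde Boas tree / $y$-fast trie, which achieves $\cO(\log\log n)$ successor queries on $[1,n]$. Making such a structure persistent with path copying would cost $\Theta(\log n)$ space per insertion, giving $\cO(n\log n)$ preprocessing; to drive this down to $\cO(n\sqrt{\log n})$ I would use the standard micro/macro decomposition: partition the $n$ updates into $n/\sqrt{\log n}$ blocks of $\sqrt{\log n}$ updates each, store a sparse persistent macro structure only at block boundaries, and encode each block with bit-packed signatures that are handled by a universal precomputed table of size $2^{\cO(\sqrt{\log n})}=n^{o(1)}$ in $\cO(1)$ time per query. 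The macro structure costs $\cO((n/\sqrt{\log n})\cdot\log n)=\cO(n\sqrt{\log n})$ preprocessing and space; the micro layer contributes $\cO(n)$ plus $n^{o(1)}$ for the table.

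The query itself walks one $x$-boundary down the macro tree in $\cO(\log\log n)$ steps, then does one $\cO(1)$ micro-lookup; the symmetric-difference aspect is handled by composing (XORing) the two versions' packed summaries inside the table lookup. The main obstacle is engineering the micro-macro interface so that the $\cO(\log\log n)$ bound is preserved across the transition between the macro predecessor structure and the packed micro blocks, and so that persistence does not add a hidden $\log$ factor; this is exactly where the machinery of the cited papers is needed, and it is cleanest to simply invoke Chan--Larsen--Pătraşcu and Gao--Nekrich rather than reprove the bound from scratch.
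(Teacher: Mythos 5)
You are right that this lemma is not proved in the paper at all: it is imported as a black box from the two works cited in its statement~\cite{DBLP:conf/compgeom/ChanLP11,DBLP:conf/esa/Gao0N20}, so your primary plan --- cite rather than reprove --- coincides exactly with what the paper does, and for the paper's purposes that settles the matter.

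Your fallback sketch, however, has a genuine gap at its central step, so it should not be presented as a viable rederivation. A successor query on the \emph{difference} of two persistent versions is not an operation a van Emde Boas tree or $y$-fast trie supports: the successor of $y_1-1$ in the version at $x_2$ may be contributed by a point with $x<x_1$, i.e., a point also present in the version at $x_1-1$; after discarding it the next candidate may again be stale, so the walk can inspect $\Theta(n)$ elements in the worst case. Predecessor structures simply do not compose under set difference, and XORing packed summaries removes stale elements only inside a micro block, not across the macro tree. The standard repair changes what is stored: sweep by $x$ and keep, for each $y$-value, the maximum $x$-coordinate seen so far; a query then asks for the maximum of this quantity over $y\in[y_1,y_2]$ in the version at $x_2$ and compares it with $x_1$ --- equivalently, a range-successor query (the point of smallest $x\geq x_1$ among those with $y\in[y_1,y_2]$), which is precisely the functionality the cited structures provide. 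Two further bounds in your sketch do not go through as stated: a block of $\sqrt{\log n}$ updates with coordinates in $[1,n]$ takes $\Theta(\log^{3/2} n)$ bits to describe, not $\cO(\sqrt{\log n})$, so the universal table would have size $2^{\Theta(\log^{3/2} n)}$ unless coordinates are first rank-reduced within each block; and in the actual constructions the $\cO(n\sqrt{\log n})$ preprocessing comes from word-packed sorting and merging in a wavelet-tree-style decomposition, not from a persistence block decomposition. None of this affects the paper, which only invokes the lemma as stated.
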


\begin{lemma}[\cite{DBLP:journals/siamcomp/Chazelle88,DBLP:journals/siamcomp/ShiJ05}]\label{lem:2dstabbing}
After $\cO(n \log n)$-time preprocessing, we can answer 2d rectangle stabbing queries in $\cO(\log n)$ time.
\end{lemma}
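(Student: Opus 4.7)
The plan is to reduce 2d rectangle stabbing to a persistent version of 1d interval stabbing via a left-to-right sweep along the $x$-axis. First, I extract the $2n$ vertical events (one at the left $x$-edge and one at the right $x$-edge of each rectangle) and sort them by $x$-coordinate in $\cO(n\log n)$ time. Observe that the rectangles intersected by a vertical line at position $p$ are exactly those whose left edge has $x$-coordinate $\le p$ and whose right edge has $x$-coordinate $\ge p$; such a rectangle stabs a query point $(p,q)$ if and only if additionally its $y$-interval contains $q$. So if I sweep and maintain the ``currently active'' set of $y$-intervals under insertions (at left edges) and deletions (at right edges), the 2d stabbing question reduces to asking a 1d stabbing question on the snapshot at $x=p$.

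The core substructure is a standard balanced interval tree over the active $y$-intervals, supporting $\cO(\log n)$-time insertion, deletion, and decision stabbing (``is any active interval stabbed by $q$?''); using the decision variant avoids the output-sensitive term that appears in the reporting version. I would then make this structure partially persistent using the generic node-copying technique of Driscoll, Sarnak, Sleator and Tarjan, which preserves $\cO(\log n)$ amortized per update while producing one accessible version per event. Running the sweep through all $2n$ events therefore costs $\cO(n\log n)$ time and space in total and yields an ordered sequence of snapshots.

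To answer a query $(p,q)$, I binary-search the sorted event $x$-coordinates to locate the version corresponding to $p$ in $\cO(\log n)$ time, then run a 1d stabbing query with $q$ on that version in $\cO(\log n)$ time, reporting YES iff the substructure returns a stabbed interval. Correctness follows directly from the sweep invariant. The overall bounds are $\cO(n\log n)$ preprocessing and $\cO(\log n)$ per query, matching the statement.

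The main obstacle is ensuring that persistence on an interval tree does not blow up any of the individual costs: a path-copying implementation on a balanced interval tree gives $\cO(\log n)$ amortized per operation, which is precisely what the budget allows; alternatively, one can appeal directly to Chazelle's framework~\cite{DBLP:journals/siamcomp/Chazelle88} for persistent search structures, which packages these ingredients into one data structure with the stated bounds. Either route produces the desired tradeoff.
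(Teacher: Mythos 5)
Your proposal is correct, but note that the paper does not actually prove this lemma: it is imported as a black box, with the stated bounds attributed to the literature (\cite{DBLP:journals/siamcomp/Chazelle88,DBLP:journals/siamcomp/ShiJ05}), and the paper's own contribution on this front is elsewhere (Section~\ref{sec:CG}, where \emph{special} instances of stabbing arising from \textsc{1-Error EDSM} are solved faster than the generic bound). Your sweep-plus-persistence argument is a legitimate self-contained derivation of the generic bound: sorting the $2n$ vertical edge events, maintaining the active $y$-intervals in a balanced interval tree (or a counting segment tree over the $y$-coordinates) made partially persistent, and answering a query $(p,q)$ by binary search for the version at $x=p$ followed by a $1$d decision stabbing query gives $\cO(n\log n)$ preprocessing and $\cO(\log n)$ worst-case query, exactly as claimed; using the decision variant to dodge the output-sensitive $+k$ term is the right move, and the $\cO(n\log n)$ space of path copying is harmless since the lemma constrains only time. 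Two small points deserve one line each in a careful write-up: (i) boundary handling at shared $x$-coordinates --- deletions at a right edge $x_2$ must take effect only \emph{after} queries at $p=x_2$, so each version should record its half-open interval of validity on the $x$-axis before you binary-search; and (ii) the node-copying persistence bound must account for the augmentation (max endpoint per subtree), which changes along the whole update path, so each update touches $\Theta(\log n)$ nodes --- still within budget, but the $\cO(1)$-amortized-space guarantee of generic node copying does not apply verbatim to augmented trees. Neither issue is a genuine gap; what your route buys is a transparent proof from first principles, whereas the paper's citation route buys brevity, and note additionally that the paper's actual queries (point sets and nested rectangle families on an $[1,m]^2$ grid) admit the even simpler solutions of Lemmas~\ref{lem:m^3} and~\ref{lem:emptiness-offline}, which avoid both persistence and the $\log$ factors entirely.
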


In Section~\ref{sec:CG}, we note that the 2d rectangle stabbing instances
arising from \textsc{1-Error EDSM} have a special structure.
We show how to solve them efficiently thus shaving logarithmic factors from the time complexity.

\subsection{Easy Case}\label{sec:easy case}

The Easy Case can be reduced to approximate string matching with at most 1 error ($1$-SM):

\defproblem{$1$-SM}{A string  $P$ of length $m$ and a string $T$ of length $n$.}{All positions $j$ in $T$ such that there is at least one string $P'$ ending at position $j$ in $T$ with $d_E(P,P')\le 1$.}

We have the following well-known results.

\begin{lemma}[\cite{DBLP:journals/jcss/LandauV88,DBLP:journals/siamcomp/ColeH02}]\label{lem:kPM}
Given a pattern $P$ of length $m$, a text $T$ of length $n$, and an integer $k>0$, all positions $j$ in $T$ such that the edit distance of $T[i\dd j]$ and $P$, for some position $i\leq j$ on $T$, is at most $k$, can be found in $\cO(kn)$ time or in $\cO(\frac{nk^4}{m}+n)$ time.\footnote{Charalampopoulos et al.~have announced an improvement on the exponent of $k$ from 4 to 3.5; specifically they presented an $\cO(\frac{nk^{3.5}\sqrt{\log m\log k}}{m}+n)$-time algorithm~\cite{DBLP:journals/corr/abs-2204-03087}.} In particular, $1$-SM can be solved in $\cO(n)$ time.
\end{lemma}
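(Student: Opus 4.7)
The plan is to prove the two time bounds via two classical approaches, and then observe that the $k=1$ specialization of either yields $\cO(n)$. Throughout, I would work with the diagonal formulation of the edit-distance DP: for the standard table $D[i,j]=d_E(P[1\dd i], T[?\dd j])$ (minimizing over the left endpoint in $T$, since we seek occurrences), entries along a single diagonal $d$ (fixed value of $j-i$) are weakly monotone and differ by at most one between consecutive cells. The goal becomes to locate, for each diagonal $d$ and each error count $e\in[0,k]$, the deepest row $L_e[d]$ reachable using exactly $e$ edits; a position $j$ in $T$ is reported whenever some $L_e[d]$ with $e\le k$ hits row $m$ with column $j$.

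For the first bound, $\cO(kn)$, I would implement the Landau--Vishkin scheme on top of this. The recurrence $L_e[d]\ge \max\{L_{e-1}[d-1]+1, L_{e-1}[d]+1, L_{e-1}[d+1]\}$ (corresponding to insertion, substitution, deletion) is followed by a greedy extension along the diagonal that ``slides'' past equal letters. To perform this sliding in $\cO(1)$ per extension, I would build a generalized suffix tree of $P\$T$ with an $\cO(m+n)$-time LCA preprocessing, which answers longest-common-extension queries in $\cO(1)$. At each error level $e\le k$, only $\cO(n)$ diagonals can be active (those intersecting the text), so the total work is $\cO(kn)$ after $\cO(m+n)$ preprocessing. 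Reporting all $j$ where row $m$ is reached with $\le k$ errors is immediate.

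For the second bound, $\cO(nk^4/m + n)$, I would use the filtering/periodicity approach of Landau--Vishkin--Sahinalp and its refinement by Cole--Hazay. The filter is the pigeonhole fact that any $k$-error occurrence of $P$ must contain an exact match with one of $k+1$ equal-length pattern blocks of length roughly $\lfloor m/(k+1)\rfloor$. Locating all such block occurrences via Aho--Corasick or suffix-tree matching gives $\cO(n)$ candidate anchors in amortized $\cO(nk/m)$ density. Each candidate is then verified by running the $\cO(k^2)$-time Landau--Vishkin verification (restricted to a window of length $m+k$ around the anchor); a careful analysis that exploits periodicity to bound the number of overlapping candidate verifications yields the $\cO(nk^4/m)$ total verification cost, plus the $\cO(n)$ term for locating anchors.

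Finally, the $1$-SM statement follows by plugging $k=1$ into either bound; I would note that in this case the algorithm simplifies drastically (``kangaroo jumping'' with two LCE queries per candidate suffices). The main obstacle in writing this fully is the second bound: the Cole--Hazay periodicity argument requires a careful case analysis distinguishing whether the pattern is periodic or aperiodic, and the bookkeeping for charging overlapping verifications is the only delicate part. Since the claim is stated as a quoted result, I would simply cite \cite{DBLP:journals/jcss/LandauV88} for the $\cO(kn)$ bound and \cite{DBLP:journals/siamcomp/ColeH02} for the $\cO(nk^4/m+n)$ bound, and limit the argument in the paper to the $k=1$ specialization.
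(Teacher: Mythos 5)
The paper offers no proof of this lemma: it is quoted directly from the cited literature, exactly as you propose to do, and your sketches of the diagonal Landau--Vishkin scheme with LCE queries for the $\cO(kn)$ bound and the filtering/periodicity algorithm for the $\cO(nk^4/m+n)$ bound are faithful summaries of those very references, with the $k=1$ specialization following immediately. One small correction: the second reference is Cole and \emph{Hariharan} (SIAM J.\ Comput.\ 2002), not Cole--Hazay.
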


We find occurrences of $P$ with at most $1$  error that are in the Easy Case for segment $\tilde{T}[i]$ in the following way: we apply Lemma~\ref{lem:kPM} for $k=1$ and every string of  $\tilde{T}[i]$ whose length is at least $m-1$ (any shorter string is clearly not relevant for this case) as text. If, for any of those strings, we find an occurrence of $P$, we report an occurrence at position $i$ (inspect Figure~\ref{fig:easy}). The time for processing a segment $\tilde{T}[i]$ is $\cO(N_i)$, where $N_i$ is the total length of all the strings in $\tilde{T}[i]$.

\begin{figure}[t]
    \begin{subfigure}{\textwidth}
    \centering
    \includegraphics[width=6cm]{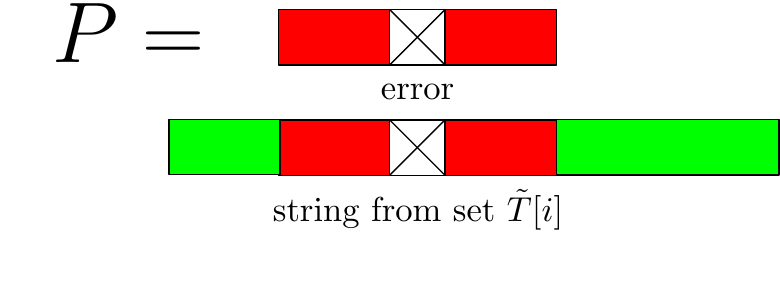}
    \caption{Easy Case: $j = i = j'$.}
    \label{fig:easy}
    \end{subfigure}
    
    \begin{subfigure}{\textwidth}
    \medskip
    \centering
    \includegraphics[width=9cm]{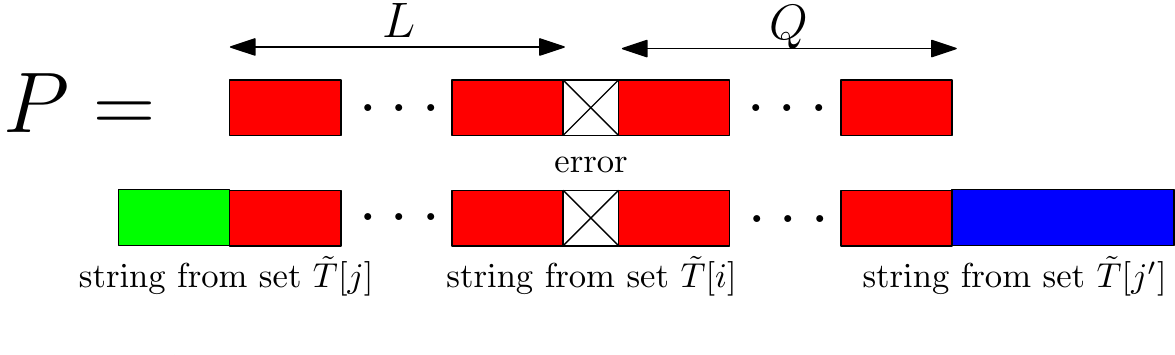}
    \caption{Anchor Case: $j\neq i,i\neq j'$.}
    \label{fig:anchor}
    \end{subfigure}

    \begin{subfigure}{\textwidth}
    \medskip
    \centering
    \includegraphics[width=7cm]{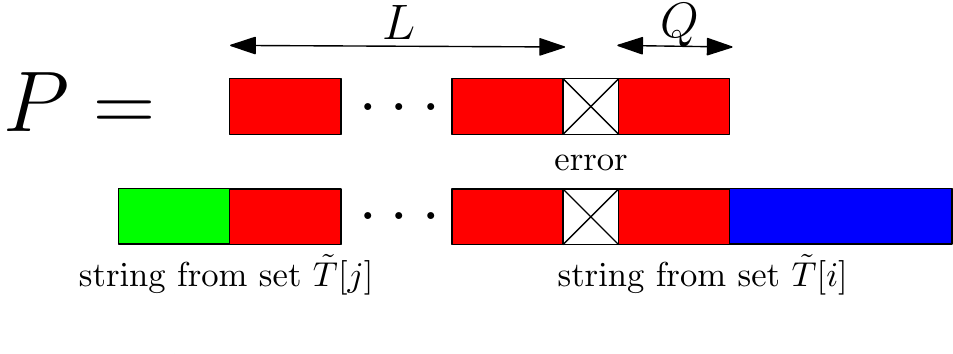}
    \caption{Prefix Case: $j\neq i,i=j'$.}
    \label{fig:prefix}
    \end{subfigure}
    
    \begin{subfigure}{\textwidth}
    \medskip
    \centering
    \includegraphics[width=7cm]{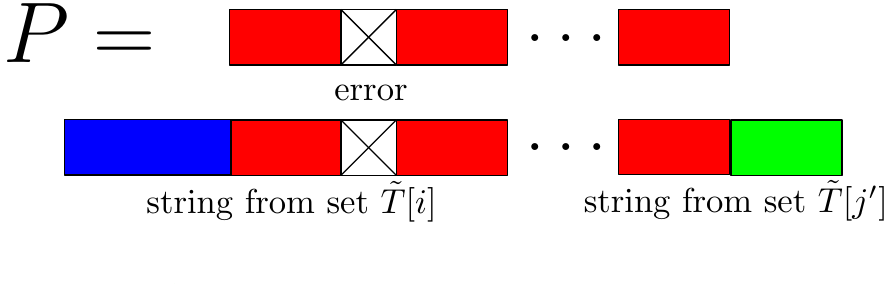}
    \caption{Suffix Case: $i=j,i\neq j'$.}
    \label{fig:suffix}
    \end{subfigure}
    
    \caption{Possible alignments of $1$-error occurrences of $P$ in $\tilde{T}$. Each occurrence starts at segment $\tilde{T}[j]$, ends at $\tilde{T}[j']$, and the error occurs at $\tilde{T}[i]$.}
    \label{fig:occs}
\end{figure}

\subsection{Anchor Case}\label{sec: anchor case}

Let $\tilde{T}$ be an ED text and $P$ be a pattern with a $1$-error occurrence and an alignment in the Anchor Case with anchor $\tilde{T}[i]$. Further let $L=P[1\dd \ell]S'$ and $Q=S''P[q\dd m]$ be a prefix and a suffix of $P$, respectively, for some $\ell\in AP_{i-1}, q\in AS_{i+1}$, where $S',S''$ are a prefix and a suffix of some $S\in \tilde{T}[i]$, respectively (strings $S',S''$ can be empty). By Lemma~\ref{lem:distance 1}, a pair $L,Q$ gives a $1$-error occurrence of $P$ if one of the following holds:

\begin{description}
  \item[1 mismatch:] $|L|+|Q|+1=m$ \emph{and} $|S'|+|S''|+1=|S|$ (inspect Figure~\ref{fig:anchor}). 
 \item[1 deletion in $P$:] $|L|+|Q|=m-1$ \emph{and} $|S'|+|S''|=|S|$.
 \item[1 insertion in $P$:] $|L|+|Q|=m$ \emph{and} $|S'|+|S''|+1=|S|$.
\end{description}


We show how to find such pairs with the use of a geometric approach. For convenience, we only present the Hamming distance (1 mismatch) case. The other cases are handled similarly.


Let $\lambda\in AP_{i-1}$ be the length of an active prefix, and let $\rho$ be the length of an active suffix, that is, $m-\rho+1\in AS_{i+1}$. Note that $AP_{i-1}$ and $AS_{i+1}$ can be precomputed, for all $i$, in $\cO(nm^2+N)$ total time by means of Corollary~\ref{coro:AP}. (In particular, $AS_{i+1}$ is required only for the decision version; for the reporting version, we explain later on how to avoid the precomputation of $AS_{i+1}$ to obtain an on-line algorithm.) We will exhaustively consider all pairs $(\lambda,\rho)$ such that $\lambda+\rho<m$. Clearly, there are $\cO(m^2)$ such pairs.

Consider the length $\mu = m - (\lambda+\rho) > 0$ of the substring of $P$ still to be matched for some prefix and suffix of $P$ of lengths $(\lambda, \rho)$, respectively.
We group together all pairs  $(\lambda,\rho)$ such that $m - (\lambda+\rho)=\mu$ by sorting them in $\cO(m^2)$ time.
We construct, for each such group $\mu$, the compacted trie $T_\mu$ of the fragments $P[\lambda+1\dd m-\rho]$, for all $(\lambda,\rho)$ such that $m - (\lambda+\rho)=\mu$, and analogously the compacted trie $T^R_\mu$ of all fragments $P^R[\rho+1\dd m-\lambda]$. 
For each group $\mu$, this takes $\cO(m)$ time~\cite{DBLP:journals/tcs/NaAIP03}. We enhance all nodes with a perfect hash table in $\cO(m)$ total time to access edges by the first letter of their label in $\cO(1)$ time~\cite{DBLP:journals/jacm/FredmanKS84}.

We also group all strings in segment $\tilde{T}[i]$ of length less than $m$ by their length $\mu$.
The group for length $\mu$ is denoted by $G_\mu$.
This takes $\cO(N_i)$ time.
Clearly, the strings in $G_\mu$ are the only candidates to extend pairs $(\lambda,\rho)$ such that $ m - (\lambda+\rho)=\mu$.
Note that the mismatch can be at any position of any string of $G_\mu$: its position determines a prefix $S'$ of length $h$ and a suffix $S''$ of length $k$ of the same string $S$, with $h+k=\mu-1$, that must match a prefix and a suffix of $P[\lambda+1\dd m-\rho]$, respectively.
We will consider all such pairs of positions $(h,k)$ whose sum is $\mu-1$ (intuitively, the minus one is for the mismatch).
This guarantees that $L=P[1\dd\lambda]S'$ and $Q=S''P[m-\rho+1\dd m]$ are such that \emph{$|L|+|Q|+1=m$}.
The pairs are $(0,\mu-1),(1,\mu-2),\ldots,(\mu-1,0)$.
This guarantees that $L$ and $Q$ are \emph{one position apart} ($|S'|+|S''|+1=|S|$).

The number of these pairs is $\cO(\mu)=\cO(m)$. Consider one such pair $(h,k)$ and
a string $S\in G_\mu$. We treat every such string $S$ separately.
We spell $S[1\dd h]$ in $T_\mu$.
If the whole $S[1\dd h]$ is successfully spelled ending at a node $u$, this implies that all the fragments of $P$ corresponding to nodes descending from $u$ share $S[1\dd h]$ as a prefix.
We also spell $S^R[1\dd k]$ in $T^R_\mu$. If the whole of $S^R[1\dd k]$ is successfully spelled ending at a node $v$, then all the fragments of $P$ corresponding to nodes descending from $v$ share $(S^R[1\dd k])^R$ as a suffix.
Nodes $u$ and $v$ identify an interval of leaves in $T_\mu$ and $T^R_\mu$, respectively. 
We need to check if these intervals both contain a leaf corresponding to the same fragment of $P$.  If they do, then we obtain an occurrence of $P$ with $1$ mismatch (see Figure~\ref{fig:rectangles}). We now have two different ways to proceed, depending on whether we need to solve the off-line decision version or the on-line reporting version.

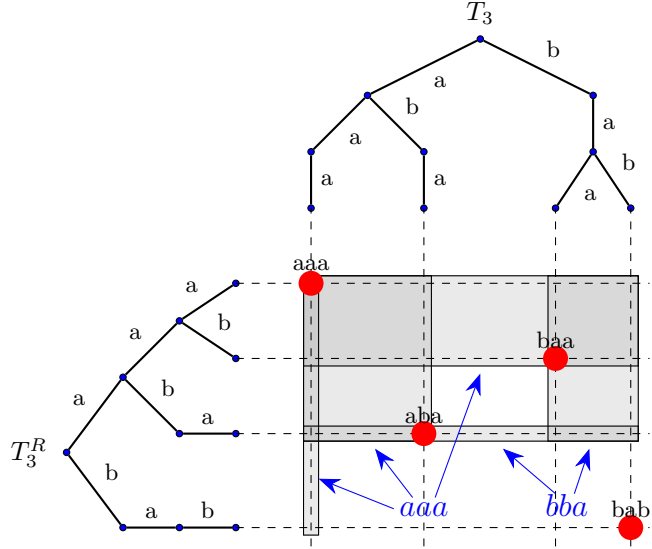
\begin{figure}
    \centering
    \begin{tikzpicture}[yscale=0.5,xscale=0.5,auto,node distance=0.1cm]

\tikzstyle{dot}=[inner sep=0.03cm, circle, draw]

\node[] at (11,11.7) {$T_{3}$};
\node[dot,fill=blue] (t) at (11,11) {};
\foreach \name/\dx/\parent/\l in {
  l/-3/t/a,
  r/+3/t/b,
  ll/-1.5/l/a,
  lr/+1.5/l/b,
  rl/0/r/a,
  lll/0/ll/a,
  lrl/0/lr/a,
  rll/-1/rl/a,
  rlr/+1/rl/b
} {
  \node[dot,fill=blue] (\name) at ($(\parent)+(\dx,-1.5)$) {};
  \draw[black,thick] (\parent)--(\name) node[midway,auto] {\small \l};
}

\node[] at (-1,0) {$T^R_3$};
\node[dot,fill=blue] (t) at (0,0) {};
\foreach \name/\dy/\parent/\l in {
  l/+2/t/a,
  r/-2/t/b,
  ll/+1.5/l/a,
  lr/-1.5/l/b,
  rl/0/r/a,
  lll/+1/ll/a,
  llr/-1/ll/b,
  lrl/0/lr/a,
  rll/0/rl/b
} {
  \node[dot,fill=blue] (\name) at ($(\parent)+(1.5,\dy)$) {};
  \draw[black,thick] (\parent)--(\name) node[midway,auto] {\small \l};
}

\foreach \dxa/\dxb/\dya/\dyb/\l in {
-4.5/4/2.5/4.5/$\_aa$, 
-4.5/-1.5/0.5/4.5/$a\_a$, 
-4.5/-4.5/-2/4.5/$aa\_$, 
-4.5/4/0.5/0.5/$\_ba$, 
2/4/0.5/4.5/$b\_a$ 
}{
\draw[fill=black!40!white,opacity=0.2] (\dxa+10.8,\dya-0.2) rectangle (\dxb+11.2,\dyb+0.2);
}

\foreach \dx in{-4.5,-1.5,2,4}
\draw[dashed] (\dx+11, -2.5)--+(0, 9);

\foreach \dy in{-2,0.5,2.5,4.5}
\draw[dashed] (4.5, \dy)--+(11, 0);

\foreach \dxa/\dxb/\dya/\dyb/\l in {
-4.5/4/2.5/4.5/$\_aa$, 
-4.5/-1.5/0.5/4.5/$a\_a$, 
-4.5/-4.5/-2/4.5/$aa\_$, 
-4.5/4/0.5/0.5/$\_ba$, 
2/4/0.5/4.5/$b\_a$ 
}{
\draw[draw=black] (\dxa+10.8,\dya-0.2) rectangle (\dxb+11.2,\dyb+0.2);
}

\foreach \dx/\dy/\l in {-4.5/+4.5/aaa,-1.5/0.5/aba,2/2.5/baa,4/-2/bab}{
\node[circle, fill=red] at (\dx+11,\dy) {};
\node[] at (\dx+11,\dy+0.5) {\small \l};
}

\node[] at (9.5,-1.5) {\large \textcolor{blue}{$aaa$}};
\node[] at (13.3,-1.4) {\large \textcolor{blue}{$bba$}};

    \foreach \dxa\dxb\dya\dyb in {
    8.7/6.7/-1.5/-0.7,
    9.2/8.2/-1.2/0.1,
    9.8/11/-1.2/2.1,
    13.0/11.6/-1.1/0.1,
    13.5/13.9/-1.1/0.1
    }{
        \draw[draw=blue,-{Stealth[scale=1.7]}] (\dxa,\dya) -- (\dxb,\dyb);
    }

\end{tikzpicture}
\caption{An example of points and rectangles (solid shapes) for the decision version of the Anchor Case with $1$  mismatch. Here $P=bbaaaabababb$, $AP_{i-1}=\{1,2,4,7,8,9\},AS_{i+1}=\{5,6,9,11,12\}$, $\mu=3$, and $\tilde{T}[i]=\{aaa,bba\}$.
$T_3$ and $T^R_3$ are built for 4 strings: $P[2\dd4]=baa,P[3\dd5]=aaa,P[8\dd10]=aba,P[9\dd11]=bab$; the 5 rectangles correspond to pairs $(\varepsilon,aa),(a,a),(aa,\varepsilon),(\varepsilon,ab),(b,a)$, namely, the pairs of prefixes and reversed suffixes of $aaa$ and $bba$ (rectangle $(bb,\varepsilon)$ does not exist as $T_3$ contains no node $bb$).
}\label{fig:rectangles}
\end{figure}

\subparagraph{Decision Version} Let us recall that $T_\mu,T^R_\mu$ by construction are ordered based on lexicographic ranks. For every pair $(T_\mu,T^R_\mu)$, we construct a data structure for 2d rectangle emptiness queries on the grid $[1,\ell]^2$, where $\ell$ is the number of leaves of $T_\mu$ (and of $T_\mu^R$), for the set of points $(x,y)$ such that $x$ is the lexicographic rank of the leaf of $T_\mu$ representing $P[\lambda+1\dd m-\rho]$ and $y$ is the rank of the leaf of $T_\mu^R$ representing $P^R[\rho+1\dd m-\lambda]$ for the same pair $(\lambda,\rho)$. This denotes that the two leaves correspond to \emph{the same fragment} of $P$. For every $(T_\mu,T^R_\mu)$, this preprocessing takes $\cO(m\sqrt{\log m})$ time by Lemma~\ref{lem:2demptiness}, since $\ell$ is $\cO(\mu)=\cO(m)$. For all $\mu$ groups (they are at most $m$), the whole preprocessing thus takes $\cO(m^2\sqrt{\log m})$ time. 

We then ask 2d range emptiness queries that take $\cO(\log\log m)$ time each by Lemma~\ref{lem:2demptiness}. Note that all rectangles for $S$ can be collected in $\cO(|S|)=\cO(\mu)$ time by spelling $S$ through $T_\mu$ and $S^R$ through $T^R_\mu$, one letter at a time. Thus the total time for processing all $G_\mu$ groups of segment $i$ is $\cO(m^2\sqrt{\log m}+N_i\log\log m)$.
    If any of the queried ranges turns out to be non-empty, then $P'$ such that $d_H(P,P')\le 1$ appears in $\mathcal{L}(\tilde{T})$ with anchor in $\tilde{T}[i]$; we do not have sufficient information to output its ending position however.
    
\subparagraph{Reporting Version} For this version, we do the dual.
    We construct a data structure for 2d rectangle stabbing queries on the grid $[1,\ell]^2$ for the set of rectangles collected for all strings $S\in G_\mu$. 
    By Lemma~\ref{lem:2dstabbing}, for all $\mu$ groups, the whole preprocessing thus takes $\cO(N_i\log N_i)$ time.
    
    For every $(T_\mu,T^R_\mu)$, we then ask the following queries: $(x,y)$ is queried if and only if $x$ is the rank of a leaf representing $P[\lambda+1\dd m-\rho]$ and $y$ is the rank of a leaf representing $P^R[\rho+1\dd m-\lambda]$. 
    For every $(T_\mu,T^R_\mu)$, this takes $\cO(m\log N_i)$ time by Lemma~\ref{lem:2dstabbing} and by the fact that for each group $G_\mu$ there are $\cO(m)$ pairs $(\lambda,\rho)$ such that $m-(\lambda+\rho)=\mu$. For all groups $G_\mu$  (they are at most $m$), all the queries thus take $\cO(m^2\log N_i)$ time. Thus the total time for processing all $G_\mu$ groups of segment $i$ is $\cO((m^2+N_i)\log N_i)$.
    
    We are not done yet. 
    By performing the above algorithm for active prefixes and active suffixes, we find out which pairs can be completed to a full occurrence of $P$ with at most $1$  error. This information is not sufficient to compute where such an occurrence ends (and storing additional information together with the active suffixes may prove costly).
    To overcome this, we use some ideas from the decision algorithm, appropriately modified to preserve the on-line nature of the reporting algorithm.
    Instead of iterating $\rho$ over the lengths of precomputed active suffixes, we iterate it over \emph{all} possible lengths in $[0, m]$ (including $0$ because we may want to include $m$ in $\oneAP_i$).
    A suffix of $P$ of length $\rho$ completes a partial occurrence computed up to segment $i$ exactly when $m-\rho\in \oneAP_i$ (a pair $x\in \oneAP_i, x+1\in AS_{i+1}$ corresponds to an occurrence).
    We thus use the reporting algorithm to compute the part of $\oneAP_i$ coming from the extension of $AP_{i-1}$ (see Figure~\ref{fig:layout}), and defer the reporting to the no-error version of the Prefix Case for the right $j'$; which was solved by Grossi et al.~\cite{DBLP:conf/cpm/GrossiILPPRRVV17} in linear time.

\subsection{Prefix Case}\label{sec:pref case}

Let $\tilde{T}$ be an ED text and $P$ be a pattern with a $1$-error occurrence and an alignment in the Prefix Case with active prefix ending at $\tilde{T}[i-1]$.
Let $L=P[1\dd \ell]S'$, with $\ell\in AP_{i-1}$, be a prefix of $P$ that is extended in $\tilde{T}[i]$ by $S'$; and $Q$ be a suffix of $P$ occurring in some string of $\tilde{T}[i]$ (strings $S',Q$ can be empty). By Lemma~\ref{lem:distance 1}, we have $3$ possibilities for any alignment of a $1$-error occurrence of $P$ in the Prefix Case:
\begin{description}
    \item[1 mismatch:] $|L|+|Q|+1=m$, $S'$ is a prefix of the same string in which $Q$ occurs, and they are one position apart (inspect Figure~\ref{fig:prefix}). 
    \item[1 deletion in $P$:] $|L|+|Q|=m-1$, $S'$ is a prefix of the same string in which $Q$ occurs, and they are consecutive.
    \item[1 insertion in $P$:] $|L|+|Q|=m$, $S'$ is a prefix of the same string in which $Q$ occurs, and they are one position apart. 
\end{description}

For convenience, we only present the method for Hamming distance (1 mismatch). The other possibilities are handled similarly.

The techniques are similar to those for the Anchor Case (Section~\ref{sec: anchor case}).
We group the prefixes of all strings in $\tilde{T}[i]$ according to their length $\mu\in[1,m)$. The total number of these prefixes is $\cO(N_i)$.
The group for length $\mu$ is denoted by $G_\mu$.
We construct the compacted trie $T_{G_\mu}$ of the strings in $G_\mu$, and the compacted trie $T^R_{G_\mu}$ of the reversed strings in $G_\mu$.
This can be done in $\cO(N_i)$ total time for all compacted tries.
To achieve this, we employ the following lemma by Charalampopoulos et al.~\cite{DBLP:journals/jea/Charalampopoulos20}. (Recall that we have already sorted all letters of $P$. In what follows, we assume that $N_i\geq m$; if this is not the case, we can sort all letters of $\tilde{T}[i]$ in $\cO(m + N_i)$ time.)
\begin{lemma}[\cite{DBLP:journals/jea/Charalampopoulos20}]\label{lem:sort}
Let $X$ be a string of length $n$ over an integer alphabet of size $n^{\cO(1)}$. Let $I$ be a collection of intervals $[i,j]\subseteq[1,n]$. We can lexicographically sort the substrings $X[i\dd j]$ of $X$, for all intervals $[i,j]\in I$, in $\cO(n+|I|)$ time.
\end{lemma}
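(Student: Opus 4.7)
The plan is to reduce the sorting task to a carefully ordered traversal of the suffix tree $\mathcal{T}$ of $X$. First, I would build $\mathcal{T}$ in $\cO(n)$ time using a linear-time suffix array construction (valid for integer alphabets of size $n^{\cO(1)}$) combined with the Kasai et al.\ construction of the LCP array. I would also equip $\mathcal{T}$ with a weighted-ancestor data structure that answers queries in $\cO(1)$ time after $\cO(n)$-time preprocessing, a standard off-the-shelf ingredient in this alphabet regime.

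Each substring $X[i\dd j]$ is uniquely identified by the \emph{locus} on $\mathcal{T}$ at string-depth $j-i+1$ along the root-to-leaf path of the suffix starting at position $i$; this locus (either an explicit node or a position on an edge) is exactly what a weighted-ancestor query returns. Applying such a query to every interval $[i,j] \in I$ attaches to each interval its locus in $\cO(|I|)$ total time.

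Next, I would bucket the intervals by locus. Intervals landing on the same explicit node correspond to equal substrings and can be emitted in any order. Intervals landing on the same edge must be ordered by their depth; since depths are integers in $[1,n]$, a two-level bucket sort—first by edge, then by depth within each edge—takes $\cO(n + |I|)$ total time.

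Finally, I would run a DFS of $\mathcal{T}$ that visits each node's children in the lexicographic order of their incoming edge labels (already encoded in the suffix tree). Whenever the traversal passes through a locus holding intervals, I would emit those intervals in increasing depth order. Correctness follows because the DFS visits explicit nodes in lex order of the strings they spell, and along each edge the string is lengthened by a fixed suffix, so increasing depth is precisely increasing lex order at that locus. Summing all steps yields an $\cO(n + |I|)$-time procedure. The one step that is not elementary is the $\cO(1)$-time weighted-ancestor query; if that result were unavailable, one could fall back to sorting by $(r_i, j-i+1)$ via LCE queries, but this would incur an extra logarithmic factor and thus not meet the claimed bound.
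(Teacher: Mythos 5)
The paper does not actually prove this lemma: it is imported verbatim from Charalampopoulos et al.~\cite{DBLP:journals/jea/Charalampopoulos20}, so there is no in-paper argument to compare against. Judged on its own merits, your reconstruction is sound and is essentially the standard proof of this fact: map each fragment $X[i\dd j]$ to its locus in the suffix tree (the point at string depth $j-i+1$ on the root-to-leaf path of the suffix starting at $i$), bucket the loci, and emit them during a child-lexicographic DFS; since every string ending inside an edge is a prefix of every string ending deeper on or below that edge, increasing string depth coincides with lexicographic order at each locus, and the total time is $\cO(n+|I|)$.

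Three caveats are worth recording. First, the one genuinely nontrivial ingredient is exactly where you flag it: $\cO(1)$-time weighted ancestor queries after $\cO(n)$ preprocessing are \emph{not} available for general weighted trees (in that setting $\Omega(\log\log n)$ query time is unavoidable); the constant-time bound holds specifically for suffix trees (Gawrychowski--Lewenstein--Nicholson, ESA 2014; simplified by Belazzougui--Kosolobov--Puglisi--Raman, CPM 2021), so calling it ``standard off-the-shelf'' is optimistic, and you must invoke the suffix-tree-specific result to reach $\cO(n+|I|)$ rather than $\cO(n+|I|\log\log n)$. (Alternatively, since all $|I|$ queries are known in advance, an offline batched weighted-ancestor computation suffices and sidesteps the heavy machinery.) Second, the ``two-level bucket sort'' should be implemented as a two-pass stable radix sort (stably by depth over $[1,n]$, then by edge identifier); allocating a fresh depth-bucket array per edge would not be $\cO(n+|I|)$. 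Third, your fallback is flawed as stated: the key $(r_i,\, j-i+1)$ is not a correct comparison key even with LCE support, because when one fragment is a proper prefix of another the suffix ranks can compare the other way --- in $X=\texttt{aab}$, the fragment $\texttt{a}=X[2\dd 2]$ precedes $\texttt{aa}=X[1\dd 2]$ lexicographically although the suffix $\texttt{ab}$ ranks after the suffix $\texttt{aab}$ --- so the fallback must perform genuine LCE-based three-way comparisons inside a comparison sort, which indeed costs the extra logarithmic factor you concede. None of these affects the correctness of your main $\cO(n+|I|)$ route.
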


We concatenate all the strings of $\tilde{T}[i]$ to obtain a single string $X$ of length $N_i$, to which we apply, for each $\mu$, Lemma~\ref{lem:sort}, with a set $I$ consisting of the intervals over $X$ corresponding to the strings in $G_\mu$.
By sorting, in this way, all strings in $G_\mu$ (for all $\mu$),
and by constructing~\cite{DBLP:conf/focs/Farach97} and preprocessing~\cite{DBLP:conf/latin/BenderF00} the generalized suffix tree of the strings in $\tilde{T}[i]$ in $\cO(N_i)$ time to support answering lowest common ancestor (LCA) queries in $\cO(1)$ time, we can construct all $T_{G_\mu}$ in $\cO(N_i)$ total time. We handle $T^R_{G_\mu}$, for all $\mu$, analogously. Similar to the Anchor Case we enhance all nodes with a perfect hash table within the same complexities~\cite{DBLP:journals/jacm/FredmanKS84}.

In contrast to the Anchor Case, we now
only consider the set $AP_{i-1}$: namely, we do not consider $AS_{i+1}$. Let $\lambda\in AP_{i-1}$ be the length of an active prefix. We treat every such element separately, and they are $\cO(m)$ in total. Let $\mu=m-\lambda>0$ and consider the group $G_\mu$ whose strings are all of length $\mu$.
The mismatch being at position $h+1$ in one such string $S$ determines a prefix $S'$ of $S$ of length $h$ that must extend the active prefix of $P$ of length $\lambda$, and a fragment $Q$ of $S$ of length $k=\mu-h-1$ that must match a suffix of $P$.
We will consider all such pairs $(h,k)$ whose sum is $\mu-1$.
The pairs are again $(0,\mu-1),(1,\mu-2),\ldots,(\mu-1,0)$, and there are clearly $\cO(\mu)=\cO(m)$ of them.

Consider $(h,k)$ as one such pair.
We spell $P[\lambda+1\dd \lambda+h]$ in $T_{G_\mu}$. 
If the whole $P[\lambda+1\dd \lambda+h]$ is spelled successfully, this implies an interval of leaves of $T_{G_\mu}$ corresponding to strings from $\tilde{T}[i]$ that share $P[\lambda+1\dd \lambda+h]$ as a prefix.
We spell $P^R[1\dd k]$ in $T^R_{G_\mu}$. If the whole $P^R[1\dd k]$ is spelled successfully, this implies an interval of leaves of $T_{G_\mu}^R$ corresponding to strings from $\tilde{T}[i]$ that have the same fragment $(P^R[1\dd k])^R$.
These two intervals form a rectangle in the grid implied by the leaves of $T_{G_\mu}$ and $T^R_{G_\mu}$.
We need to check if these intervals both contain a leaf corresponding to the same prefix of length $\mu$ of a string in $\tilde{T}[i]$. If they do, then we have obtained an occurrence with $1$ mismatch in $\tilde{T}[i]$. 

To do this we construct, for every $(T_{G_\mu},T^R_{G_\mu})$, a 2d range data structure for the set of points $(x,y)$ such that $x$ is the rank of a leaf of $T_{G_\mu}$, $y$ is the rank of a leaf of $T^R_{G_\mu}$, and the two leaves correspond to \emph{the same prefix} of length $\mu$ of a string in $\tilde{T}[i]$. For every $(T_{G_\mu},T^R_{G_\mu})$, this takes $\cO(|G_\mu|\sqrt{\log |G_\mu|})$ time by Lemma~\ref{lem:2demptiness}.
For all $G_\mu$ groups, the whole preprocessing takes $\cO(N_i\sqrt{\log N_i})$ time.

We then ask 2d range emptiness queries each taking $\cO(\log\log |G_\mu|)$ time by Lemma~\ref{lem:2demptiness}.
Note that all rectangles for $\lambda$ can be collected in $\cO(m)$ time by spelling $P[\lambda+1\dd \lambda+\mu-1]$ through $T_{G_\mu}$ and $P^R[1\dd \mu-1]$ through $T^R_{G_\mu}$, one letter at a time.
This gives a total of $\cO(m^2\log\log N_i+N_i\sqrt{\log N_i})$ time for processing all $G_\mu$ groups of $\tilde{T}[i]$, because $\sum_{\mu}|G_\mu|\leq N_i$.

To solve the Suffix Case (compute active prefixes with $1$ error starting in $\tilde{T}[i]$) we employ the mirror version of the algorithm, but iterating $\lambda$ over the whole $[0,m]$ instead of $AS_{i+1}$ (like in the reporting version of the Anchor Case). 

\subsection{Shaving Logs using Special Cases of Geometric Problems}\label{sec:CG}

\subsubsection{Anchor Case: Simple 2d Rectangle Stabbing}

\begin{lemma}\label{lem:m^3}
We can solve the Anchor Case (i.e., extend $AP_{i-1}$ into $\oneAP_i$) in $\cO(m^3+N_i)$ time.
\end{lemma}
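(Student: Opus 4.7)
The plan is to replace the 2d rectangle stabbing data structure used in Section~\ref{sec: anchor case} by a purely combinatorial sorting scheme, trading the $\log N_i$ query overhead for an extra $m$ factor. I describe the scheme only for the 1-mismatch subcase; the 1-insertion and 1-deletion subcases fit the same framework, with the mask being a single-character \emph{deletion} rather than a substitution and with $|S|$ differing from $x-\lambda$ by $\pm 1$ accordingly.

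For every active prefix length $\lambda\in AP_{i-1}$, every candidate extension end $x\in[\lambda+1,m]$, and every potential mismatch position $j\in[1,x-\lambda]$, I form the \emph{masked} string obtained from $P[\lambda+1\dd x]$ by replacing its $j$th letter with a fresh symbol $\#$. Symmetrically, for every $S\in\tilde{T}[i]$ with $|S|\le m$ and every $j\in[1,|S|]$, I form the masked string obtained from $S$ by replacing its $j$th letter with $\#$. I then bucket all these masked strings by the pair $(\mu,j)$, where $\mu$ is their common length, and sort each bucket lexicographically. By construction, $P[\lambda+1\dd x]$ is at Hamming distance exactly $1$ from $S$ with mismatch at position $j$ if and only if their masked versions coincide, so a linear scan of each sorted bucket looking for maximal runs of equal strings that mix at least one $P$-entry with at least one $T$-entry reveals exactly the values $x$ to be inserted into $\oneAP_i$.

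The total number of masked entries is $\cO(m^3+N_i)$: for each $\mu\in[1,m]$ there are $\cO(m)$ pairs $(\lambda,x)$ with $x-\lambda=\mu$, each spawning $\mu$ masked $P$-entries, which sums to $\sum_{\mu=1}^{m}\mu\cdot m = \cO(m^3)$; and every $S\in\tilde{T}[i]$ of length $\mu\le m$ spawns $\mu$ masked $T$-entries, for a total of $\sum_S|S|=\cO(N_i)$. Since a masked string of length $\mu$ is the concatenation of the $(j{-}1)$-length prefix and the $(\mu{-}j)$-length suffix of the unmasked original, I would compute lexicographic ranks of all such prefixes and suffixes in $\cO(m^2+N_i)$ time via Lemma~\ref{lem:sort} applied to $P$ and to a delimited concatenation of the strings of $\tilde{T}[i]$, and then radix-sort each bucket in linear time using the pair of ranks as composite key. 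Summing over all buckets yields the claimed $\cO(m^3+N_i)$ time bound.

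The main obstacle I foresee is avoiding an $m$-factor blow-up from materialising the masked substrings letter by letter; the workaround is to represent every entry by its constant-size descriptor---the triple $(\lambda,j,\mu)$ for a $P$-entry and a pair $(s,j)$ identifying a string $s$ of $\tilde{T}[i]$ for a $T$-entry---and to sort using only the precomputed rank arrays, never touching the masked letters. Once this compact encoding is in place, the 1-insertion and 1-deletion subcases slot in by using masks that excise a single character, which preserves both the counting argument and the $\cO(m^3+N_i)$ overall bound.
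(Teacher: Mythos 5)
Your proposal is correct, but it takes a genuinely different route from the paper's proof. The paper keeps the geometric reduction of Section~\ref{sec: anchor case} and simply observes that all points and rectangles live on a fixed $[1,m]\times[1,m]$ grid, so instead of Lemma~\ref{lem:2dstabbing} it uses a folklore structure: stamp $\pm1$ at the four corners of each rectangle in an $(m+1)\times(m+1)$ array, take 2d prefix sums, and answer each stabbing query in $\cO(1)$; with $\cO(m)$ grids of size $\cO(m^2)$ and $\cO(m^2)$ queries this yields $\cO(m^3+N_i)$ time and $\cO(m^2)$ space. You instead dispense with geometry altogether, detecting $1$-error extensions by bucketing masked (or excised) strings by $(\mu,j)$ and radix-sorting compact descriptors keyed by precomputed substring ranks, with the same counting argument ($\cO(m^3)$ pattern entries, $\cO(N_i)$ text entries); interestingly, a variant of exactly this masking argument appears, commented out, in the paper's source. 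Three details in your sketch deserve explicit care, though all fit within your bounds. First, the ranks of $P$-substrings and of $\tilde{T}[i]$-substrings must be mutually comparable, so Lemma~\ref{lem:sort} has to be applied \emph{once} to a single string (e.g., $P$ concatenated with the delimited strings of $\tilde{T}[i]$), not separately to each --- your phrasing leaves this ambiguous, and separate applications would make the composite keys meaningless. Second, after sorting you must give \emph{equal} substrings equal ranks (sorted order alone gives only adjacency), which requires LCP information, e.g.\ via the generalized suffix tree with LCA support as the paper already uses in the Prefix Case. Third, in the insertion/deletion subcases only one side is excised, so the intact side must be split at the corresponding position $j$ to produce an aligned (prefix-rank, suffix-rank) key; your ``same framework'' claim is true, but this splitting is the step that makes it so. Note also that coinciding masked strings capture Hamming distance $0$ as well as exactly $1$; this is harmless and in fact wanted, since $\oneAP_i$ is defined via distance \emph{at most} $1$. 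As for what each approach buys: the paper's grid reuses the trie/rectangle infrastructure already built for the $\cO((m^2+N_i)\log N_i)$ reporting algorithm and needs only $\cO(m^2)$ working space, whereas yours is self-contained and purely combinatorial, avoiding tries and rectangle collection entirely at the price of the substring-sorting machinery.
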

\begin{proof}
By Lemma~\ref{lem:2dstabbing}, 2d rectangle stabbing queries can be answered in $\cO(\log n)$ time using $\cO(n\log n)$ space after $\cO(n \log n)$-time preprocessing.

Notice that in the case of the 2d rectangle stabbing used in Section~\ref{sec: anchor case} the rectangles and points are all in a predefined $[1,m]\times[1,m]$ grid. In such a case we can also use an easy folklore data structure of size $\cO(m^2)$, which after an $\cO(m^2+|\text{rectangles}|)$-time preprocessing answers such queries in $\cO(1)$ time.

Namely, the data structure consists of a $[1,m+1]^2$ grid $\Gamma$ (a 2d-array of integers) in which for every rectangle $[u,v]\times[w,x]$ we add $1$ to $\Gamma[u][w]$ and $\Gamma[v+1][x+1]$ and $-1$ to $\Gamma[u][x+1]$ and $\Gamma[v+1][w]$. Then we modify $\Gamma$ to contain the 2d prefix sums of its original values (we first compute prefix sums of each row, and then prefix sums of each column of the result). After these modifications, $\Gamma[x][y]$ stores the number of rectangles containing point $(x,y)$, and hence after $\cO(m^2+|\text{rectangles}|)$-time preprocessing we can answer 2d rectangle stabbing queries in $\cO(1)$ time.

In our case we have a total of $\cO(m)$ such grid structures, each of $\cO(m^2)$ size, and ask $\cO(m^2)$ queries, and hence obtain an $\cO(m^3+N_i)$-time and $\cO(m^2)$-space solution for computing $\oneAP_i$ from $AP_{i-1}$.
\end{proof}

\subsubsection{Prefix Case: a Special Case of 2d Rectangle Stabbing}\label{subsec:prefix case}


Inspect the example of Figure~\ref{fig:rectangles} for the Anchor Case. Note that the groups of rectangles for each string have the special property of being composed of \emph{nested intervals}: for each dimension,  the interval corresponding to a given node is included in the one corresponding to any of its ancestors. Thus for the Prefix Case, where we only spell fragments of the same string $P$ in both compacted tries,
we consider the following special case of off-line 2d rectangle stabbing.

\begin{lemma}\label{lem:emptiness-offline}
Let $p_1,\ldots,p_h$ and $q_1,\ldots,q_h$ be two permutations of $[1,h]$.
We denote by $\Pi$ the set of $h$ points $(p_1,q_1),(p_2,q_2),\ldots,(p_h,q_h)$ on $[1,h]^2$.

Further let $R$ be a collection of $r$ axis-aligned rectangles $([u_1,v_1],[w_1,x_1]),\ldots,([u_r,v_r],[w_r,x_r])$, such that 
$$[u_r,v_r] \subseteq [u_{r-1},v_{r-1}]\subseteq \cdots \subseteq [u_1,v_1]$$
and
$$[w_1,x_1] \subseteq [w_{2},x_{2}]\subseteq \cdots \subseteq [w_r,x_r].$$
Then we can find out, for every  point from $\Pi$, if it stabs any rectangle from $R$ in $\cO(h+r)$ total time.
\end{lemma}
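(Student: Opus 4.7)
The key is to exploit the nesting hypothesis to reduce the 2d stabbing query for each point to a constant-time comparison of two precomputed values. I will first record the following structural observation. For any $x$-coordinate $p \in [1,h]$, the set of rectangles whose $x$-interval contains $p$ is a \emph{prefix} of $\{1,\ldots,r\}$: indeed, the nesting $[u_r,v_r]\subseteq \cdots \subseteq [u_1,v_1]$ implies $u_1\le u_2\le \cdots \le u_r$ and $v_1\ge v_2\ge \cdots \ge v_r$, so $\{i:p\in[u_i,v_i]\}=\{1,\ldots,i^*(p)\}$ for some $i^*(p)\in\{0,1,\ldots,r\}$ (with $0$ meaning the empty prefix). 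Symmetrically, since $[w_1,x_1]\subseteq \cdots \subseteq [w_r,x_r]$, we have $w_1\ge \cdots \ge w_r$ and $x_1\le \cdots \le x_r$, so $\{j:q\in[w_j,x_j]\}=\{j^*(q),\ldots,r\}$ for some $j^*(q)\in\{1,\ldots,r+1\}$. Consequently, the point $(p,q)$ stabs some rectangle if and only if these two index sets intersect, i.e., iff $j^*(q)\le i^*(p)$.

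The plan is therefore to precompute two arrays $I[1\dd h]$ and $J[1\dd h]$ storing $i^*(p)$ and $j^*(q)$, and then answer each query $(p_k,q_k)$ in $\cO(1)$ by comparing $J[q_k]$ with $I[p_k]$. Since there are $h$ queries, this finishing step costs $\cO(h)$.

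For the precomputation of $I[\cdot]$, I would define $A(p)=\max\{i:u_i\le p\}$ and $B(p)=\max\{i:v_i\ge p\}$; by the nesting $u_1\le \cdots \le u_r$ and $v_1\ge \cdots \ge v_r$, both are monotone functions of $p$, and $i^*(p)=\min(A(p),B(p))$. Since all coordinates lie in $[1,h]$ and the $u_i$'s are already sorted in non-decreasing order (they can be read off the input in order), $A$ and $B$ are obtained by two linear sweeps: walk $p$ from $1$ to $h$ while advancing a pointer through the sorted sequences of $u_i$'s (for $A$) and a matching pointer through $v_i$'s (for $B$). This takes $\cO(h+r)$ time and $\cO(h)$ space. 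An entirely symmetric sweep computes $J[\cdot]$ using $C(q)=\min\{j:w_j\le q\}$ and $D(q)=\min\{j:x_j\ge q\}$, setting $j^*(q)=\max(C(q),D(q))$, also in $\cO(h+r)$ time.

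Adding the preprocessing, the final per-point comparisons, and the initial reading of the $h$ points and $r$ rectangles, the total running time is $\cO(h+r)$, which matches the bound claimed by the lemma. I do not anticipate a real obstacle here: once the prefix/suffix structure of the sets of containing rectangles is observed, the algorithm is a pair of coordinate sweeps plus pointwise comparison, and correctness reduces to the equivalence ``$(p,q)$ stabs some rectangle iff $j^*(q)\le i^*(p)$'' established above.
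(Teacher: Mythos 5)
Your proof is correct, but it takes a genuinely different route from the paper's. The paper processes the rectangles incrementally in their nested order, maintaining a bit vector $H$ over $x$-coordinates: it switches on $H[p]$ for points whose $x$-coordinate lies in the current (shrinking) $x$-interval, queries the points whose $y$-coordinates are newly covered by the current (growing) $y$-interval, and switches bits off as the $x$-intervals shrink; the $\cO(h+r)$ bound is amortized over at most $h$ bit activations, $h$ deactivations, and $h$ point look-ups. You instead make the monotonicity explicit: the nesting forces $u_1\le\cdots\le u_r$, $v_1\ge\cdots\ge v_r$, $w_1\ge\cdots\ge w_r$, $x_1\le\cdots\le x_r$, so the set of rectangles whose $x$-interval contains a fixed $p$ is a prefix $\{1,\ldots,i^*(p)\}$ and the set whose $y$-interval contains a fixed $q$ is a suffix $\{j^*(q),\ldots,r\}$, and a point stabs some rectangle iff $j^*(q)\le i^*(p)$. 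Computing the threshold arrays $I[\cdot]$, $J[\cdot]$ by merge-style sweeps (the $u_i$'s etc.\ come pre-sorted from the nesting, so no sorting is needed) costs $\cO(h+r)$, after which each point is resolved by one comparison. Your argument is arguably cleaner: it is worst-case rather than amortized per phase, the correctness criterion is a one-line equivalence rather than an invariant about which bits are live, and it does not actually use the hypothesis that the points form permutations --- it works verbatim for any $h$ points on the grid $[1,h]^2$, even with repeated coordinates, which slightly generalizes the lemma. What the paper's formulation buys in exchange is that its rectangle-by-rectangle processing interleaves naturally with how the rectangles are generated in Section~\ref{subsec:prefix case} (by spelling $P[\lambda+1\dd \lambda+\mu-1]$ and $P^R[1\dd \mu-1]$ through $T_{G_\mu}$ and $T^R_{G_\mu}$ one letter at a time), whereas your method needs all $r$ rectangle boundaries collected before the sweeps --- harmless here, since the application is off-line within each segment and the rectangles are collected in $\cO(\mu)$ time anyway.
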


\begin{proof}
Let $H$ be a bit vector consisting of $h$ bits, initially all set to zero.
We process one rectangle at a time. We start with $([u_1,v_1],[w_1,x_1])$.
We set $H[p]=1$ if and only if $(p,q)\in \Pi$ for $p\in [u_1,v_1]$ and any $q$.  We collect all $p$ such that $(p,q)\in \Pi$ and $q\in [w_1,x_1]$, and then search for these $p$ in $H$: if for any $p$, $H[p]=1$, then the answer is positive for $p$. Otherwise, we remove from $H$ every $p$ such that $p\in[u_1,v_1]$ and $p\notin[u_2,v_2]$ by setting $H[p]=0$. We proceed by collecting all $p$ such that $(p,q)\in \Pi$, $q\in [w_2,x_2]$ and $q\notin [w_1,x_1]$, and then search for them in $H$: if for any $p$, $H[p]=1$, then the answer is positive for $p$. We repeat this until $H$ is empty or until there are no other rectangles to process.

The whole procedure takes $\cO(h+r)$ time, because we set at most $h$ bits on in $H$, we set at most $h$ bits back off in $H$, we search for at most $h$ points in $H$, and
then we process $r$ rectangles.
\end{proof}



\begin{lemma}\label{lem:prefix}
We can solve the Prefix (resp.~Suffix) Case, that is, report $1$-error occurrences ending in $\tilde{T}[i]$ (resp.~compute active prefixes with $1$  error starting in $\tilde{T}[i]$)  in $\cO(m^2+N_i)$ time.
\end{lemma}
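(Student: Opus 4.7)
The plan is to observe that each per-$\lambda$ invocation of the 2d range-emptiness subroutine from Section~\ref{sec:pref case} is in fact a restricted off-line 2d rectangle stabbing instance of the form handled by Lemma~\ref{lem:emptiness-offline}, which avoids any polylogarithmic overhead. First, I would keep the $\cO(N_i)$-time preprocessing of Section~\ref{sec:pref case} intact: group the prefixes of the strings in $\tilde{T}[i]$ by length $\mu$, build both compacted tries $T_{G_\mu}$ and $T^R_{G_\mu}$ with $\cO(1)$-time child lookup, and precompute the bijection between their leaves induced by string identity. Since both trie leaf-orders are lexicographic on the $|G_\mu|$ strings, this bijection supplies the two permutations of $[1,|G_\mu|]$ required as the point set by Lemma~\ref{lem:emptiness-offline}.

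Next, for each $\lambda \in AP_{i-1}$, I set $\mu=m-\lambda$ and generate the $\cO(\mu)$ rectangles corresponding to pairs $(h,k)$ with $h+k=\mu-1$ by descending $T_{G_\mu}$ letter-by-letter along $P[\lambda+1\dd m]$ while simultaneously descending $T^R_{G_\mu}$ letter-by-letter along $P^R[1\dd\mu-1]$. The crucial structural observation is that these rectangles are \emph{pre-nested} in exactly the way Lemma~\ref{lem:emptiness-offline} demands: as $h$ grows from $0$ to $\mu-1$ the node reached in $T_{G_\mu}$ moves only deeper, so its leaf-range $[u_h,v_h]$ can only shrink; at the same time $k=\mu-h-1$ decreases, so the node in $T^R_{G_\mu}$ moves only shallower, so its leaf-range $[w_h,x_h]$ can only grow. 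If the descent in either trie fails at some $h$, I simply stop, as all later rectangles in the chain are empty. Hence the chain $\{([u_h,v_h],[w_h,x_h])\}_h$ satisfies the nested hypothesis of Lemma~\ref{lem:emptiness-offline} and is assembled in $\cO(\mu)$ time total.

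Applying Lemma~\ref{lem:emptiness-offline} per $\lambda$ costs $\cO(|G_\mu|+\mu)$ time; summing over $\lambda \in AP_{i-1}$ and noting that distinct $\lambda$'s hit distinct groups $G_\mu$ via $\mu=m-\lambda$ yields a total of $\cO\bigl(\sum_\mu|G_\mu| + m\cdot|AP_{i-1}|\bigr) = \cO(N_i + m^2)$, matching the claimed bound. The Suffix Case is reduced to the Prefix Case by Lemma~\ref{lem:reverse for suffix} after replacing the outer iteration over $AP_{i-1}$ with an iteration over all of $[0,m]$ (to preserve the on-line reporting behaviour, exactly as already done for the reporting version of the Anchor Case); this still yields only $\cO(m)$ outer iterations and preserves the bound. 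The only real obstacle is spotting the nested structure of the rectangles associated to a fixed $\lambda$, which arises precisely because both tries are probed with successive fragments of the \emph{same} pattern $P$; once that structural observation is in hand, Lemma~\ref{lem:emptiness-offline} takes care of the rest.
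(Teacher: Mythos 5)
Your proposal is correct and takes essentially the same route as the paper's own proof of Lemma~\ref{lem:prefix}: both observe that spelling $P[\lambda+1\dd \lambda+\mu-1]$ through $T_{G_\mu}$ and $P^R[1\dd \mu-1]$ through $T^R_{G_\mu}$ one letter at a time yields, for each $\lambda$, a chain of rectangles nested exactly as required by Lemma~\ref{lem:emptiness-offline}, which is then queried in $\cO(\mu+|G_\mu|)$ time per group, for $\cO(m^2+N_i)$ in total since distinct $\lambda$ hit distinct $\mu$ and $\sum_\mu|G_\mu|\le N_i$, with the Suffix Case handled by the mirrored algorithm iterating $\lambda$ over all of $[0,m]$. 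One cosmetic remark: when a descent fails, it is on the $T^R_{G_\mu}$ side the \emph{early} rectangles of your $h$-ordered chain (large $k$) that vanish rather than the late ones, so ``stop'' should be ``restrict to the contiguous subinterval of $h$ where both spellings succeed'' --- the surviving rectangles still form a nested chain and the argument goes through unchanged.
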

\begin{proof}
We employ Lemma~\ref{lem:emptiness-offline} to get rid of the 2d range data structure. The key is that for every length-$\mu$ suffix $P[\lambda+1\dd m]$ of the pattern we can afford to pay $\cO(\mu+|G_\mu|)$ time plus the time to construct $T_{G_\mu}$ and $T^R_{G_\mu}$ for set $G_\mu$. 
Because the grid is $[1,|G_\mu|]^2$, we exploit the fact that the intervals found by spelling $P[\lambda+1\dd \lambda+\mu-1]$ through $T_{G_\mu}$ and $P^R[1\dd \mu-1]$ through $T^R_{G_\mu}$, one letter at a time, are subset of each other, and querying $\mu$ such rectangles is done in $\cO(\mu + |G_\mu|)$ time by employing Lemma~\ref{lem:emptiness-offline}. Since we process at most $m$ distinct length-$\mu$ suffixes of $P$, the total time is $\cO(m^2+N_i)$, because $\sum_{\mu}|G_\mu|\leq N_i$.
\end{proof}


\subsection{Wrapping-up}\label{sec:wrapping up}

To obtain Theorem~\ref{the:main} for the decision version of the problem we first compute $AP_i$ and $AS_i$, for all $i\in[1,n]$, in $\cO(nm^2+N)$ total time (Corollary~\ref{coro:AP}). We then compute all the occurrences in the Easy Cases using $\cO(N)$ time in total (Section~\ref{sec:easy case}); and we finally compute all the occurrences in the Prefix and Suffix Cases in $\sum_{i}\cO(m^2+N_i)=\cO(nm^2+N)$ total time (Lemma~\ref{lem:prefix}).

Now, to solve the decision version of the problem, we solve the Anchor Cases with the use of the precomputed $AP_{i-1}$ and $AS_{i+1}$ for each $i\in[2,n-1]$ in $\cO(m^2\sqrt{\log m}+N_i\log\log m)$ time (Section~\ref{sec: anchor case}), which gives $\cO(nm^2\sqrt{\log m}+N\log\log m)$ total time for the whole algorithm.

For the reporting version we proceed differently to obtain an on-line algorithm; note that this is possible because we can proceed without $AS_i$ (see Figure~\ref{fig:layout}). We thus consider one segment $\tilde{T}[i]$ at the time, for each $i\in[1,n]$, and do the following.
 We compute $\oneAP_i$,  as the union of three sets obtained from:
    \begin{itemize}
        \item The Suffix Case for $\tilde{T}[i]$, computed in $\cO(m^2+N_i)$ time (Lemma~\ref{lem:prefix}).
        \item Standard APE with $\oneAP_{i-1}$ as the input bit vector, computed in $\cO(m^2+N_i)$ time (Lemma~\ref{lem:APE}).
        \item Anchor Case computed from $AP_{i-1}$ in $\cO((m^2+ N_i)\log N_i)$ (Section~\ref{sec: anchor case}) or $\cO(m^3+N_i)$ time (Lemma~\ref{lem:m^3}).
    \end{itemize}
    If $N_i\ge m^3$, the algorithm of Lemma~\ref{lem:m^3} works in the optimal $\cO(m^3+N_i)=\cO(N_i)$ time, hence we can assume that the $\cO((m^2+N_i)\log N_i)$-time algorithm is only used when $N_i\le m^3$, and thus it runs in $\cO((m^2+N_i)\log m)$ time.
   Therefore over all $i$ the computations require $\cO((nm^2+N)\log m)$ or $\cO(nm^3+N)$ total time. For every segment $i$ we can also check whether an active prefix from $\oneAP_{i-1}$ or from $AP_{i-1}$ can be completed to a full match in $\tilde{T}[i]$ using the algorithms of Grossi et al.~from~\cite{DBLP:conf/cpm/GrossiILPPRRVV17} and Prefix Case, respectively, in $\cO(m^2+N_i)$ extra time.
 
By summing up all these we obtain Theorem~\ref{the:main}.

\section{1-Mismatch EDSM}\label{sec:ham}

In this section, we give an alternative to the construction presented in Section~\ref{sec: anchor case}, in the case of \textsc{$1$-Mismatch EDSM}. We do so by finding matches in a tree containing both suffixes of $P$ and elements from the segment $\tilde{T}[i]$, as well as modified versions of those strings. The number of additional strings is bounded by using the \emph{heavy-light decomposition} of Sleator and Tarjan~\cite{sleator_data_1983}. The construction is directly inspired by the one presented by Thankachan et al.~in~\cite{thankachan_provably_2016}, which is itself inspired by the \emph{$k$-errata tree} construction introduced by Cole et al.~in~\cite{cole_dictionary_2004} for indexing with errors. 
We give an algorithm to find all occurrences of $P$ in $\tilde{T}$ with $1$ mismatch by computing sets $\oneAP_i$ under Hamming distance, which, combined with the previously developed techniques, results in solving the \textsc{$1$-Mismatch EDSM} problem in $\cO(nm^2+N\log m)$ time.

Let us start with the following basic definition.

\begin{definition}[\cite{sleator_data_1983}]\label{def:hp}
Let $\mathcal{T}$ be a rooted tree.
The \emph{heavy path} of $\mathcal{T}$ is the path that starts at the root and at
each node descends to the child (called \emph{heavy node}) with the largest number of leaf nodes in its subtree (ties are broken arbitrarily). The \emph{heavy-light decomposition} of $\mathcal{T}$ is defined recursively as a union of the heavy path of $\mathcal{T}$ and the
heavy path decompositions of the off-path subtrees of the heavy path. The nodes that are not heavy nodes are called \emph{light nodes} (the root of $\mathcal{T}$ is always a light node). An edge on a heavy path is called \emph{heavy}; and the other edges are called \emph{light}.
\end{definition}

A crucial property following from Definition~\ref{def:hp} is that any root-to-leaf path crosses $\cO(\log |\mathcal{T}|)$  paths. Each light edge on a path from the root decreases the size of the descending subtree by at least half. Thus the number of light edges on a path from any node to the root is $\cO(\log |\mathcal{T}|)$.

We use the above properties to efficiently construct a tree $\mathcal{T}_{1}(P,\tilde{T}[i])$ (for a given ED text $\tilde{T}[1\dd n]$ of size $N$, a pattern $P[1\dd m]$ and an index $1\le i \le n$ with $||T[i]||=N_i$) in three steps (inspect Figure~\ref{fig:k-errata}):

\begin{description}
    \item[Step 1] We construct the compacted trie containing the strings in $\tilde{T}[i]$ and suffixes $P[j+1\dd m]$ of $P$ for each $j\in AP_{i-1}$. We call this set of suffixes of $P$ $act_{i-1}(P)$. We also add labels $(\iota(X),\#)$ to each node in the tree corresponding to a string $X$ in $act_{i-1}(P)\cup \tilde{T}[i]$, where $\iota(X)$ is a pointer to $X$ and $\#$ is a special label. This takes $\cO(m+N_i)$ time and space~\cite{DBLP:conf/focs/Farach97} (we add the suffixes of $P$ in $\cO(m)$ total time by constructing the suffix tree of $P$ and truncating the superfluous suffixes). We call $\mathcal{T}_0(P,\tilde{T}[i])$ the tree we obtain from this step.
    In the next steps it will be extended with new nodes and labels to obtain $\mathcal{T}_1(P,\tilde{T}[i])$. 
    \item[Step 2] We compute a heavy-light decomposition~\cite{sleator_data_1983} of $\mathcal{T}_{0}(P,\tilde{T}[i])$, which takes time linear in its size, namely $\cO(m+N_i)$.
    \item[Step 3] For each light node $u$ of $\mathcal{T}_{0}(P,\tilde{T}[i])$ let $u'$ be the leaf on the heavy path starting at $u$. 
   Leaf $u'$ corresponds to a string $X$, and for each labeled descendant $v$ of $u$ outside of the heavy path $u \ldots u'$, if $Y$ is the string corresponding to $v$, we compute $p=1+\textsf{LCP}(X,Y)$ (in $\cO(1)$ time after linear-time preprocessing of the tree for LCA queries~\cite{DBLP:conf/latin/BenderF00}) and add to $\mathcal{T}_1(P,\tilde{T}[i])$ the string obtained from $Y$ by replacing $Y[p]$ with $X[p]$, with a label $(\iota(Y),p)$ (a given node can store multiple labels). Intuitively, $p$ is the position of a \emph{mismatch} between (a prefix of) $X$ and (a prefix of) $Y$. Since the tree $\mathcal{T}_0(P,\tilde{T}[i])$ has $\cO(m+N_i)$ nodes and each of them has $\cO(\log(m+N_i))$ light node ancestors, there are no more than $\cO((m+N_i)\log(m+N_i))$ additional nodes and labels. Also the construction of new nodes can be done each time in $\cO(1)$ time, because we in fact just copy a subtree of a light node and merge it with the subtree of its heavy sibling. We have thus arrived at the following lemma.
\end{description}


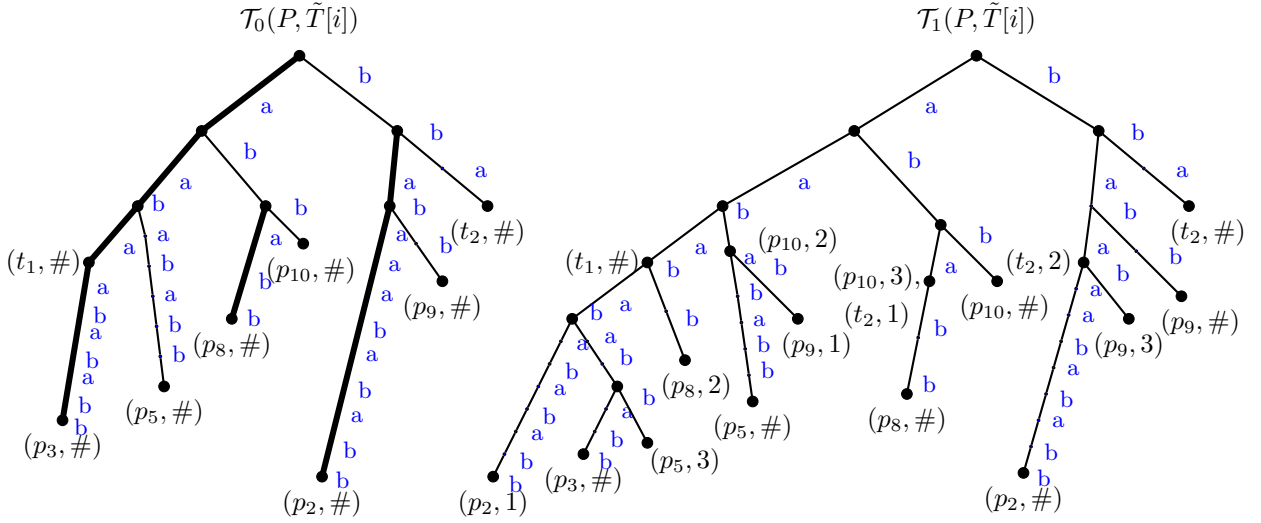
\begin{figure}
    \centering
    \begin{tikzpicture}[yscale=0.5,xscale=0.5,auto,node distance=0.1cm]

\node[] at (0,1) {$\mathcal{T}_{0}(P,\tilde{T}[i])$};
\tikzstyle{dot}=[inner sep=0.0cm, circle, draw]
\node[dot,fill=blue] (t) at (0,0) {};
\foreach \name/\dx/\dy/\parent/\l in {
  l/-2.6/2/t/a,
  ll/-1.7/2/l/a,
  lll/-1.3/1.5/ll/a,
  lll1/-0.1/0.6/lll/a,
  lll2/-0.1/0.6/lll1/b,
  lll3/-0.1/0.6/lll2/a,
  lll4/-0.1/0.6/lll3/b,
  lll5/-0.1/0.6/lll4/a,
  lll6/-0.1/0.6/lll5/b,
  lll7/-0.1/0.6/lll6/b,
  llr/0.2/0.8/ll/b,
  llr2/0.1/0.8/llr/a,
  llr3/0.1/0.8/llr2/b,
  llr4/0.1/0.8/llr3/a,
  llr5/0.1/0.8/llr4/b,
  llr6/0.1/0.8/llr5/b,
  lr/+1.7/2/l/b,
  lrl/-0.3/1/lr/a,
  lrl2/-0.3/1/lrl/b,
  lrl3/-0.3/1/lrl2/b,
  lrr/+1/1/lr/b,
  r/+2.6/2/t/b,
  rl/-0.2/2/r/a,
  rl2/-0.2/0.8/rl/a,
  rl3/-0.2/0.8/rl2/a,
  rl4/-0.2/0.8/rl3/a,
  rl5/-0.2/0.8/rl4/b,
  rl6/-0.2/0.8/rl5/a,
  rl7/-0.2/0.8/rl6/b,
  rl8/-0.2/0.8/rl7/a,
  rl9/-0.2/0.8/rl8/b,
  rl10/-0.2/0.8/rl9/b,
  rlr/+0.7/1/rl/b,
  rlr2/+0.7/1/rlr/b,
  rr/+1.2/1/r/b,
  rr2/+1.2/1/rr/a
} {
  \node[dot,fill=blue] (\name) at ($(\parent)+(\dx,-1*\dy)$) {};
  \draw[black,thick] (\parent)--(\name) node[midway,auto] {\textcolor{blue}{\small \l}};
}

\tikzstyle{dot}=[inner sep=0.05cm, circle, draw]
\foreach \name/\place in {
At/t,Al/l,All/ll,Alll/lll,Alll7/lll7,Allr/llr6,Alr/lr,Alrl/lrl3,Alrr/lrr,Ar/r,Arl/rl10,Arl/rl,Arlr2/rlr2,Arr/rr2}{
\node[dot,fill=black] (\name) at (\place) {};
}

\foreach \na/\nb in {
t/l,l/ll,ll/lll,lll/lll7,lr/lrl3,r/rl,rl/rl10}{
\draw[black,line width=0.8mm] (\na)--(\nb);
}


\foreach \place/\l/\xshift/\yshift in {
lll/${(t_1,\#)}$/-1.2/0,lll7/${(p_3,\#)}$/0/-0.7,llr6/${(p_5,\#)}$/0/-0.7,lrl3/${(p_8,\#)}$/0/-0.7,lrr/${(p_{10},\#)}$/0.2/-0.7,rl10/${(p_2,\#)}$/0/-0.7,rlr2/${(p_9,\#)}$/0/-0.7,rr2/${(t_2,\#)}$/0/-0.7}{
\node[] at ($(\place)+(\xshift,\yshift)$) {\l};
}

\node[] at (18,1) {$\mathcal{T}_{1}(P,\tilde{T}[i])$};
\tikzstyle{dot}=[inner sep=0.0cm, circle, draw]
\node[dot,fill=blue] (t) at (18,0) {};
\foreach \name/\dx/\dy/\parent/\l in {
  l/-3.25/2/t/a,
  ll/-3.5/2/l/a,
  lll/-2/1.5/ll/a,
  llll/-2/1.5/lll/a,
  llll1/-0.3/0.6/llll/a,
  lllll/-0.3/0.6/llll1/b,
  lllll1/-0.3/0.6/lllll/a,
  lllll2/-0.3/0.6/lllll1/b,
  lllll3/-0.3/0.6/lllll2/a,
  lllll4/-0.3/0.6/lllll3/b,
  lllll5/-0.3/0.6/lllll4/b,
  llllr/0.4/0.6/llll/b,
  llllr1/0.4/0.6/llllr/a,
  llllr2/0.4/0.6/llllr1/b,
  llllrl/-0.3/0.6/llllr2/a,
  llllrl1/-0.3/0.6/llllrl/b,
  llllrl2/-0.3/0.6/llllrl1/b,
  llllrr/0.8/1.5/llllr2/b,
  lllr/0.5/1.3/lll/b,
  lllr1/0.5/1.3/lllr/b,  
  llr/0.2/1.2/ll/b,
  llr2/0.2/1.2/llr/a,
  llr3/0.1/0.7/llr2/b,
  llr4/0.1/0.7/llr3/a,
  llr5/0.1/0.7/llr4/b,
  llr6/0.1/0.7/llr5/b,
  llrr/1.8/1.8/llr/b,
  lr/+2.3/2.5/l/b,
  lrl/-0.3/1.5/lr/a,
  lrl2/-0.3/1.5/lrl/b,
  lrl3/-0.3/1.5/lrl2/b,
  lrr/+1.5/1.5/lr/b,
  r/+3.25/2/t/b,
  rl/-0.2/2/r/a,
  rl2/-0.2/1.5/rl/a,
  rl3/-0.2/0.7/rl2/a,
  rl4/-0.2/0.7/rl3/a,
  rl5/-0.2/0.7/rl4/b,
  rl6/-0.2/0.7/rl5/a,
  rl7/-0.2/0.7/rl6/b,
  rl8/-0.2/0.7/rl7/a,
  rl9/-0.2/0.7/rl8/b,
  rl10/-0.2/0.7/rl9/b,
  rllr/+1.2/1.5/rl2/b,
  rlr/+1.2/1.2/rl/b,
  rlr2/+1.2/1.2/rlr/b,
  rr/+1.2/1/r/b,
  rr2/+1.2/1/rr/a
} {
  \node[dot,fill=blue] (\name) at ($(\parent)+(\dx,-1*\dy)$) {};
  \draw[black,thick] (\parent)--(\name) node[midway,auto] {\textcolor{blue}{\small \l}};
}

\tikzstyle{dot}=[inner sep=0.05cm, circle, draw]
\foreach \name/\place in {
At/t,Al/l,All/ll,Alll/lll,Allll/llll,Alllll5/lllll5,Allllr2/llllr2,Allllrl2/llllrl2,Allllrr/llllrr,Allr/llr,Allr6/llr6,Allrr/llrr,Alllr1/lllr1,Alr/lr,Alrl/lrl,Alrl3/lrl3,Alrr/lrr,Ar/r,Arl2/rl2,Arl10/rl10,Arllr/rllr,Arlr2/rlr2,Arr2/rr2}{
\node[dot,fill=black] (\name) at (\place) {};
}


\foreach \place/\l/\xshift/\yshift in {
lll/${(t_1,\#)}$/-1.2/0,llr/${(p_{10},2)}$/+1.8/+0.3,llllrl2/${(p_3,\#)}$/0/-0.7,llllrr/${(p_5,3)}$/1/-0.5,lllr1/${(p_8,2)}$/+0.3/-0.7,llr6/${(p_5,\#)}$/0/-0.7,llrr/${(p_9,1)}$/0.5/-0.7,lrl/${(p_{10},3),}$/-1.4/+0.1,lrl/${(t_2,1)}$/-1.4/-0.9,lrl3/${(p_8,\#)}$/0/-0.7,lrr/${(p_{10},\#)}$/0.2/-0.7,lllll5/${(p_2,1)}$/0/-0.7,rl2/${(t_2,2)}$/-1.2/0,rl10/${(p_2,\#)}$/0/-0.7,rllr/${(p_9,3)}$/0/-0.7,rlr2/${(p_9,\#)}$/0.5/-0.7,rr2/${(t_2,\#)}$/0.5/-0.7}{
\node[] at ($(\place)+(\xshift,\yshift)$) {\l};
}

\end{tikzpicture}
\caption{$\mathcal{T}_0(P,\tilde{T}[i])$ and $\mathcal{T}_1(P,\tilde{T}[i])$ for the example from Figure~\ref{fig:rectangles} ($P=bbaaaabababb$, $AP_{i-1}=\{1,2,4,7,8,9\}$, $\tilde{T}[i]=\{aaa,bba\}$) with labels ($p_j=\iota(P[j\dd m]), t_j=\iota(\tilde{T}[i][j])$) and heavy paths.}\label{fig:k-errata}
\end{figure}

\begin{lemma}\label{Lem:tree size}
The construction of $\mathcal{T}_1(P,\tilde{T}[i])$ takes $\cO((m+N_i)\log(m+N_i))$ time and space.
\end{lemma}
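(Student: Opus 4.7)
The plan is to bound the cost of each of the three construction steps separately, with the bulk of the analysis devoted to Step 3.

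First I would handle Steps 1 and 2. The compacted trie $\mathcal{T}_0(P,\tilde{T}[i])$ from Step 1 contains the strings of $\tilde{T}[i]$ (of total length $N_i$) together with the suffixes in $act_{i-1}(P)$, which all lie in the suffix tree of $P$ and can be extracted by truncating at most $m$ suffixes. Thus $\mathcal{T}_0(P,\tilde{T}[i])$ has $\cO(m+N_i)$ nodes and is built in $\cO(m+N_i)$ time by standard generalized suffix tree construction~\cite{DBLP:conf/focs/Farach97}. I would also perform the $\cO(m+N_i)$-time preprocessing of~\cite{DBLP:conf/latin/BenderF00} so that LCA queries (and hence $\textsf{LCP}$ values between any two labeled strings) are answered in $\cO(1)$ time. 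Step 2 (heavy-light decomposition) is a standard linear-time traversal~\cite{sleator_data_1983}, so it costs $\cO(m+N_i)$.

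The heart of the proof is Step 3, where I would bound both the number of new strings that are inserted and the work per insertion. For the count, I would use a double-counting argument on the set of pairs $(u,v)$ such that $u$ is a light node of $\mathcal{T}_0(P,\tilde{T}[i])$ and $v$ is a labeled descendant of $u$ that lies outside the heavy path $u \ldots u'$: each such pair accounts for exactly one inserted string in $\mathcal{T}_1(P,\tilde{T}[i])$. Fixing $v$, the pairs it contributes to are indexed by the light ancestors of $v$, and the fundamental property of the heavy-light decomposition (Definition~\ref{def:hp}) guarantees that any root-to-node path crosses only $\cO(\log |\mathcal{T}_0(P,\tilde{T}[i])|) = \cO(\log(m+N_i))$ light edges, hence $v$ has $\cO(\log(m+N_i))$ light ancestors. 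Since the number of labeled nodes equals $|act_{i-1}(P)| + |\tilde{T}[i]| = \cO(m+N_i)$, summing over $v$ yields $\cO((m+N_i)\log(m+N_i))$ such pairs in total, which bounds the number of added nodes and labels.

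For the work per pair, I would argue that once $u$, $v$, and $u'$ are known, the mismatch position $p = 1+\textsf{LCP}(X,Y)$ is obtained by a single $\cO(1)$-time LCA query, and the new string $Y'$ is inserted into $\mathcal{T}_1(P,\tilde{T}[i])$ in $\cO(1)$ time: the path of $Y'$ agrees with the heavy path from $u$ up to depth $p-1$, branches at the already-existing node at depth $p$ via the letter $X[p]$, and from there one only has to attach a pointer to the subtree of $v$ (equivalently, splice together the light subtree with its heavy sibling as noted in the construction), labelled with the new pair $(\iota(Y),p)$. The enumeration of all relevant pairs $(u,v)$ can be done by a single top-down traversal of $\mathcal{T}_0(P,\tilde{T}[i])$ that, at each light node $u$, walks its heavy path once and visits the off-path labeled descendants via pointers maintained during the heavy-light decomposition, at amortized $\cO(1)$ cost per pair. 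Combining this with the counting argument yields $\cO((m+N_i)\log(m+N_i))$ total time and space for Step 3. Adding the $\cO(m+N_i)$ costs of Steps 1 and 2 gives the claimed bound.

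The most delicate point I expect is making the $\cO(1)$ per-insertion claim fully rigorous, because the new branch conceptually continues the suffix of $Y$ after position $p$; the key observation saving us is that this continuation already exists as a subtree in $\mathcal{T}_0(P,\tilde{T}[i])$ and is merely referenced (and counted with the labels of the copied subtree absorbed in the $\cO((m+N_i)\log(m+N_i))$ bound on labels), so no actual string is re-scanned character by character.
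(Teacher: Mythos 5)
Your proposal is correct and follows essentially the same route as the paper: linear-time construction of $\mathcal{T}_0(P,\tilde{T}[i])$ via~\cite{DBLP:conf/focs/Farach97} with LCA preprocessing~\cite{DBLP:conf/latin/BenderF00}, linear-time heavy-light decomposition, and for Step~3 the standard charging argument that every node has $\cO(\log(m+N_i))$ light ancestors, giving $\cO((m+N_i)\log(m+N_i))$ added nodes and labels, each created in $\cO(1)$ time by an LCA/$\textsf{LCP}$ query plus copying a light subtree and merging it with its heavy sibling. Your double-counting of pairs $(u,v)$ and the explicit remark that the copied continuation of $Y$ is only referenced (not re-scanned) just make explicit what the paper's construction description asserts, so there is nothing to fix.
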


We now prove that the tree $\mathcal{T}_{1}(P,\tilde{T}[i])$ satisfies the following property.

\begin{lemma}\label{Lem:string nodes condition}

Let $X\in act_{i-1}(P)$.
A string $Y\in \tilde{T}[i]$ is at Hamming distance at most $1$ from a prefix of $X$ having length $|Y|$ if and only if $\mathcal{T}_1(P,\tilde{T}[i])$ contains two nodes $u$, $v$ respectively labeled by $(\iota(X),p)$ and $(\iota(Y),p')$, for some $p,p'\in \mathbb{N}\cup\{\#\}$, such that $u$ is a descendant of $v$, and one of the following is satisfied:
\begin{itemize}
    \item $p=p'\in\mathbb{N}$
    \item $p=\#$ or $p'=\#$.
\end{itemize}

\end{lemma}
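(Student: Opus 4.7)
My plan is to prove both directions separately, with the forward direction requiring the main effort and the backward direction following from a direct semantic unpacking of the labels.

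For the backward direction I would interpret each labeled node $w$ carrying a label $(\iota(Z),q)$ as representing a string $\sigma(w)$ of length $|Z|$ equal to $Z$ if $q=\#$ and equal to $Z$ with a single letter replacement at position $q$ if $q\in\mathbb{N}$ (by the explicit construction in Step~3). Since depth in $\mathcal{T}_1$ equals string length and $u$ descends from $v$, we have $\sigma(v)=\sigma(u)[1\dd|Y|]$. A short case analysis on the pair $(p,p')$ then yields $d_H(X[1\dd|Y|],Y)\le 1$: if both labels are $\#$ the two strings coincide on $[1\dd|Y|]$; if exactly one is $\#$, the single modification introduces at most one mismatch; and if $p=p'\in\mathbb{N}$ the two modifications happen at the same position with the same replacement letter, forcing $X$ and $Y$ to agree on $[1\dd|Y|]\setminus\{p\}$.

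For the forward direction, let $n_X,n_Y$ denote the nodes of $\mathcal{T}_0$ carrying the labels $(\iota(X),\#),(\iota(Y),\#)$. I would first dispose of $d_H(X[1\dd|Y|],Y)=0$: then $n_Y$ is an ancestor of $n_X$ in $\mathcal{T}_0\subseteq\mathcal{T}_1$, so their original $\#$-labels witness the claim. The nontrivial case is Hamming distance exactly $1$ at some position $p$, and here the core idea is to anchor the search at the heavy path through $a=\mathrm{LCA}(n_X,n_Y)$, which sits at depth $p-1$. Let $w$ be the light top of $a$'s heavy path, let $E(w)$ be the leaf where this heavy path ends, and let $Z^*$ be the string at $E(w)$; let $c_X,c_Y,c_h$ be the children of $a$ on the paths to $n_X,n_Y,E(w)$ respectively (note $c_X\neq c_Y$). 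Since $n_X$ and $n_Y$ lie in distinct subtrees of $a$, at most one of them is on $w$'s heavy path, and both are labeled descendants of $w$.

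I would then split on which of $c_X,c_Y$ coincides with $c_h$. If $c_h\notin\{c_X,c_Y\}$, then neither $n_X$ nor $n_Y$ is on $w$'s heavy path; Step~3 adds two new nodes respectively labeled $(\iota(X),p)$ and $(\iota(Y),p)$ whose strings differ from $X$ and $Y$ only at position $p$, where both are replaced by the same letter $Z^*[p]$, so the modified $Y$-string is a prefix of the modified $X$-string, placing these nodes in ancestor-descendant relation. If $c_h=c_X$, then $Z^*[p]=X[p]$ because $Z^*$ extends $X$ to depth at least $p$; only $n_Y$ is outside $w$'s heavy path, and Step~3 adds a node labeled $(\iota(Y),p)$ representing the string $X[1\dd|Y|]$, an ancestor of $n_X$ in $\mathcal{T}_1$, so pairing with $u=n_X$ (still labeled $(\iota(X),\#)$) completes the argument. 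The case $c_h=c_Y$ is symmetric. Verifying in each sub-case that $1+\textsf{LCP}(Z^*,\cdot)=p$ reduces to observing that $Z^*[p]\neq Y[p]$ (resp.~$\neq X[p]$), which follows because the first letters on the distinct edges out of $a$ are all distinct.

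The main obstacle will be the forward direction with $d_H=1$, specifically the choice of light node to invoke. A naive attempt using the heavy-path top of $n_X$ or of $n_Y$ in isolation can fail when both of these heavy-path tops lie strictly below $a$, since then neither has the other node in its subtree; anchoring on the top of $a$'s heavy path bypasses this because that light node's subtree always contains both $n_X$ and $n_Y$, after which the trichotomy on $c_h$ is routine.
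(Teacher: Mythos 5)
Your proof is correct and takes essentially the same route as the paper's: your anchor node $w$ (the light top of the heavy path through $a=\mathrm{LCA}(n_X,n_Y)$) is exactly the paper's ``lowest common light ancestor'' of the two nodes, and your trichotomy on $c_h$ is precisely the paper's case analysis on whether the heavy-path string $Z^*$ disagrees at position $p$ with $X$, with $Y$, or with both. Your backward direction likewise mirrors the paper's, reading each label $(\iota(Z),q)$ as the singly-modified string and using the ancestor/prefix relation, with your $\sigma(\cdot)$ formulation making explicit the prefix comparison the paper performs on $\hat{X}$ and $\hat{Y}$.
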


\begin{proof}
For the forward implication, if $Y$ is a prefix of $X$ then the claim is trivial since $\mathcal{T}_0(P,\tilde{T}[i])$ contains nodes with labels $(\iota(X),\#)$ and $(\iota(Y),\#)$, and thus the first node is a descendant of the second one.
Now, we assume that $Y$ has one mismatch with $X'=X[1\dd |Y|]$ at a position $p$.
Let $u$, $v$ be nodes in $\mathcal{T}_{0}(P,\tilde{T}[i])$ respectively corresponding to $X$ and $Y$, and let $w$ be their lowest common light ancestor. Let $Z$ be the string corresponding to the leaf on the heavy path starting at the heavy child of $w$. 
Since $X$ and $Y$ have a mismatch at position $p$, at least one of them has a mismatch with $Z$ at position $p$ and there are no mismatches to the left of $p$.
Indeed, suppose towards a contradiction that there exists some $p'<p$ such that $Z[p']\neq X[p'](=Y[p'])$: then the node corresponding to $X[1\dd p'](=Y[1\dd p'])$ would not be on the heavy path corresponding to $Z$, but would be a common ancestor of $u$ and $v$, and thus $w$ would not be the lowest common light ancestor of $u$ and $v$, a contradiction.

Assume first that $X[p]\neq Z[p]$. Then, there is a node with a label $(\iota(X),p)$ in the tree, which is a descendant of either $v$, having label $(\iota(Y),\#)$ (if $Y[p]=Z[p]$), or a node having label $(\iota(Y),p)$ (if $Y[p]\neq Z[p]$), because we assumed that $X$ and $Y$ do not have any other mismatch. Finally, if $X[p]=Z[p]$, then $Y[p]\neq Z[p]$ and the node with label $(\iota(Y),p)$ is an ancestor of $u$, having label $(\iota(X),\#)$. 

To prove the reverse implication, let us assume that the consequences are satisfied. Let $u$ be the node whose label contains $(\iota(X),p)$ and $v$ the node whose label contains $(\iota(Y),p')$. We first assume $p=p'\in\mathbb{N}$.
Note that, by the construction of $\mathcal{T}_1(P,\tilde{T}[i])$, the node $u$ (resp. $v$) corresponds to a string obtained by one letter modification on $X$ (resp. on $Y$) at the same position $p$. We denote the resulting string $\hat{X}$ (resp $\hat{Y}$). Since $v$ is an ancestor of $u$ in $\mathcal{T}_1(P,\tilde{T}[i])$, $\hat{Y}$ is a prefix of $\hat{X}$. But this exactly means that $Y$ has Hamming distance $1$ with the length $|Y|$ prefix of $X$ (or Hamming distance $0$ if both replacements replaced the same letter).
If the second condition is satisfied, namely if $p=\#$ or $p'=\#$, then it means that one replacement in $Y$ gives $\hat{Y}$ which is a prefix of $X$, or that $Y$ is a prefix of $\hat{X}$, which is one replacement away from $X$, therefore we have the claimed result.
\end{proof}

We next formalize how to find nodes satisfying one of the conditions from Lemma~\ref{Lem:string nodes condition} and deduce the approximate active prefixes corresponding to the Anchor Case for segment $\tilde{T}[i]$. Let $v_1\ \mathtt{OR}\ v_2$ denote a bitwise OR of two vectors, and $v_1 \oplus x$ denote vector $v_1$ shifted by $x$ positions to the right (the first $x$ positions are set to $0$).

\begin{algorithm}[ht]\caption{Search($\mathcal{T}$)}\label{Alg: k errata}
\begin{algorithmic}[1]
\State{\textbf{Global variables}: the set $act_{i-1}(P)$, a segment $\tilde{T}[i]$, and bit vectors $V_{\#}$, $V_1$, $\ldots$, $V_m$, $V_{ANY}$, $V_{res}$ all of size $|P|+1$, and initially set to all $0$'s}
\State{\textbf{Input}: $\mathcal{T}$ - a subtree of $\mathcal{T}_1(P,\tilde{T}[i])$ with root $r$}
\State{\textbf{Output}: represented by global bit vector $V_{res}$}
\For{each label $(\iota(X),p)$ with $X\in \tilde{T}[i]$ on $r$}
\State{set $V_p[|X|]$ and $V_{ANY}[|X|]$ to $1$}
\EndFor
\For{each label $(\iota(X),p)$ with $X\in act_{i-1}(P)$ on $r$}
\If{$p=\#$}
\State{update $V_{res}$ to $V_{res}\ \mathtt{OR}\ (V_{ANY}\oplus (m-|X|))$.}
\Else{~update $V_{res}$ to $V_{res}\ \mathtt{OR}\ ((V_p\ \mathtt{OR}\ V_{\#} ) \oplus (m-|X|))$}
\EndIf
\EndFor
\For{each $\mathcal{T}'$ a subtree of $\mathcal{T}$ rooted at a child of $r$}
\State{run Search($\mathcal{T}'$)}
\EndFor
\For{each label $(\iota(X),p)$ with $X\in \tilde{T}[i]$ on $r$}
\State{set $V_p[|X|]$ and $V_{ANY}[|X|]$ to $0$}
\EndFor
\end{algorithmic}
\end{algorithm}

\begin{proposition}\label{prop:k-errata}
Algorithm~\ref{Alg: k errata} with input $\mathcal{T}_1(P,\tilde{T}[i])$ returns $V_{res}$ such that $V_{res}[p]=1$ if and only if $p$ is an element of $\oneAP_i$ corresponding to the Anchor Case for segment $\tilde{T}[i]$. Algorithm~\ref{Alg: k errata} runs in $\cO((m+N_i)\log(m+N_i)+m^2)$ time.
\end{proposition}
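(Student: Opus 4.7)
The plan is to prove correctness by maintaining a DFS invariant on the auxiliary bit vectors, and then to bound the running time by combining the size bound on $\mathcal{T}_1(P,\tilde{T}[i])$ from Lemma~\ref{Lem:tree size} with word-level parallelism on the bit-vector operations.

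For correctness, I would argue the following invariant holds throughout the execution of Search: at any moment, $V_{q'}[|Y|]=V_{ANY}[|Y|]=1$ precisely when some label $(\iota(Y),q')$ with $Y\in\tilde{T}[i]$ lies on a node whose Search-call is currently on the call stack (entered but not yet returned), and all other entries of these vectors are $0$. The invariant is preserved because the set and reset blocks for $\tilde{T}[i]$-labels are placed symmetrically at the head and tail of each call. Consequently, when Search processes a label $(\iota(X),q)$ with $X\in act_{i-1}(P)$ at a node $u$, the bits available in $V_q$, $V_\#$ and $V_{ANY}$ encode exactly those ancestors $v$ of $u$ carrying a $\tilde{T}[i]$-label, including $Y$-labels co-located with $u$ (since $\tilde{T}[i]$-labels are processed strictly before $act_{i-1}(P)$-labels at each node). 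The update $V_{res}\leftarrow V_{res}\ \mathtt{OR}\ ((V_q\ \mathtt{OR}\ V_\#)\oplus(m-|X|))$, or its $V_{ANY}$-variant when $q=\#$, therefore switches on bit $(m-|X|)+|Y|$ whenever Lemma~\ref{Lem:string nodes condition} qualifies the pair $(X,Y)$. Writing $X=P[\lambda+1\dd m]$ with $\lambda\in AP_{i-1}$, we have $m-|X|=\lambda$, so the bit set is indexed by $p=\lambda+|Y|$, which is exactly the end position in $P$ of an Anchor-Case $1$-mismatch occurrence where $Y$ matches $P[\lambda+1\dd\lambda+|Y|]$ with at most one mismatch. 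Invoking both directions of Lemma~\ref{Lem:string nodes condition} then yields the stated ``iff''.

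For the running time, Lemma~\ref{Lem:tree size} gives $\cO((m+N_i)\log(m+N_i))$ total nodes and labels in $\mathcal{T}_1(P,\tilde{T}[i])$, accounting for the DFS skeleton and for all $\cO(N_i\log(m+N_i))$ labels with $X\in\tilde{T}[i]$, each of which entails only a constant number of bit flips. The dominant cost comes from the labels with $X\in act_{i-1}(P)$: there are $|act_{i-1}(P)|\le|AP_{i-1}|\le m$ such strings, each replicated $\cO(\log(m+N_i))$ times by the Step~3 construction, giving $\cO(m\log(m+N_i))$ labels. Each triggers one shift-and-OR on a length-$(m+1)$ bit vector. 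On a word RAM with word size $w=\Theta(\log(m+N_i))$, such an operation takes $\cO(m/w+1)$ time, so these labels together contribute $\cO(m\log(m+N_i)\cdot(m/\log(m+N_i)+1))=\cO(m^2+m\log(m+N_i))$. Summing with the skeleton yields the advertised $\cO((m+N_i)\log(m+N_i)+m^2)$.

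The main obstacle I foresee is the boundary case where the qualifying $Y\in\tilde{T}[i]$ and $X\in act_{i-1}(P)$ sit on the same node of $\mathcal{T}_1(P,\tilde{T}[i])$: the invariant must line up with Lemma~\ref{Lem:string nodes condition}'s use of ``descendant,'' and this forces the deliberate ordering of the two label-processing blocks inside Search. The secondary subtlety is that shaving the $\log(m+N_i)$ factor off the $m^2\log(m+N_i)$ naive bound requires explicitly invoking word-level parallelism rather than bit-by-bit manipulation of $V_{res}$ and the $V_p$'s.
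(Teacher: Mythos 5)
Your proposal is correct and follows essentially the same route as the paper's own proof: your stack-based invariant is exactly the paper's DFS argument (including the key point that a bit at position $|Y|$ cannot be overwritten before $u$ is reached, which holds because distinct nodes on a root-to-path of the compacted trie have distinct string depths, and the deliberate ordering of the two label-processing blocks for co-located labels), combined with both directions of Lemma~\ref{Lem:string nodes condition} and the identification of the bit index $(m-|X|)+|Y|=\lambda+|Y|$ with an element of $\oneAP_i$. Your running-time accounting---$\cO((m+N_i)\log(m+N_i))$ labels in total by Lemma~\ref{Lem:tree size}, with the $\cO(m\log(m+N_i))$ labels for suffixes of $P$ each costing $\cO(m/\log(m+N_i)+1)$ by word-level parallelism---is the paper's analysis essentially verbatim.
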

\begin{proof}
We first need the following remark: if $P[1\dd k]$ extends into $P[1\dd k']$ in $\tilde{T}[i]$, that means that some $Y\in \tilde{T}[i]$ is at Hamming distance 1 from the prefix $P[k+1\dd k']$ of $P[k+1\dd |P|]$. Therefore, we are looking for the pairs described in Lemma~\ref{Lem:string nodes condition}. We show then that $V_{res}[k']=1$ after the end of the procedure if and only if there is a pair of nodes $u,v$ in $\mathcal{T}_{1}(P,\tilde{T}[i])$ satisfying the conditions of Lemma~\ref{Lem:string nodes condition} for $X=P[k+1\dd |P|]$, $Y\in \tilde{T}[i]$, and $|Y|=k'-k$.

Let us assume the existence of such a pair $(u,v)$. Since the tree is traversed in a DFS, the node $v$ (with a label $(\iota(Y),p')$, $p'\in\mathbb{N}\cup\{\#\}$) is traversed before $u$, which is its descendant; and at this moment, $V_{p'}[|Y|]$ is set to $1$, as well as $V_{ANY}[|Y|]$. Since $u$ is a descendant of $v$, the vectors are not modified at position $|Y|$ until $u$ is visited: that would mean that $v$ has a strict descendant representing a string of the same length as the string represented by $v$. When the label $(\iota(X),p)$ for $X\in act_{i-1}(P)$ is visited on $u$, we set the position $(m-|X|)+|Y|=k'$ of $V_{res}$ to $1$ if $V_p[|Y|]=1$ or $V_{\#}[|Y|]=1$ (which happens if $p=p'\in\mathbb{N}\cup\{\#\}$ or if $p'=\#$) and when $p=\#$ if $V_{ANY}[|Y|]=1$.

Vice versa, if after the processing one has $V_{res}[k']=1$, this means that at some point in the DFS a node $u$ having a label $(\iota(X),p)$  with $X\in act_{i-1}(P)$ and $p\in\mathbb{N}\cup\#$ was visited, and that at this point, for $k=m-|X|$, one had $V_{ANY}[k'-k]=1$ or $V_{p'}[k'-k]=1$ for $(p,p')$ satisfying the conditions from Lemma~\ref{Lem:string nodes condition}. This one had to be set previously in the DFS at a node $v$ having label $(\iota(Y),p')$ for $Y\in \tilde{T}[i]$ with $|Y|=k'-k$. Finally,  an ancestor of $u$ can be chosen as such $v$, because otherwise, from the DFS traversal order, the corresponding component of the vectors would have been set to $0$. Now, the pair of nodes $(u,v)$ satisfy the conditions of Lemma~\ref{Lem:string nodes condition}, and from our observations that means that there is an active prefix with $1$ error of $P$ having length $k$, extending up to $\tilde{T}[i]$.

The running time follows from the fact that the algorithm visits only $\cO((m+N_i)\log(m+N_i))$ labels by Lemma~\ref{Lem:tree size}, and from the fact that the tree is traversed in a DFS. The analysis of each label consists in reading it and doing a constant number of bit modifications in the stored vectors, and, for $\cO(m \log(m+N_i))$ of them (the one corresponding to a suffix of $P$), doing an $\mathtt{OR}$ operation which takes $\cO(\frac{m}{\log (N+m)})$ time in the word RAM model. This gives us the required running time.
\end{proof}

\begin{corollary}
\textsc{$1$-Mismatch EDSM} can be solved in $\cO(nm^2+N\log m)$ time.
\end{corollary}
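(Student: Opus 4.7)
The plan is to reuse the on-line scheme laid out in Section~\ref{sec:wrapping up} for the edit-distance algorithm, replacing only the Anchor-Case module with the $k$-errata tree algorithm of Proposition~\ref{prop:k-errata}. Concretely, I would process the segments $\tilde{T}[1],\ldots,\tilde{T}[n]$ one at a time and, for each $i$, compute $AP_i$ and $\oneAP_i$ from $AP_{i-1}$ and $\oneAP_{i-1}$ according to the four cases of Proposition~\ref{prop:occs}; occurrences are reported by extending $\oneAP_{i-1}$ through exact matches in $\tilde{T}[i]$ (the no-error Prefix-Case algorithm of Grossi et al.~\cite{DBLP:conf/cpm/GrossiILPPRRVV17}) and by applying the Prefix-Case algorithm of Section~\ref{sec:pref case} (restricted to $1$ mismatch) to $AP_{i-1}$.

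For a fixed segment $\tilde{T}[i]$, the Easy Case costs $\cO(N_i)$ via Lemma~\ref{lem:kPM}; the Prefix and Suffix Cases each cost $\cO(m^2+N_i)$ via Lemma~\ref{lem:prefix} (the construction there already targets single substitutions and thus applies verbatim under Hamming distance); one pass of the standard APE procedure (Lemma~\ref{lem:APE}) extends $\oneAP_{i-1}$ inside $\tilde{T}[i]$ in $\cO(m^2+N_i)$ time; and the Anchor Case, via Proposition~\ref{prop:k-errata}, costs $\cO((m+N_i)\log(m+N_i)+m^2)$. Summing the $m^2$ contributions over all $n$ segments yields the $\cO(nm^2)$ term, while the Easy, Prefix and Suffix pieces contribute only $\cO(N)$.

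The remaining task, and the main obstacle, is to bound $\sum_{i=1}^{n}(m+N_i)\log(m+N_i)$ by $\cO(nm^2+N\log m)$, since a naive application of $\log(m+N_i)\le\log(m+N)$ only gives $\cO(N\log N)$. I would resolve this with the segment-size dichotomy already employed in Section~\ref{sec:wrapping up}: if $N_i\ge m^3$, switch the Anchor-Case computation to the folklore grid-based $\cO(m^3+N_i)$ algorithm of Lemma~\ref{lem:m^3} (whose 2d rectangle stabbing reduction was described precisely for the $1$-mismatch Anchor Case and therefore applies as is), which costs $\cO(N_i)$ because $m^3\le N_i$; if instead $N_i<m^3$, then $\log(m+N_i)=\cO(\log m)$, so the $k$-errata-tree cost collapses to $\cO((m+N_i)\log m+m^2)$.

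Summing the two regimes, the large-$N_i$ segments contribute $\cO(N)$ in total, while the small-$N_i$ segments contribute $\cO(nm^2+(nm+N)\log m)=\cO(nm^2+N\log m)$. Combined with the $\cO(nm^2+N)$ preprocessing of $AP_i$ via Corollary~\ref{coro:AP} and the Easy/Prefix/Suffix bounds above, the overall running time is $\cO(nm^2+N\log m)$, which establishes the corollary.
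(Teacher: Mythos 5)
Your proposal is correct and follows essentially the same route as the paper: the paper's proof also runs the reporting scheme of Section~\ref{sec:wrapping up} verbatim, swapping in Proposition~\ref{prop:k-errata} for the Anchor Case exactly when $N_i\le m^3$ (where $\log(m+N_i)=\cO(\log m)$) and keeping the $\cO(m^3+N_i)=\cO(N_i)$ algorithm of Lemma~\ref{lem:m^3} when $N_i\ge m^3$. Your accounting of the remaining cases (Easy, Prefix/Suffix via Lemma~\ref{lem:prefix}, APE via Lemma~\ref{lem:APE}) and the absorption of $nm\log m$ into $nm^2$ are both sound and match the paper's intended bookkeeping.
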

\begin{proof}
We proceed in the same way as in the reporting version of Section~\ref{sec:wrapping up}; the only difference is that, when $N_i\le m^3$, to extend $AP_{i-1}$ into $\oneAP_i$, instead of using the $\cO((m^2+N_i)\log m)$-time algorithm, we use the one from Proposition~\ref{prop:k-errata}. Due to this change, the algorithm runs in the desired time.
Indeed, notice that when $N_i\le m^3$, $\cO((N_i+m)\log(m+N_i)+m^2)=\cO(N_i\log m+m^2)$, and when $N_i\ge m^3$, $\cO(m^3+N_i)=\cO(N_i)$. The total time is thus $\cO(nm^2+N\log m)$.
\end{proof}

\section{Open Questions}\label{sec:con}

While our techniques (Sections~\ref{sec:edit} and~\ref{sec:ham}) seem to generalize relatively easily to $k$ errors,
they would incur some exponential factor with respect to $k$.
We leave the following basic questions open:

\begin{enumerate}
    \item Can we design an $\cO(nm^2 + N)$-time algorithm for $1$-EDSM under edit or Hamming distance?
    \item Can our techniques be efficiently generalized for $k>1$ errors or mismatches? 
    \item Can our Hamming distance improvement for 1 mismatch (Section~\ref{sec:ham}) be extended to edit distance?
\end{enumerate}

\bibliographystyle{plain}
\bibliography{references}
\end{document}

\section{Elastic-Degenerate String Matching with $1$ Error}

We first show the following theorem, which we then improve.

\begin{theorem}
Given a pattern $P$ of length $m$ and an ED text $T$ of length $n$ and size $N$,
EDSM with $1$ error can be solved in $\cO((nm^3 + N)\log N)$ time.
\end{theorem}

\paragraph{Preprocessing.}~We call \emph{active prefix} for $\tilde{T}[i]$ of $T$, a nonempty prefix of $P$ that has been matched by sets $i-d,\ldots,i-2,i-1,i$, for some $d$, such that $i-d>0$. We precompute the set of active prefixes and the set of active suffixes (by reversing the strings and the pattern) for every $\tilde{T}[i]$ of strings. This takes 
$\cO(nm^2 + N)$ total time and $\cO(nm)$ space~\cite{DBLP:conf/cpm/GrossiILPPRRVV17}. We also find all exact occurrences of $P$ within the same time complexity.

We will process one $\tilde{T}[i]$ at a time, from left to right, and focus on occurrences of $P$ in $T$ with exactly 1 error. This will solve our problem by taking the union of occurrences in the end.

We will consider the following two types of occurrences. In the Easy Case, the starting and ending position of an occurrence is the same $\tilde{T}[i]$. In the difficult case, the starting and the ending positions are two distinct sets. The starting position is set $j$ and the ending position is set $j'>j$. We solve the difficult case by finding an anchor $\tilde{T}[i]$, which lies in between $j,j'$. There are three subcases: either $j=i,i\neq j'$; or $j\neq i,i\neq j'$; or $j\neq i,i=j'$.

\subsection{Easy Case: pattern $P$ occurs entirely in $\tilde{T}[i]$}

In this case, the starting and the ending position of the occurrence is $\tilde{T}[i]$: pattern $P$ occurs entirely in one (any) string of $\tilde{T}[i]$.
Let $x_1,x_2,\ldots$ denote the strings from $\tilde{T}[i]$ whose lengths are greater or equal than $m$. We solve string matching with 1 error for pattern $P$ and every text $x_j$ separately, for all $x_j$ in $\tilde{T}[i]$ such that $|x_j|\geq m$, in $\cO(N_i)$ total time~\cite{DBLP:journals/jcss/LandauV88}, where $N_i$ is the size of the $i$th set. If for any $x_j$, we find an occurrence of $P$, then we report $\tilde{T}[i]$. 

\subsection{Difficult case: pattern $P$ is matched across sets $j,\ldots,i,\ldots,j'$}

We have three cases: $j\neq i,i\neq j'$ (TOP) and $j\neq i,i=j'$ (BOTTOM) in the figure below---the $j\neq i,i=j'$ case has its symmetric case ($j=i,i\neq j'$), which is handled in the same way if we reverse $P$ and the EDS, and so we avoid mentioning it explicitly. In case $j\neq i,i\neq j'$, $P$ is matched from left, from right, and in $\tilde{T}[i]$, where the 1 error lies. In case $j\neq i,i=j'$, $P$ is matched from left and in $\tilde{T}[i]$, where the 1 error lies. 

\includegraphics[width=7cm]{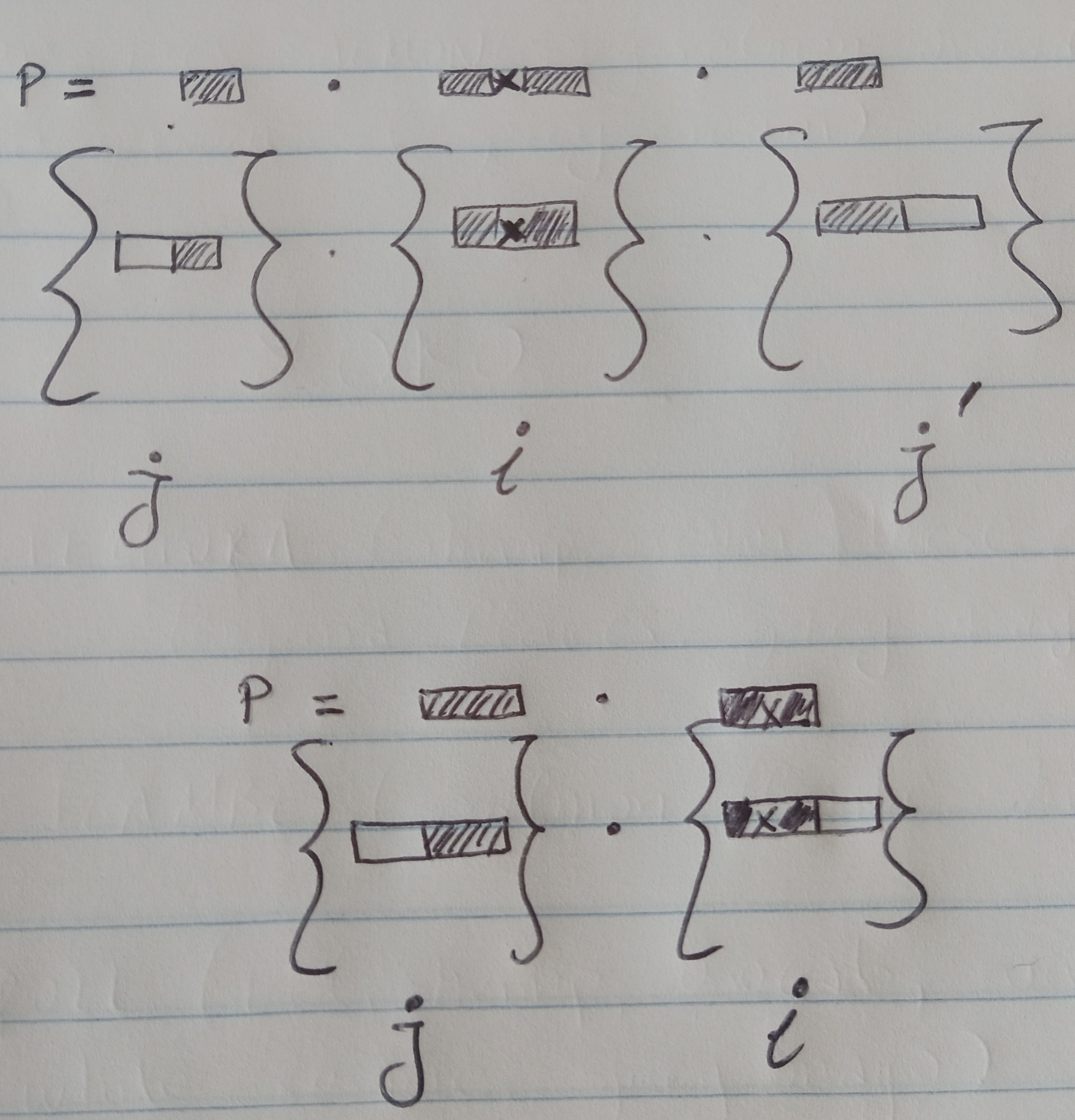}

\subsubsection{Solving case $j\neq i,i\neq j'$: ending position of occurrence is not $\tilde{T}[i]$}

\begin{lemma}[informal]
Let $L$ be an active prefix of set $i-1$ that is extended in $\tilde{T}[i]$.
Let $Q$ be an active suffix of set $i+1$ that is extended in $\tilde{T}[i]$.
We have $3$ cases for any occurrence of the case $j\neq i,i\neq j'$ above:
\begin{itemize}
    \item 1 mismatch: We have an occurrence if $|L|+|Q|+1=m$ \emph{and} $L$ and $Q$ are extended by the same string of the $i$th set \emph{and} they are one position apart. 
    \item 1 deletion in $P$: We have an occurrence if $|L|+|Q|=m-1$ \emph{and} $L$ and $Q$ are extended by the same string of the $i$th set \emph{and} they are zero positions apart. 
    \item 1 insertion in $P$: We have an occurrence if $|L|+|Q|=m+1$ \emph{and} $L$ and $Q$ are extended by the same string of the $i$th set \emph{and} they are one position apart. 
\end{itemize}
\end{lemma}

Let us show how this lemma can be implemented in $\cO((m^3+N_i)\log N_i)$ total time for all extended prefixes and suffixes, where $N_i$ is the size of the $i$th set. 
For convenience, we only present the Hamming distance (1 mismatch) case. The other cases are handled similarly.

By $\mathcal{P}_{i-1}$ we denote the set of active prefixes of set $i-1$.
By $\mathcal{Q}_{i+1}$ we denote the set of active suffixes of set $i+1$.
Recall that we have these precomputed.
Let $\lambda$ be the length of an element of $\mathcal{P}_{i-1}$.
Let $\rho$ be the length of an element of $\mathcal{Q}_{i+1}$.
By the lemma, we will exhaustively consider all pairs $(\lambda,\rho)$ such that $\lambda+\rho<m$.
We have $\cO(m^2)$ such pairs. We treat every such pair separately.
Consider $(\lambda,\rho)$ as one such pair to process.
Then consider length $\mu = m - (\lambda+\rho) > 0$.
We group all strings in $\tilde{T}[i]$ of length less than $m$ by their length.
We construct for every group $\mu$, the trie $T_\mu$ of the group strings, and the trie $T^R_\mu$ of the reversed group strings.
This takes $\cO(N_i)$ time.

We choose the group $G_\mu$ whose all strings are of length $\mu$.
We will consider all pairs $(i,j)$ whose sum is $\mu-1$.
This guarantees that $L$ and $Q$ in the lemma are such that \textbf{$|L|+|Q|+1=m$}.
Intuitively, the minus one is for the mismatch. The first pair is $(1,\mu-2)$, the second one is $(2,\mu-3)$, and so on.
This guarantees that $L$ and $Q$ in the lemma will \textbf{be one position apart}.
We have $\cO(\mu)=\cO(m)$ such pairs. We treat every such pair separately. Consider $(i,j)$ as one such pair.
We spell $P[\lambda\dd \lambda+i]$ in $T_\mu$. This implies an interval of strings from the $\mu$ group that share $P[\lambda\dd \lambda+i]$ as a prefix.
We spell $P^R[\rho\dd \rho+j]$ in $T^R_\mu$. This implies an interval of strings from the $\mu$ group that share $P^R[\rho\dd \rho+j]$ as a suffix.
All such spellings and all intervals can be collected in $\cO(\mu)=\cO(m)$ total time.
We need to check if these intervals have a nonempty intersection. If they do, then by the lemma we have obtained an occurrence with $1$ mismatch. To check this we can construct a 2d range reporting data structure over the leaf nodes of $T_\mu$ and $T^R_\mu$. 
A point $(x,y)\in[|G_\mu|]^2$ is added on the plane if and only if $x$ is a leaf node in $T_\mu$, $y$ is a leaf node in $T^R_\mu$, and $x$ and $y$ come from \textbf{the same string} of length $\mu$ in $\tilde{T}[i]$. This can be done in $\cO(|G_\mu|)$ time.
We then have to check if a rectangle implied by the above intervals is stubbed by a point or not. Because the computation is off-line and the points are sorted, we can do it in $\cO(\mu\log(|G_\mu|))$ time after an $\cO(|G_\mu|\log(|G_\mu|))$-time preprocessing that uses $\cO(|G_\mu|)$ space~\cite{DBLP:conf/compgeom/ChanLP11}. However, we would like to charge all intervals coming from the $\mu$ group to $\cO(|G_\mu|\log(|G_\mu|))$, for all $\lambda,\rho$ such that $\mu = m - (\lambda+\rho)$. So we group all such $\lambda,\rho$ via sorting in $\cO(m^2)$ time.
This gives a total of $\cO((m^3+N_i)\log N_i)$ time for processing $\tilde{T}[i]$ because $\sum_{ \mu}|G_\mu|=\cO(N_i)$.

We are not done yet. Recall that we must report the ending positions $i$ ($\tilde{T}[i]$) for each occurrence of $P$. The ending positions can be reported if we maintain them for every element of $\mathcal{Q}_{i+1}$ during preprocessing. Since every active suffix of $P$ (element of $\mathcal{Q}_{i+1}$) starts at a prefix of a string in some set $j\in [i+1,n]$, it takes $\cO(N)$ space to maintain all ending positions for every element of $\mathcal{Q}_{i}$, for all $i\in[1,n]$. 


\subsubsection{Solving case $j\neq i,i=j'$: ending position of the occurrence is $\tilde{T}[i]$}

\begin{lemma}[informal]
Let $L$ be an active prefix of set $i-1$ that is extended in $\tilde{T}[i]$.
Let $Q$ be a suffix of $P$ occurring in some string of $\tilde{T}[i]$.
We have $3$ cases for any occurrence of the BOTTOM case above:
\begin{itemize}
    \item 1 mismatch:  We have an occurrence if $|L|+|Q|+1=m$ \textbf{and} $L$ is extended by a prefix of the same string in which $Q$ occurs \textbf{and} they are one position apart. 
    \item 1 deletion in $P$: We have an occurrence if $|L|+|Q|=m-1$ \textbf{and} $L$ is extended by a prefix of the same string in which $Q$ occurs \textbf{and} they are zero positions apart. 
    \item 1 insertion in $P$: We have an occurrence if $|L|+|Q|=m+1$ \textbf{and} $L$ is extended by a prefix of the same string in which $Q$ occurs \textbf{and} they are one position apart. 
\end{itemize}
\end{lemma}

Let us show how this lemma can be implemented in $\cO((m^2+N_i)\log N_i)$ total time for all extended prefixes, where $N_i$ is the size of the $i$th set. For convenience, we only present the Hamming distance (1 mismatch) case. The other cases are handled similarly. We use the same technique as in the case $j\neq i,i\neq j'$ to solve this case, but this time, in $\cO((m^2+N_i)\log N_i)$ time. 

We group the prefixes of all strings in $\tilde{T}[i]$ per length $\mu\in[1,m)$. The total number of these prefixes is in $\cO(N_i)$.
We construct the compacted trie $T_\mu$ of the prefixes of length $\mu$ of the strings of $\tilde{T}[i]$, and the compacted trie $T^R_\mu$ of the reversed prefixes of length $\mu$ of $\tilde{T}[i]$.
This can be done in $\cO(N_i)$ total time for all compacted tries.
The difference from the case $j\neq i,i\neq j'$ is that now we 
only consider the set $\mathcal{P}_{i-1}$; namely, we do not consider
set $\mathcal{Q}_{i+1}$. Let $\lambda$ be the length of an element of $\mathcal{P}_{i-1}$. We treat every such element separately. We have $\cO(m)$ such elements. Consider $\lambda$ to be one such element to process. Let $\mu=m-\lambda>0$. We choose the group $G_\mu$ whose all strings are of length $\mu$. We will consider all pairs $(i,j)$ whose sum is $\mu-1$.
This guarantees that $L$ and $Q$ in the lemma are such that \textbf{$|L|+|Q|+1=m$}.
Intuitively, the minus one is for the mismatch.
The first pair is $(1,\mu-2)$, the second one is $(2,\mu-3)$, and so on.
This guarantees that $L$ and $Q$ in the lemma will \textbf{be one position apart}.
We have $\cO(\mu)=\cO(m)$ such pairs.
Consider $(i,j)$ as one such pair.
We spell $P[\lambda\dd \lambda+i]$ in $T_\mu$. This implies an interval of strings from $\tilde{T}[i]$ that share $P[\lambda\dd \lambda+i]$ as a prefix.
We spell $P^R[1\dd j]$ in $T^R_\mu$. This implies an interval of strings from $\tilde{T}[i]$ that share $P^R[1\dd j]$ as a suffix of a prefix.
We need to check if these intervals have a nonempty intersection. If they do, then by the lemma we have obtained an occurrence with $1$ mismatch in $\tilde{T}[i]$. 
To check this we can construct a 2d range reporting data structure over the leaf nodes of $T_\mu$ and $T^R_\mu$. 
A point $(x,y)\in[|G_\mu|]^2$ is added on the plane if and only if $x$ is a leaf node in $T_\mu$, $y$ is a leaf node in $T^R_\mu$, and $x$ and $y$ come from \textbf{the same prefix} of length $\mu$ of a string in $\tilde{T}[i]$. This can be done in $\cO(|G_\mu|)$ time.
Similar to above, this gives a total of $\cO((m^2+N_i)\log N_i)$ time for processing $\tilde{T}[i]$  because $\sum_{\mu}|G_\mu|=\cO(N_i)$.

\subsection{Wrapping-up}

To wrap-up, the preprocessing stage takes $\cO(nm^2+N)$ time, all Easy Cases take $\cO(N)$ total time, and all difficult cases take $\cO(nm^3\log N+N\log N)$ total time.

\subsection{Improving the bound}

Let us observe that we have at most $m$ fragments of $P$ which are of length $\mu$. Thus instead of constructing $T_\mu$ and $T^R_\mu$ over the length-$\mu$ strings of $\tilde{T}[i]$ and spelling fragments of $P$ of total length $\mu-1$, we do the opposite: we construct $T_\mu$ and $T^R_\mu$ over the length-$\mu$ fragments of $P$ and spell fragments of length-$\mu$ strings of $\tilde{T}[i]$. $T_\mu$ and $T^R_\mu$ can be constructed in $\cO(m)$ total time.
Since we have at most $m$ length-$\mu$ fragments, the time for constructing the 2d range reporting data structure is $\cO(m \log m)$ and the query time is $\cO(\log m)$. Since we have at most $m$
different 2d range reporting data structures, and we have at most $N_i$ rectangles, which we obtain in $\cO(N_i)$ total time, we obtain an algorithm that works in $\cO((nm^2 + N)\log m)$ time.

Now we observe that we can have $\cO(m^2)$ distinct \emph{valid} rectangles for every length $\mu$, and this regardless of $N_i$.
Thus we first collect all such $\cO(m^2)$ rectangles in $\cO(m^2 + N_i)$ time, and then answer all queries in $\cO(m^2\log m)$ total time. which gives an $\cO(nm^3\log m + N)$-time algorithm.

\begin{theorem}
Given a pattern $P$ of length $m$ and an ED text $T$ of length $n$ and size $N$,
EDSM with $1$ error can be solved in $\cO((nm^2 + N)\log m)$ time or in $\cO(nm^3\log m + N)$ time.
\end{theorem}